\DeclarePairedDelimiter\ceil{\lceil}{\rceil}
\DeclarePairedDelimiter\floor{\lfloor}{\rfloor}
\DeclarePairedDelimiter{\abs}{\lvert}{\rvert}
\newcommand{\B}{\ensuremath{\mathcal{B}}\xspace}
\newcommand{\cM}{\ensuremath{\mathcal{M}}\xspace}
\newcommand{\cI}{\ensuremath{\mathcal{I}}\xspace}
\newcommand{\tI}{\bm\tilde{I}}
\newcommand{\tA}{\bm\tilde{A}}
\newcommand{\tR}{\bm\tilde{R}}
\newcommand{\cB}{\ensuremath{\mathcal{B}}\xspace}
\definecolor{remove}{HTML}{DF5047}
\definecolor{add}{HTML}{84AE61}
\definecolor{lcol}{HTML}{33739D}
\definecolor{icol}{HTML}{D18A0B}
\definecolor{rcol}{HTML}{33739D}
\tikzset{
   element/.style={circle,draw=black,minimum size=6pt,line width=0.8pt},
   dirtybasis/.style={element,fill=gray},
}
\newcommand{\remove}{
\begin{tikzpicture}
   \draw[remove, line width=1pt] (-0.25,-0.25) rectangle ++(0.5,0.5);
   \draw[remove, line width=1pt] (-0.25,-0.25) -- ++(0.5,0.5);
   \draw[remove, line width=1pt] (-0.25,0.25) -- ++(0.5,-0.5);
\end{tikzpicture}
}
\newcommand{\add}{
   \begin{tikzpicture}
      \draw[add, line width=1pt] (-0.25,-0.25) rectangle ++(0.5,0.5);
      \draw[add, line width=1pt] (-0.1,0) -- ++(0.2,0);
      \draw[add, line width=1pt] (0.0,-0.1) -- ++(0,0.2);
   \end{tikzpicture}
}
\theoremstyle{plain}
\newtheorem{theorem}{Theorem}[section]
\newtheorem{lemma}[theorem]{Lemma}
\newtheorem{corollary}[theorem]{Corollary}
\theoremstyle{definition}
\newtheorem{definition}[theorem]{Definition}
\theoremstyle{remark}
\title{Accelerating Matroid Optimization through Fast Imprecise Oracles}
\author{
    Franziska Eberle\thanks{Institute of Mathematics, Technical University of Berlin, Germany. \texttt{f.eberle@tu-berlin.de}}\and
    Felix Hommelsheim\thanks{Faculty of Mathematics and Computer Science, University of Bremen, Germany. \texttt{\{fhommels,linderal,nmegow\}@uni-bremen.de}}\and
    Alexander Lindermayr\footnotemark[2]
    \and 
    Zhenwei Liu\thanks{Faculty of Mathematics and Computer Science, University of Bremen, Germany, and Zhejiang University, China. \texttt{zhenwei@uni-bremen.de}}\and
    Nicole Megow\footnotemark[2]\and
    Jens Schlöter\thanks{Centrum Wiskunde \& Informatica,
    The Netherlands. \texttt{jens.schloter@cwi.nl}}
}
\date{}
\begin{document}

\maketitle

\thispagestyle{empty}

\begin{abstract}
  Querying complex models for precise information (e.g.\ traffic models, database systems, large ML models) often entails intense computations and results in long response times.
  Thus, weaker models that give imprecise results quickly can be advantageous, provided inaccuracies can be resolved using few queries to a stronger model.
  In the fundamental problem of computing a maximum-weight basis of a matroid, a well-known generalization of many combinatorial optimization problems, algorithms have access to a \emph{clean} oracle to query matroid information. 
  We additionally equip algorithms with a fast but \emph{dirty} oracle. 
  We design and analyze practical algorithms that only use few clean queries w.r.t.\ the quality of the dirty oracle, while maintaining robustness against arbitrarily poor dirty oracles, approaching the performance of classic algorithms for the given problem.
  Notably, we prove that our algorithms are, in many respects, best-possible.
  Further, we outline extensions to other matroid oracle types, non-free dirty oracles and other matroid problems. 
\end{abstract}

\newpage

\section{Introduction}

We study the power of a \emph{two-oracle} model~\cite{BaiCoester23,Bateni2023,SilwalANMRK23,Zhang15active,BenomarC24} for fundamental \emph{matroid} optimization problems, which  generalize many problems in combinatorial optimization. 

The two-oracle model is an emerging technique for augmenting a problem where algorithms access information via \emph{oracles}. The idea is to abstract from subroutines, such as Neuronal Network (NN) inference or graph algorithms, which compute required information from underlying models.
Already today the size of such models can be arbitrarily large, and they are expected to grow further in the near future.
Thus, computing \emph{precise} results for each oracle query can be (too) expensive.
To mitigate these costs, we assume to have access to a second oracle that %
is \emph{less expensive} but gives \emph{possibly imprecise} answers.
Such an oracle can be an efficient heuristic or a smaller NN.
The goal is %
to leverage this fast \emph{dirty} oracle to obtain enough information in order to query the expensive \emph{clean} oracle as little as possible.
This model has been successfully applied to, e.g., clustering~\cite{SilwalANMRK23,Bateni2023}, sorting~\cite{BaiCoester23}, priority queues~\cite{BenomarC24}, or data labeling~\cite{Zhang15active}.

We study this model in the context of matroid optimization.
Matroids play a central and unifying role in combinatorial optimization as numerous classic problems can be  
framed as matroid basis problems, e.g., problems in resource allocation and network design.
A \emph{matroid} $\cM = (E,\cI)$ is a downward-closed set
system with a certain augmentation property. It is represented by a ground set $E$ of elements and a family $\cI \subseteq 2^E$ of \emph{independent} sets. A \emph{basis} is an inclusion-wise maximal independent set, and all bases of a matroid have the same size, which is the \emph{rank} of the matroid; we give formal definitions later.
A prominent matroid is the \emph{graphic} matroid in which, given an undirected graph, every subset of edges that induces an acyclic subgraph is independent.

Since $|\cI|$ 
can be exponential in $n := |E|$, algorithms operating on matroids are given access to oracles. 
The \emph{independence oracle} answers for a query $S \subseteq E$ whether $S \in \cI$. 
Given a weight $w_e \geq 0$ for every $e \in E$, a classic goal in matroid optimization is to find a basis of $\cM$ of maximum total weight.
In his seminal work, 
Edmonds~\cite{Edmonds71} showed that the \emph{greedy algorithm}, which greedily adds elements in order of non-increasing weight, solves this problem using an optimal number of $n$ oracle calls, and, vice versa, that matroids are the most general downward-closed set systems for which this algorithm is correct.
For graphic matroids, this greedy algorithm corresponds to the classic algorithm of Kruskal~\cite{CLRS2022} for computing a minimum-weight spanning tree in an edge-weighted graph, which is commonly taught in undergraduate algorithm classes.

To motivate the two-oracle model for matroid optimization, 
we continue with the special case of computing a minimum spanning tree in a large network.
This problem arises, e.g., to ensure connectivity in a telecommunication network or as a subroutine to approximate a cheap tour between certain points of interest in a road network. Here, the elements $E$ correspond to the edges of the network. The clean oracle for this graphic matroid has to decide whether a set of edges is cycle-free. While this is doable in time linear in the query size, it can be prohibitively expensive for huge queries. However, there are several ways to design a dirty oracle of reasonable quality:
\begin{itemize}
   \item Networks, especially telecommunication networks, often evolve over time. In this case, the dirty oracle could quickly return cached results of the previous (now outdated) network.
   \item If the network is clustered into many highly connected components that are only loosely connected to each other, such as road networks are clustered around big cities, the dirty oracle can check the query for each (city) component individually. Since there may only be few highways between the cities, this can be quite close to the clean answer.
   \item The dirty oracle could operate in a sparsified subnetwork, e.g., by restricting it to only major roads or data links. 
\end{itemize}

We initiate the study of the following two-oracle model for general matroid problems: 
In addition to the clean oracle of a given matroid $\cM = (E, \cI)$, an algorithm has access to a fast dirty
oracle. For simplicity, we assume that this oracle belongs to a matroid $\cM_d = (E,\cI_d)$ and answers for a query $S$ whether $S \in \cI_d$.
In fact, the former assumption is not necessary, and we will adapt our results
to weaker dirty-oracle variants such as arbitrary downward-closed set systems.
We emphasize that the algorithm has no knowledge about the relationship between the dirty oracle and the true matroid~$\cM$.
Finally, we assume for now that calling the dirty oracle is free and discuss mitigations later. Since calling the dirty oracle is free, our goal is to minimize the number of clean-oracle calls required to solve the underlying matroid problem.

Our work neatly %
aligns with the celebrated 
framework of \emph{learning-augmented algorithms}~\cite{MitzenmacherV22}, which is receiving tremendous attention~\cite{alps}. %
Here, %
an algorithm has access to a problem-specific \emph{prediction} %
and yields a good performance when the prediction is accurate (\emph{consistency}) and does not perform too bad when given arbitrary predictions (\emph{robustness}). 
Ideally, the performance degrades gracefully with the prediction's quality, measured by an \emph{error} function (\emph{smoothness}). 
The type of performance thereby depends on the problem, and could be, e.g., a quality guarantee of the computed solution of an optimization problem, or the running time of an algorithm. The latter 
is intricately related %
to the goal of the two-oracle model: %
the overall running time highly depends on the number of clean-oracle calls. From this perspective, the {\em consistency} of an algorithm in the two-oracle model is the worst-case number of clean-oracle calls it uses when the dirty oracle is perfect, i.e., $\cM = \cM_d$, while the {\em robustness} is the worst-case number of clean oracles independent of the quality of the dirty oracle.

More generally,  we can interpret algorithms for one model also in the other model and vice versa: 
On the one hand, our two-oracle model fits into the learning-augmented algorithm framework by considering an optimal solution w.r.t.\ the dirty oracle as a possibly imprecisely predicted solution. 
On the other hand, a predicted solution $B_d \subseteq E$ can be used to construct the dirty oracle $\cM_d = (E,2^{B_d})$. 
Thus, our main results are also applicable in the learning-augmented setting.

\subsection{Our results}

In this paper, we design optimal algorithms in the two-oracle model for the problem of finding a maximum-weight basis (defined later) of a matroid. %

\begin{enumerate}[1.]
   \item \textbf{Two-oracle algorithms:} For any integer $k \geq 1$, there is an algorithm which computes a maximum-weight basis of a matroid $\cM$ using at most
   \[
   \min \bigg\{ n-r+k+ \eta_A \cdot (k+1)+\eta_R \cdot (k+1)\ceil{\log_2 r_d}, \bigg(1+\frac{1}{k}\bigg)n \bigg\}
   \] 
   many oracle calls to $\cM$ (see \Cref{thm:weighted-ub}). %
\end{enumerate}

Here, $r$ is the rank of the matroid, i.e., the size of any basis, and $r_d$ the rank of $\cM_d$. 
In terms of learning-augmented algorithms, our algorithm has a consistency %
of at most $n-r+k$ and a robustness of at most $(1+\frac{1}{k})n$.
Moreover, the bound of our algorithm smoothly degrades w.r.t.\ the quality of the dirty oracle, measured by our error measures~$\eta_A$ and~$\eta_R$ (the \emph{prediction errors}). Intuitively,~$\eta_A$ and~$\eta_R$ denote the number of elements that have to be added to and removed from a maximum-weight basis of~$\cM_d$, respectively, to reach a maximum-weight clean basis; we give a formal definition later. 
Our algorithm has a tuning parameter~$k$ to regulate the level of robustness against a dirty oracle of bad quality; a beneficial feature when the quality of the dirty oracle can be roughly estimated. In \Cref{sec:unweighted}, we present slightly improved bounds for the case of unit weights.  %

Observe that, given a dirty oracle of reasonable quality, our algorithm significantly improves upon the greedy algorithm whenever the rank $r$ is not too small (due to the dependence on $n-r$), which is always the case for, e.g., graphic matroids in sparse networks like road or telecommunication networks.
We further show that this dependence is best possible for
any deterministic algorithm for \emph{any} rank $r$. Moreover, our algorithm is optimal w.r.t.\ to the dependence on $\eta_A$ and $\eta_R$:

\begin{enumerate}[1.]
   \item[2.] \textbf{Tight lower bounds:} For every rank $r$, every deterministic algorithm requires at least $n-r+\eta_A$ and at least $n-r + \eta_R \ceil{\log_2 r_d}$ clean-oracle calls to compute a basis of a matroid~(\Cref{sec:lb}).
\end{enumerate}

Yet we present an algorithm which bypasses the dependence on $n-r$ by leveraging the slightly more powerful clean rank oracle (see \Cref{sec:extensions:ranks} for the definition). Note that any algorithm requires at least $n$ clean rank oracle calls in the traditional setting for this problem in the worst case.

\begin{enumerate}[1.]
   \item[3.] \textbf{Rank oracles:} There is an algorithm which computes a basis of a matroid $\cM$ using at most $2 + \eta_A  \ceil{\log_2(n- r_d)} + \eta_R  \ceil{\log_2 r_d}$ and at most $n+1$ calls to the rank oracle of $\cM$~(\Cref{sec:extensions:ranks}).
\end{enumerate}

Finally, we initiate the study of 
extensions, with which we hope to foster future research.
\begin{enumerate}[1.]
   \item[4.] \textbf{Costly oracles:} In this model, every dirty-oracle call incurs cost 1, and every clean-oracle call costs 
   $p > 1$. We are interested in minimizing 
   the total cost. We illustrate that this model requires new algorithmic techniques %
   compared to our main setting (see \Cref{sec:extensions:cost}).
   \item[5.] \textbf{Matroid intersection:} We give two different approaches on how our techniques can be incorporated in %
   textbook algorithms for reducing the number of clean-oracle calls in the fundamental matroid intersection problem using dirty oracles (see \Cref{sec:extensions:matroid-intersection}).
\end{enumerate}

All omitted proofs are deferred to the appendix.

\subsection{Further related work}

\noindent\textbf{Noisy oracles, two-oracle model, 
imprecise predictions.\ } Optimization in the presence of %
imprecise oracles is a fundamental problem and has been studied extensively, also for submodular optimization~\cite{Ito19,HazanK12a,HassidimS17,Huang0YZ22,SingerH18}, which is connected to matroid optimization via the submodularity of rank functions. The majority of previous work %
assumes access only to a \emph{single noisy} %
oracle, where the noise usually is of stochastic nature. 
Only recently, a  two-oracle model as in our work has been studied %
from a theoretical perspective. %
Bai and Coester~\cite{BaiCoester23} consider sorting with a clean and a dirty comparison operator. They minimize the number of clean-comparison-operator calls and give guarantees that smoothly degrade with the number of wrong dirty-comparison answers. Similar strong and weak oracles have been considered also by Bateni et al.~\cite{Bateni2023} for the Minimum Spanning Tree problem and clustering, and by Benomar and Coester~\cite{BenomarC24} in the context of priority queues. In contrast to our model, they consider oracles for accessing the distance between two points and not for deciding cycle freeness (graphic matroid). 
Besides these explicit results on two-oracle models, explorable uncertainty with predictions~\cite{ErlebachLMS22,ErlebachLMS23} can also be interpreted as such a model.

\noindent\textbf{Two- and multistage problems.\ } All algorithms presented in this paper also solve the following two-stage problem: Given a maximum-weight basis $B_d$ for a first-stage matroid $\cM_d$ (dirty matroid), compute a maximum-weight basis $B$ for a second stage matroid $\cM$ (clean matroid) with minimum $|B_d \triangle B|$, where $B_d \triangle B$ denotes the symmetric difference of $B_d$ and $B$, and minimize the number of oracle calls to $\cM$. Two- (or multi-) stage problems of this type have been studied extensively, mostly for graph problems~\cite{SchieberSTT18,ChimaniTW23, BampisELP18, BampisEST21, ChimaniTW22, FluschnikNSZ20, FotakisKZ21, EisenstatMS14} but also for matroids~\cite{CohenCDL16,GuptaTW14,BuchbinderCNS16,BuchbinderCN14}.
Most of these works consider a combined objective %
optimizing the quality of the second stage solution \emph{and} the distance between the first- and second-stage solutions. In contrast, %
we insist on an optimal second-stage solution and minimize the number of clean-oracle calls.
Furthermore, to our knowledge, all previous work on matroid problems in these models assumes that the matroid stays the same for all stages but the weights of the elements change, whereas we assume the opposite. Blikstad et al.~\cite{BlikstadMNT23} consider the somewhat similar problem of dynamically maintaining a basis of a matroid, but in a different oracle model. %

\subsection{Preliminaries}\label{sec:pre}

\paragraph{Matroids.}
A \emph{matroid} $\cM$ is a tuple $(E,\cI)$ consisting of a finite ground set $E$ of $n$ elements and a %
family of \emph{independent sets} $\cI \subseteq 2^E$ with $\emptyset \in \cI$ that satisfy the following properties: (i) $\cI$ is downward-closed, i.e., $A \in \cI$ implies $B \in \cI$ for all $B \subseteq A$ %
and (ii) if $A,B \in \cI$ with $|A| > |B|$, then there exists $a \in A \setminus B$ s.t.~$B + a \in \cI$. %
(We write $X + e$ when we add $e \in E \setminus X$ to $X \subseteq E$ and $X - e$ when we remove $e \in X$ from $X \subseteq E$.)
An important notion %
are \emph{bases}, which are the (inclusion-wise) maximal elements of $\cI$; for a fixed matroid, we denote the set of bases by $\cB$.
For a dirty matroid $\cM_d$, we refer to the set of bases by $\cB_d$.
A \emph{circuit} is a minimal dependent set of elements.
The main results of this paper consider the problem of finding a maximum-weight basis, i.e., given matroid $\cM=(E,\cI)$ and weights $w_e$ for all $e \in E$, the goal is to find a basis $B \in \cB$ maximizing $\sum_{e \in B} w_e$.
The \emph{greedy algorithm} solves this problem by iteratively adding elements in non-increasing order of their weight to the solution if possible, i.e., if the solution stays independent in $\cM$~\cite{Edmonds71}.
Given a weighted ground set, we always assume that the elements of $E = \{e_1,\ldots,e_n\}$ are indexed according to the weight order, i.e., $i \le j$ implies $w_{e_i} \ge w_{e_j}$, with ties broken arbitrarily. 
Given that, for any $i$ and $S \subseteq E$ we define $S_{\leq i} = \{e_j \in S \mid j \leq i\}$ ($S_{\geq i}$ analogously).
\noindent\textbf{Requirements on algorithms and optimal solutions.\ }
For all considered problems, %
we require algorithms to execute clean queries (oracle calls) until the answers to these queries reveal sufficient information to solve the given problem, e.g., find a maximum-weight basis for the clean matroid. More precisely, the queries executed by the algorithm together with the answers must be a \emph{certificate} that a third party with only access to the clean matroid and without any additional knowledge can use in order to find a provable solution. In particular,  %
an optimal algorithm that knows the answers to all clean queries upfront has to execute queries in order to satisfy the certificate requirement.

We refer to the \emph{robustness} of an algorithm, when bounding the maximum number of clean-oracle calls the algorithm needs for any input instance, independently of the quality of the dirty oracle.

\noindent\textbf{Definition of our error measure.\ }
We define an error measure that quantifies the quality of the dirty oracle w.r.t.\ the clean oracle.
We define the error measure for the case that the dirty oracle is a matroid and describe in the next section how this extends to arbitrary downward-closed set systems.
Let $\cB^*$ be the set of maximum-weight bases of $\cM$ and $\cB^*_d$ be the set of maximum-weight bases of $\cM_d$. (In the unweighted case, $\cB^* = \cB$ and $\cB_d^* = \cB_d$.)
We first define for every $S \in \cB^*_d$ the sets $A(S), R(S)$ as any cardinality-wise smallest set  $A \subseteq E \setminus S$ and $R \subseteq S$, respectively, such that $S \cup A \supseteq B$ and $S \setminus R \subseteq B'$ for some $B, B' \in \cB^*$.

These sets describe the smallest number of additions/removals necessary to transform $S$ into a superset/subset of some maximum-weight basis of $\cM$.
We call $\abs{A(S)} + \abs{R(S)}$ the \emph{modification distance} from $S$ to $\cB^*$.
Our final error measure is defined as %
the largest modification distance of any maximum-weight basis of $\cM_d$, 
 that is,  $\eta_A = \max_{S \in \cB_d^*} |A(S)|$ and $\eta_R = \max_{S \in \cB_d^*} |R(S)|$. 

Assume both oracles are matroids. It follows from standard matroid properties that, for any dirty basis $B_d$, there are modification sets $A, R$ with $|A|=|A(B_d)|, |R|=|R(B_d)|$ and $B_d\setminus R \cup A \in \cB$. Hence $r = r_d + \eta_A-\eta_R$. Also, for all $S_1, S_2 \in \cB_d$, $\abs{A(S_1)} \leq \abs{A(S_2)}$ if and only if $\abs{R(S_1)} \leq \abs{R(S_2)}$. %
\subsection{Discussion of the model}\label{sec:model-discussion}

\paragraph{Computing a dirty basis upfront.}
For all our results on computing a (maximum-weight) basis, it %
suffices to first compute a (maximum-weight) dirty basis and afterwards switch to exclusively using the clean oracle. A similar observation can be made for the results of Bai and Coester~\cite{BaiCoester23} on sorting, where it would be possible to initially compute a tournament graph for the dirty comparator and afterwards switch to only using the clean comparator without increasing the number of clean-comparator calls. In~\Cref{sec:extensions}, we observe that separating the usage of the dirty and clean oracles in this way does not work anymore if dirty-oracle calls also incur costs.

\noindent\textbf{Counting wrong answers as error measure.\ }
Bai and Coester~\cite{BaiCoester23} use the number of wrong dirty-comparator answers as error measure. While this is a meaningful error measure for sorting, a similar error measure for our problem does not seem to accurately capture the quality of the dirty oracle. Consider an instance with unit weights, where we have $\cB_d \subset \cB$. This can lead to an exponential number of wrong dirty oracle answers, but the dirty oracle still allows us to compute a clean basis. For this reason, we use the modification distance as error measure instead.

\noindent\textbf{Relaxing requirements on the dirty oracle.\ }
We illustrate how to extend the error definition of~\Cref{sec:pre} to arbitrary downward-closed set systems in the unweighted case: 
For every \emph{inclusion-wise maximal set} $S$ of the dirty oracle we compute $A(S)$ and $R(S)$ as before.
The addition and removal error are then defined analogously by replacing $\cB_d^*$ with the set of all inclusion-wise maximal independent sets (instead of all maximum sets in the unweighted case). 
Then, all our results in this paper also carry over to this more general setting.
Using an error definition with respect to some greedy algorithm on weighted downward-closed set systems, one can also obtain the same result for arbitrary \emph{weighted} downward-closed set systems.
However, for clarity, 
we only prove our statements for the case that the dirty oracle is a matroid.

\subsection{Organization of the paper}

We begin in \Cref{sec:unweighted} with a gentle introduction to our techniques and algorithms by
studying the simpler problem of computing any basis of a matroid.
Then, we extend these ideas in \Cref{sec:weighted} and present our main algorithmic results.
Finally, in \Cref{sec:extensions}, we demonstrate how to generalize and adapt our approach to other settings and problems.
Longer proofs for these sections are deferred to the appendix. In \Cref{app:lb}, we also included a detailed section on our lower bounds.

\section{Warm-up: computing an unweighted basis}\label{sec:unweighted}

The goal of this section is to give a gentle introduction to our algorithmic methods, which we extend in the next sections. 
Consider the basic problem of finding any basis of a matroid. 
Note that without the dirty oracle, we need exactly $n$ clean-oracle calls to find and verify any basis in the worst-case. 

In the following, we assume that we are given an arbitrary basis $B_d$ of the dirty matroid $\cM_d$, which can be computed without any call to the clean oracle. 
We first analyze the so-called \emph{simple algorithm}: If $B_d \in \cI$, we set $B$ to $B_d$. Otherwise, we set $B$ to the empty set.
Next, for each element $e \in E \setminus B$, we add it to $B$ if $B + e \in \cI$ and output $B$ at the end.

The idea of the simple algorithm is to only use the dirty basis $B_d$ if it is independent in the clean matroid, as we then can easily augment it to a clean basis. Otherwise, we abandon it and effectively run the classic greedy algorithm. Formalizing this sketch proves the following lemma.

\begin{restatable}{lemma}{simple}\label{thm:simple}
	The simple algorithm computes a clean basis using at most $n+1$ clean-oracle calls. Further, if $B_d \in \cB$, it %
	terminates using at most $n-r+1$ clean-oracle calls.
\end{restatable}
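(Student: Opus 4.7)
My plan is to separate the analysis into a straightforward oracle-call count and a correctness argument that the output is indeed a basis of $\cM$.

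First, I would handle the query count. Exactly one clean query is spent on the initial test whether $B_d \in \cI$. After that, let $B_{\text{init}}$ denote the value of $B$ at the start of the loop, namely $B_d$ if the test passed and $\emptyset$ otherwise. The loop then issues exactly one clean query per element of $E \setminus B_{\text{init}}$, for a total of $1 + |E \setminus B_{\text{init}}| \le n + 1$ clean queries, which establishes the worst-case bound. In the case $B_d \in \cB$, the initial test passes (since $\cB \subseteq \cI$) and $|E \setminus B_d| = n - r$, giving the sharper bound $n - r + 1$.

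Second, for correctness I would show that $B$ is independent throughout the algorithm and inclusion-wise maximal at termination. Independence is maintained as a loop invariant: it holds initially by the case analysis, and each addition in the loop is explicitly guarded by a clean query. For maximality, let $B_{\text{final}}$ be the final value of $B$ and fix any $e \in E \setminus B_{\text{final}}$. Since $B_{\text{final}} \supseteq B_{\text{init}}$, we have $e \in E \setminus B_{\text{init}}$, so $e$ was tested at some iteration and found to satisfy $B' + e \notin \cI$ for the then-current $B' \subseteq B_{\text{final}}$. By downward closure of $\cI$, also $B_{\text{final}} + e \notin \cI$, so $B_{\text{final}}$ is a maximal independent set of $\cM$, that is, a basis.

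The main step requiring care is precisely the downward-closure argument above: it is what justifies testing each element only once even though $B$ changes during the loop, and without it the greedy-style augmentation from a nontrivial starting set $B_d$ could in principle miss a valid extension. Everything else is routine bookkeeping.
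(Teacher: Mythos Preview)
Your proposal is correct and follows essentially the same approach as the paper. The paper's proof is terser---it simply states that ``correctness follows easily from matroid properties'' and then does the query count---whereas you spell out the independence invariant and the downward-closure argument for maximality, but the underlying structure is identical.
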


\begin{proof}
   The correctness follows easily from matroid properties.
   If the dirty oracle is perfectly correct, i.e.\ $B_d \in \cB$, then we only need to use one clean-oracle call to check whether $B_d \in \cI$ and $n-r$ calls for checking the maximality. %
   If $B_d \notin \cI$, we start from the empty set and check whether to add each of the $n$ elements. Thus, the algorithm uses at most $n+1$ clean-oracle calls in any case. %
   \end{proof}

Surprisingly, in \Cref{sec:lb}, we will see that this simple algorithm achieves a best-possible trade-off between optimizing the cases $B_d \in \cB$ and $B_d \notin \cB$ at the same time.

\subsection{An error-dependent algorithm}\label{sec:unweighted-error-dep}

The simple algorithm discards the dirty basis $B_d$ if it is not independent in $\cM$. 
This approach may be wasteful, especially if removing just one element from $B_d$, such as in the case of a circuit, leads to independence.
This seems particularly drastic if the clean and dirty oracle are relatively ``close'', i.e., the error measured by the  modification distance is small. 
This suggests a refinement %
with a more careful treatment of the dirty basis $B_d$. 
We propose a binary search strategy to remove elements from $B_d$ until it becomes independent. 
A key feature is that this allows for an error-dependent performance guarantee %
bounding the number of clean-oracle calls by $\eta_A$ and $\eta_R$, the smallest numbers of elements to be added and removed to turn a dirty basis into a basis of the clean~matroid.

We define the  \emph{error-dependent algorithm}:
First, set $B$ to $B_d$ and fix an order of the elements in~$B$. 
Repeatedly use binary search to find the smallest index $i$ %
s.t.\ $B_{\leq i} \notin \cI$ and remove $e_i$ from $B$ until $B \in \cI$.
Then %
add each element $e \in E \setminus B_d$ to $B$ if $B + e \in \cI$ and output the final set $B$.

\begin{lemma}\label{thm:unweighted-matroid-errordependent}
   The error-dependent algorithm computes a clean basis using at most $n-r+1+\eta_A+\eta_R \cdot \lceil \log_2 r_d \rceil$ clean-oracle calls.
\end{lemma}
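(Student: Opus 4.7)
The plan is to split the analysis into a correctness argument, a structural bound $|R|=\eta_R$ on the number of elements removed in Phase~1, and a final bookkeeping of clean-oracle calls. Throughout, I will treat $B$ as the running set and let $e_{i_1},e_{i_2},\dots$ denote the elements removed during the binary-search phase, in the order they are removed.

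First, for correctness, the crucial structural observation I would establish is that the sequence of removed indices is strictly increasing. Indeed, once $e_{i_k}$ is removed, the new $B_{\le i_k}$ coincides with the previous $B_{\le i_k-1}$, which is independent by the minimality of~$i_k$; so any future violating index must exceed~$i_k$. Consequently $B_{\le i_k-1}$ (at the moment of the $k$-th removal) remains a subset of $B$ for the rest of Phase~1 and all of Phase~2, because Phase~2 only adds elements. Since at the time of removal $B_{\le i_k-1}+e_{i_k}\notin\cI$, we get $e_{i_k}\in \mathrm{span}_{\cM}(B_{\text{final}})$. Combined with the fact that every $e\in E\setminus B_d$ rejected in Phase~2 also lies in $\mathrm{span}_{\cM}(B_{\text{final}})$, this shows $B_{\text{final}}$ is a maximal independent set, i.e.\ a basis.

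Next, I would bound the size of $R=\{e_{i_1},e_{i_2},\dots\}$. The final $B\subseteq B_d$ is independent, and by the span argument above every element of $B_d\setminus B$ lies in $\mathrm{span}_{\cM}(B)$. Hence $|B|=\mathrm{rk}_{\cM}(B_d)$, and so $|R|=r_d-\mathrm{rk}_{\cM}(B_d)$, which is exactly $\eta_R$ by the unweighted definition of the removal error (any smaller removal set would leave a circuit of $\cM$ inside $B_d$). This is the step I expect to be the main conceptual obstacle: one needs to check that the greedy binary-search strategy, which makes a locally chosen removal, is in fact globally optimal. The clean rank argument above is what makes it work.

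Finally, I would count clean-oracle calls. Organize Phase~1 as a loop that, in each iteration, first tests whether $B\in\cI$ (one query); if yes we stop, otherwise we perform a binary search on the at most $r_d$ positions of $B$ to locate the smallest violating~$i$, which costs at most $\lceil\log_2 r_d\rceil$ queries. Using $|R|=\eta_R$, the loop executes $\eta_R+1$ times with $\eta_R$ of these also running the binary search, for a total of $(\eta_R+1)+\eta_R\lceil\log_2 r_d\rceil$ queries. Phase~2 spends exactly one query per element of $E\setminus B_d$, i.e., $n-r_d$ queries. Using the identity $r=r_d+\eta_A-\eta_R$ from the preliminaries, $n-r_d=n-r+\eta_A-\eta_R$, so the two phases together contribute
\[
n-r+1+\eta_A+\eta_R\lceil\log_2 r_d\rceil,
\]
matching the claim. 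The remaining routine step is simply checking that the binary search indeed terminates in $\lceil\log_2 r_d\rceil$ queries, using the monotonicity of the predicate $B_{\le i}\notin\cI$ under downward closure.
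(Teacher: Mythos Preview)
Your proposal is correct and follows essentially the same route as the paper: show the removal phase produces a maximal independent subset of $B_d$, bound the number of removals by $|R(B_d)|\le\eta_R$, charge each removal one feasibility check plus one binary search of cost $\lceil\log_2 r_d\rceil$, add $n-r_d$ for Phase~2, and simplify via $r=r_d+\eta_A-\eta_R$. One small slip: $|R|=r_d-\mathrm{rk}_{\cM}(B_d)=|R(B_d)|$ is only $\le\eta_R$, not ``exactly $\eta_R$'' (the error is a maximum over \emph{all} dirty bases, and $B_d$ need not attain it), but since you only need an upper bound this does not affect the argument.
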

\begin{proof}
   The algorithm simulates finding a maximal independent subset of $B_d$ w.r.t.\ the clean matroid, and augments the resulting set to a clean basis. 
   Hence, the correctness follows from matroid properties. 
		We remove $\abs{R(B_d)} \leq \eta_R$ elements from $B_d$. %
	Hence, the removal loop is executed $\abs{R(B_d)}$ times.
	In each iteration, we use at most $1 + \lceil \log_2 r_d \rceil$ clean %
	queries (one for checking $B \in \cI$ and at most $\lceil \log_2 r_d \rceil$ for the binary search). 
	Thus, the removing process uses at most $\abs{R(B_d)}(1+\lceil \log_2 r_d \rceil)$ clean queries. %
	Augmenting $B$ uses $n-r_d$ clean queries. %
	Combined, our algorithm uses at most 
\(
   \abs{R(B_d)} (1+\lceil \log_2 r_d \rceil)+n-r_d+1
\) oracle calls. 
We conclude %
using $\abs{R(B_d)} \leq \eta_R$ and $r=r_d-\eta_R+\eta_A$. 
\end{proof}

\subsection{An error-dependent and robust algorithm}\label{sec:unweighted:robustness}

The error-dependent algorithm has a good bound on the number of clean-oracle calls (less than $n$)
when $\eta_A$ and $\eta_R$ are small. 
However, in terms of \emph{robustness}---i.e., the maximum number of oracle calls for any instance, regardless of the dirty-oracle quality---this algorithm performs asymptotically worse than the gold standard $n$, achieved by the classic greedy algorithm. This is the case %
when the dirty basis is equal to $E$, but the clean matroid is empty: the error-dependent algorithm executes $n$ binary searches over narrowing intervals, using $\log_2 (n!) \in \Theta(n\log n)$ clean-oracle calls.
By looking closer at the error-dependent algorithm, the special structure of this example can be explained because queries charged to $n-r+\eta_A$ are essentially also done by the greedy algorithm. Hence, they are in a sense already robust. %
Motivated by the greedy algorithm, another extreme variant of the removal process would be to go linearly through $B_d$ and greedily remove elements. This gives an optimal robustness, but is clearly bad if $\eta_R$ is small.

For our main result in the unweighted setting, we combine both extremes into a robustification framework and achieve a trade-off using the following key observation. If we have to remove many elements ($\eta_R$ is large), some elements must be close to each other. In particular, if the next removal is close to the last one (in terms of the fixed total order), 
a linear search costs less than a binary search. 
Based on this, after a removal, we first check the next $\Theta(\log(r_d))$ elements of $B_d$ linearly for other removals before executing a binary search. 
This bounds the number of binary searches by $\Theta(\frac{r_d}{\log(r_d)})$, each incurring a cost of $\lceil \log_2(r_d) \rceil$. However, the linear search also incurs some cost. %
Thus, we further parameterize this idea (see \Cref{alg:unweighted-tradeoff}), and obtain the following main result. %

\begin{algorithm}[tb]
	\caption{Find a basis (robustified)}
	\label{alg:unweighted-tradeoff}
	\DontPrintSemicolon
	\KwIn{dirty basis $B_d$, matroid $\cM = (E, \cI)$, integer $k \geq 1$ }
	$B \leftarrow \emptyset$ and $ S \leftarrow B_d$ \;
	Fix an order for $S = \{e_1, \ldots, e_{r_d}\}$ \;
	\While {$S \neq \emptyset $} {
		Re-index the elements in $S$ from $1$ to $|S|$, keeping the relative order \;
		Find the smallest index $i \in \{1, \ldots, \min \{|S|, k-1\} \}$ with $B \cup S_{\leq i} \notin \cI$ via \emph{linear search} \;
		\If {such an $i$ exists}
		{
			$B \leftarrow B \cup S_{\leq i-1} $ and $S \leftarrow S \setminus S_{\leq i}$ and go to next iteration of Line~3. \;
		}
		\If {$|S| \leq k-1$ or $B \cup S \in \cI$} {
			$B \leftarrow B \cup S$ and $S \leftarrow \emptyset$ and go to next iteration of Line~3 \;
		}
		Find the smallest index $i \in \{ k, \ldots, \min\{k\lceil \log_2 r_d \rceil, |S|\} \}$ with $B \cup S_{\leq i} \notin \cI$ via \emph{linear search} \;
		\If {there is no such index $i$} {
			Find the smallest index $i \in \{ k \lceil \log_2 r_d \rceil+1, \ldots, |S| \}$ with $B \cup S_{\leq i} \notin \cI$ via \emph{binary search} (guaranteed to exist) \;
		}
		$B \leftarrow B \cup S_{\leq i-1} $ and $S \leftarrow S \setminus S_{\leq i}$ \;
	} 
	\For {$e \in E \setminus B_d$} {
		\If {$ B+e \in \cI$} {
			$B \leftarrow B + e$ \;
		}
	}
	\KwRet{$B$ } \;
 
 \end{algorithm}

This algorithm is essentially the same as the \emph{error-dependent algorithm} from \Cref{sec:unweighted-error-dep} but uses a robust way of finding removals. 
The parameter $k$ measures the trade-off between linear searches and binary searches. Note that each linear search has two parts (Line~5 and Line~10). Between the two parts, we insert a check $B \cup S \in \cI$ (Line~8), which is necessary to prevent using too many queries when there are no further removals necessary. A small detail here is that we only have to do this check if there actually are elements to check in the second part.

\begin{restatable}{theorem}{unweightedRobust}\label{thm:unweighted-ub}
   For every $k \in \mathbb{N}_+$, there is an algorithm that, given a dirty matroid $\cM_d$ of rank $r_d$ with unknown 
   $\eta_A$ and $\eta_R$, computes a basis of a matroid $\cM$ of rank $r$ with at most $\min\{ n-r+k+\eta_A+ \eta_R \cdot (k+1)\lceil \log_2 r_d \rceil,(1+\frac{1}{k}) n\}$ oracle calls to $\cM$.
\end{restatable}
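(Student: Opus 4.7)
The plan is to prove the two bounds of the $\min$ separately and then take their minimum. Correctness follows from essentially the same argument as in Lemma~\ref{thm:unweighted-matroid-errordependent}: the while-loop iteratively identifies a bad prefix of the (fixed order on the) dirty basis, moves the safe part into $B$, and discards exactly one bad element per successful iteration; the resulting set is a maximal independent subset of $B_d$ in $\cM$, which the for-loop then augments to a clean basis via matroid augmentation.

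For the error-dependent bound I would count clean queries per while-iteration. In the worst case, an iteration that discards one element performs an unsuccessful Line~5 scan ($\leq k-1$ queries), the Line~8 check ($1$ query), an unsuccessful Line~10 scan ($\leq k\lceil \log_2 r_d\rceil - k + 1$ queries), and a Line~11 binary search ($\leq \lceil \log_2 r_d\rceil$ queries), summing to at most $(k+1)\lceil \log_2 r_d\rceil + 1$ clean queries. Only $|R(B_d)| \leq \eta_R$ elements are ever discarded, so there are at most $\eta_R$ such iterations plus at most one terminating iteration, which uses at most $k$ queries. Adding the $n - r_d$ queries of the augmentation for-loop and substituting $r_d = r - \eta_A + \eta_R$ yields exactly $n - r + k + \eta_A + \eta_R(k+1)\lceil \log_2 r_d\rceil$.

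For the robustness bound I would use an amortized argument whose central claim is that each while-iteration uses at most $1 + \tfrac{1}{k}$ clean queries per element of $B_d$ it \emph{processes} (i.e., moves to $B$ or discards). I would verify this by cases. If Line~5 succeeds at index $i$, the iteration uses $i$ queries for $i$ processed elements (ratio $1$). If Line~10 succeeds at index $i \geq k$, the iteration uses $i + 1$ queries for $i$ processed elements (ratio $(i+1)/i \leq 1 + \tfrac{1}{k}$). If Line~11 succeeds at index $i \geq k\lceil \log_2 r_d\rceil + 1$, the iteration uses $(k+1)\lceil \log_2 r_d\rceil + 1$ queries for $i$ processed elements; the short algebraic check $((k+1)L+1)\cdot k \leq (k+1)\cdot(kL+1)$ with $L = \lceil \log_2 r_d\rceil$ confirms that the ratio stays within $1 + \tfrac{1}{k}$. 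The Line~8 terminating iteration is handled analogously. Summing over all while-iterations gives at most $(1 + \tfrac{1}{k})\,r_d$ queries, and adding the at most $n - r_d$ augmentation queries yields $n + r_d/k \leq (1 + \tfrac{1}{k})\,n$.

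The main obstacle is precisely the amortization in the binary-search case: a single iteration of $\Theta(k \log r_d)$ queries must be charged against at least $k\lceil \log_2 r_d\rceil + 1$ processed elements, which is exactly what the thresholds $k$ and $k\lceil \log_2 r_d\rceil$ in Lines~5 and~10 are engineered to guarantee. Beyond this, only the boundary cases where $|S| < k\lceil \log_2 r_d\rceil$ (so Line~10 already covers all of $S$) or $|S| \leq k-1$ (so Line~5 covers all of $S$) require care, but in both cases the per-element ratio only improves, so they slot cleanly into the same accounting.
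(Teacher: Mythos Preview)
Your proposal is correct and follows essentially the same approach as the paper. Both the error-dependent bound (charging $(k+1)\lceil \log_2 r_d\rceil + 1$ queries to each of the $\leq \eta_R$ removal iterations, plus $k$ for the terminating iteration, plus $n-r_d$ for augmentation) and the robustness bound (amortizing at most $1+\tfrac{1}{k}$ queries per processed element of $B_d$) match the paper's segment-based accounting; your per-iteration ratio argument is just a reframing of the paper's partition into short, long, and last segments.
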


\begin{proof} 
   We analyze \Cref{alg:unweighted-tradeoff}.
   Observe that the algorithm finds a maximal independent subset of $B_d$ in the clean matroid and greedily tries to add all remaining elements, similar to the error-dependent algorithm. Hence, the correctness follows as in the proof of \Cref{thm:unweighted-matroid-errordependent}. 
   
   It remains to bound the number of clean-oracle calls.
   We first prove the error-dependent upper bound.
   Note that the algorithm removes $|R(B_d)| \leq \eta_R$ elements from $B_d$, and then adds $|A(B_d)| \leq \eta_A$ elements.
   For every removed element, the algorithm uses at most $k \lceil \log_2 r_d \rceil$ clean-oracle calls due to the two linear searches, $\lceil \log_2 r_d \rceil$ calls in the binary search, and one extra query for the check between the two linear searches.
   Once we have a solution $B \cup S$ that is not necessarily maximum but feasible, which we do not know yet, we need to use at most $k$ additional queries until we verify that this solution is indeed feasible: %
   We have up to $k-1$ queries of linear search in Line~5 and one query in Line~8.
   After removing elements, the algorithm makes $n-r_d$ queries in Line~15.
   Thus, the total number of clean-oracle calls is at most $|R(B_d)| ((k+1)\lceil \log_2 r_d \rceil+1)+n-r_d $. By plugging in $r_d-\eta_R+\eta_A=r$, we proof the stated upper bound of
	\[
		n-r+k+\eta_A + \eta_R (k+1)\lceil \log_2 r_d \rceil.
	\]

   Second, we prove the robust upper bound that is independent of $\eta_A$ and $\eta_R$. 
   To this end, we partition $B_d=\{e_1, \ldots, e_{r_d}\}$ into segments separated by the removed elements. 
   Each segment (except the last one, which we treat separately below) contains exactly one removed element at the end. 
   If a linear search (Lines~5 and~10) finds the element to be removed, we call the corresponding segment \emph{short}. Otherwise, we say it is~\emph{long}.    
   
   Let the total length of short segments be $L_{\text{short}}$. 
   The cost of a short segment is equal to its length, plus one for the check between the two linear searches in Line~8 if its length is at least~$k$.
   Thus, the total cost of all short segments is at most~$(1+\frac{1}{k})L_{\text{short}}$. 

	Let the total length of long segments be $L_{\text{long}}$.
   In a long segment, there are exactly $k\lceil \log_2 r_d \rceil$ queries done by the linear searches (Lines~5 and~10), one query due to Line~8 and at most $\lceil \log_2 r_d \rceil$ many queries by the binary search in Line~12.
   Thus, the total cost of a long segment
   is at most $(k+1)\lceil \log_2 r_d \rceil+1$.
   Moreover, there can be at most $\frac{L_{\text{long}}}{k\lceil \log_2 r_d \rceil+1}$ many long segments. 
   Therefore, the total cost of all long segments is at most $(1+\frac{1}{k})L_{\text{long}}$.

   Finally, let $L_{\text{last}}$ be the length of the last segment, where no elements are removed. If this does not exist we set $L_{\text{last}} = 0$.
   When the algorithm considers this segment, the condition that $B \cup S \in \cI$ in Line~8 is true because no further elements are removed afterwards. Regarding the other condition in Line~8, there are two cases. If $|S| \leq k - 1$, then we do not query $B \cup S \in \cI$ and terminate the while-loop. Before that, we use exactly $L_{\text{last}}$ many oracle calls in Line~5.
   Otherwise, that is $|S| \geq k$, we use $k-1$ oracle calls in Line~5 and one query in Line~8, which evaluates true and terminates the while-loop. Since $k \leq |S| \leq L_{\text{last}}$, we conclude that in both cases we use at most $L_{\text{last}}$ many oracle calls during the last segment.

   Therefore, the total number of queries done in Lines~3-13 is at most
   \[
      \left(1+\frac{1}{k}\right)(L_{\text{short}} + L_{\text{long}}) + L_{\text{last}} \leq  \left(1+\frac{1}{k}\right)r_d.
   \]

   Together with the $n-r_d$ oracle calls in Line~15, we conclude that the total number of oracle calls is at most
   \[
      n-r_d+\left(1+\frac{1}{k}\right)r_d \leq \left(1+\frac{1}{k}\right)n.
   \]

   This concludes the proof of \Cref{thm:unweighted-ub}.
\end{proof}

\section{Computing a maximum-weight basis}\label{sec:weighted}

Consider the weighted setting.
Recall that $\cB_d^*$ and $\cB^*$ denote the sets of maximum-weight bases of the dirty and clean matroid, respectively.
We assume that we are given a maximum-weight basis $B_d \in \cB_d^*$, which can be computed without any clean-oracle calls.
The main difficulty compared to the unweighted setting %
is as follows:
In the unweighted setting, the error-dependent algorithm first computes an \emph{arbitrary} maximal independent set $B' \subseteq B_d$ %
and then easily augments $B'$ to a clean basis.
In the weighted setting, however, there clearly can be independent sets $B' \subseteq B_d$ that are not part of any maximum-weight basis; hence 
we need to be more careful.
Finding such a special independent subset of $B_d$ %
only by removing elements from $B_d$ and testing its independence seems difficult: $B_d$ itself could be independent, but not part of any maximum-weight basis.
However, even in this case, $B_d$ can be very close to a maximum-weight basis w.r.t.\ $\eta_A$ and $\eta_R$. Therefore, we cannot avoid carefully modifying $B_d$ since strategies like the greedy algorithm would use too many queries. %

Thus, we alternatingly add and remove elements to and from $B_d$. Intuitively, we want to ensure, as the greedy algorithm, that we do not miss adding elements with large weight. Thus, we try to add them as soon as possible. However, even if they are part of every basis in $\cB^*$, this might result in a \emph{dependent} current solution unless we now remove elements, which were not detectable before.
An example of such a situation 
is given in \Cref{fig:unweighted-vs-weighted}. 
This observation rules out simple two-stage algorithms as used in the unweighted case.

\begin{figure}

\centering
\begin{subcaptionblock}[t]{0.45\textwidth}
   \centering
   \begin{tikzpicture}[scale=0.75]
      \foreach \x in {0,...,1}{
            \node[element] at (\x,0) {};
         }
         \foreach \x in {2,...,4}{
            \node[dirtybasis] at (\x,0) {};
         }
         \foreach \x in {5,...,6}{
            \node[element] at (\x,0) {};
         }
         \foreach \x in {7,...,8}{
            \node[dirtybasis] at (\x,0) {};
         }
         \foreach \x in {1,...,9}{
            \node at (\x-1,-0.6) {$e_{\x}$};
         }
      \node[label={1. remove}] at (8,0) {\remove};
      \node[label={2. add}] at (5,0) {\add};
   \end{tikzpicture}
\caption{Modification to an arbitrary clean basis.}\label{fig:unweighted-vs-weighted1}
\end{subcaptionblock}
\hfill
\begin{subcaptionblock}[t]{0.45\textwidth}
   \centering
      \begin{tikzpicture}[scale=0.75]
         \foreach \x in {0,...,1}{
            \node[element] at (\x,0) {};
         }
         \foreach \x in {2,...,4}{
            \node[dirtybasis] at (\x,0) {};
         }
         \foreach \x in {5,...,6}{
            \node[element] at (\x,0) {};
         }
         \foreach \x in {7,...,8}{
            \node[dirtybasis] at (\x,0) {};
         }
         \foreach \x in {1,...,9}{
            \node at (\x-1,-0.6) {$e_{\x}$};
         }
         \node[label={2. add}] at (1,0) {\add};
         \node[label={1. remove}] at (8,0) {\remove};
         \node[label={4. add}] at (6,0) {\add};
         \node[label={3. remove}] at (4,0) {\remove};
      \end{tikzpicture}
      \caption{Modification to a maximum-weight clean basis. %
      Adding $e_2$ is necessary for its high weight.
      Element $e_5$ is only blocking after adding $e_2$ to $B_d-e_9$, hence cannot be detected earlier.}\label{fig:unweighted-vs-weighted2}
   \end{subcaptionblock}

   \caption{A matroid with elements $e_1,\ldots,e_9$ (displayed as circles) ordered left-to-right by non-increasing weight. The elements of the maximum-weight dirty basis $B_d$ are filled.}\label{fig:unweighted-vs-weighted}
\end{figure}

We now present our algorithm. Its full description is shown in \Cref{alg:fancy-binary-search-simpler2}. 
Given elements $E = \{e_1,\ldots,e_n\}$ in non-decreasing order of their weight, it
maintains a preliminary solution, which initially is set to the dirty basis $B_d$. It modifies this solution over $n$ iterations and tracks modifications in the variable sets $A \subseteq E \setminus B_d$ (\emph{added} elements) and $R \subseteq B_d$ (\emph{removed} elements). %
In every iteration $\ell$, the algorithm selects elements to add and remove such that \emph{at the end} of iteration $\ell$ its preliminary solution $(B_d \setminus R) \cup A$ satisfies two properties, which we define and motivate now.

Property~$(i)$ requires that
the preliminary solution $(B_d \setminus R) \cup A$ up to $e_\ell$ should be a \emph{maximal} subset of some maximum-weight basis. 
For the sake of convenience, %
we introduce the matroid $\cM^* = (E, \cI^*)$, where $\cI^*$ is the set of all subsets of $\cB^*$.
Then, we can use the following definition.
\begin{definition}
   A set $S$ is \emph{$k$-safe} if $S_{\leq k} \in \cI^*$ and for every $e \in E_{\leq k} \setminus S_{\leq k}$ it holds that $S_{\leq k} + e \notin \cI^*$.
\end{definition}

In other words, a set $S$ is $k$-safe if it is a basis of the truncated matroid of the $k$th prefix of $E$.
Using this definition, Property~$(i)$ requires that at the end of iteration $\ell$, the current solution $(B_d \setminus R) \cup A$ is $\ell$-safe.
Establishing this property in every iteration will give us that the final solution is $n$-safe, and, thus, a maximum-weight basis of $\cM$.
This works because the algorithm does not modify its solution for the $\ell$th prefix of $E$ after the $\ell$th iteration.

\begin{algorithm}[tb]
	\caption{Find a maximum-weight basis}\label{alg:fancy-binary-search-simpler2}
	\DontPrintSemicolon
	\KwIn{dirty basis $B_d \subseteq E$, matroid $\cM = (E, \cI)$}
	$A \gets \emptyset; R \gets \emptyset$ \{\emph{added} / \emph{removed} elements\} \;
	\While{$(B_d \setminus R) \cup A  \notin \cI$}{
		Find the \emph{smallest} index $i$ s.t.\ ${((B_d \setminus R) \cup A)}_{\leq i} \notin \cI$ via \emph{binary search} \;
        $R \gets R + e_i$ \;
	}
	\For{$\ell=1$ to $n$}{
		\If {$e_\ell \notin B_d$ and ${((B_d \setminus R) \cup A  + e_\ell)}_{\leq \ell} \in \cI$}{
			$A \gets A + e_\ell$ \;
            \If {$(B_d \setminus R) \cup A  \notin \cI$}{
				Find the \emph{smallest} index $i$ s.t.\ ${((B_d \setminus R) \cup A)}_{\leq i} \notin \cI$ via \emph{binary search} \;
				$R \gets R + e_i$ \;
			}
		}
	}
	\KwRet{$(B_d \setminus R) \cup A $} \;
\end{algorithm}

To establish Property $(i)$ after every iteration without using too many clean queries, it also maintains Property~$(ii)$: 
at the end of every iteration the current 
solution is independent, %
i.e., $(B_d \setminus R) \cup A \in \cI$.

We now give some intuition on how our algorithm achieves this.
Initially, say for $\ell = 0$, before Line~2 Property~$(i)$ is fulfilled trivially. For Property~$(ii)$, before Line~5 the algorithm greedily adds minimum-weight elements of $B_d$ to $R$ that close a circuit in the smallest prefix of $(B_d \setminus R) \cup A$ (Lines 2-4). This subroutine can be implemented via binary search. %
To guarantee both properties at the end of iteration $\ell$, first observe that Property~$(ii)$ holds as long as $A$ and $R$ have not changed in this iteration. However, this might be necessary to establish Property~$(i)$. 
Intuitively, our algorithm wants to act like the classic greedy algorithm to ensure $\ell$-safeness. 
Thus, it checks whether $e_\ell$ should be in the solution by considering its solution for $E_{\leq \ell}$. Clearly, if $e_\ell \in B_d \setminus R$ or $e_{\ell} \in R$, there is nothing to modify (Line~6), because the current solution is independent due to Property~$(ii)$ for the previous iteration. Similarly, if $e_\ell \notin B_d$ and the current solution for $E_{\leq \ell}$ together with $e_\ell$ is independent, we add $e_\ell$ to our solution (Lines~6-7). 
However, by adding an element, Property~$(ii)$ can become false due to an introduced circuit, which we have to break (Lines~8-10). 
Finally, there might be the situation that $e_\ell \in B_d$ has been added to $R$ in an earlier iteration, so $e_\ell \in R$. In this case, we clearly do not want to even consider adding $e_\ell$ again, as queries for verifying this addition cannot be bounded by our error measure. Indeed, removing and re-adding an element cannot be part of any minimal modification distance. Thus, our algorithm skips such elements (Line~6).
To justify this, we prove that removing element $e_\ell$ is always necessary, in the sense that there always is a circuit that $e_\ell$ breaks and that cannot be broken by removing elements in \emph{later} iterations by Lines~8-10. %
This follows from classic matroid properties for circuits.
Formally, we prove in \Cref{app:weighted}:

\begin{restatable}[Property~(ii)]{lemma}{weightedLemmaIndependence}\label{lemma:simple-independent}
   At the start (end) of each iteration of Line~5, it holds $(B_d \setminus R) \cup A \in \cI$. %
\end{restatable}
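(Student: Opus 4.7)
The plan is to prove the invariant by induction on the iteration counter $\ell$ of the for-loop at Line~5. The base case (start of iteration $\ell=1$) follows directly from the while-loop at Lines~2--4, whose termination condition is exactly $(B_d \setminus R) \cup A \in \cI$. Since the start of iteration $\ell+1$ coincides with the end of iteration $\ell$, the task reduces to showing that a single iteration preserves the invariant.

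For the inductive step, I will let $X := (B_d \setminus R) \cup A$ at the start of iteration $\ell$ and assume $X \in \cI$. If the condition in Line~6 fails, neither $A$ nor $R$ changes, so the invariant trivially survives. If it holds, then $e_\ell \notin B_d$; moreover $e_\ell \notin A$ at the start of the iteration, because $A$ can only have accumulated elements of index strictly less than $\ell$ in earlier iterations. Hence $e_\ell \notin X$, and after Line~7 the set becomes $X + e_\ell$. If $X + e_\ell \in \cI$, then Lines~8--10 are skipped and the invariant holds.

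The interesting case, and where the main work lies, is $X + e_\ell \notin \cI$. Here I plan to invoke the standard matroid fact that, since $X \in \cI$, $e_\ell \notin X$, and $X + e_\ell \notin \cI$, the set $X + e_\ell$ contains a \emph{unique} circuit $C$ (the fundamental circuit of $e_\ell$ w.r.t.\ $X$), and $e_\ell \in C$. The Line~6 check guarantees $(X + e_\ell)_{\leq \ell} \in \cI$, so $C$ has at least one element of index strictly greater than $\ell$. Letting $j^\star := \max\{k : e_k \in C\}$, I then have $j^\star > \ell$, and because $C$ is the unique circuit, the prefix $(X + e_\ell)_{\leq i}$ is dependent if and only if $i \geq j^\star$. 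Thus the binary search in Line~9 returns $i = j^\star$.

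The final step is to confirm that $e_{j^\star}$ is actually removable, i.e., lies in $B_d \setminus R$, so that Line~10 is well-defined. Since $e_{j^\star} \in X + e_\ell \subseteq (B_d \setminus R) \cup A \cup \{e_\ell\}$, and $j^\star > \ell$ rules out both $e_{j^\star} = e_\ell$ and $e_{j^\star} \in A$ (the latter because $A \subseteq E_{\leq \ell - 1}$ at this point), the element must lie in $B_d \setminus R$. After the removal, the current solution is $(X + e_\ell) - e_{j^\star}$, which is independent because removing any element of the unique circuit $C$ from $X + e_\ell$ breaks it. The main obstacle is this last case; the crucial insight is that uniqueness of the fundamental circuit simultaneously explains why the binary search pinpoints an element of $C$ and why its removal restores independence.
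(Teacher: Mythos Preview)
Your proof is correct and follows essentially the same approach as the paper's: induction on $\ell$, with the crucial step being the uniqueness of the fundamental circuit when adding $e_\ell$ to an independent set. Your version is in fact more detailed than the paper's---you explicitly identify the index returned by the binary search as $j^\star = \max\{k : e_k \in C\}$ and verify that $e_{j^\star} \in B_d \setminus R$, whereas the paper simply asserts that the binary search ``detects exactly this unique circuit'' and that removing the found element restores independence. One minor quibble: after Line~7 the set $A$ already contains $e_\ell$, so strictly $A \subseteq E_{\leq \ell}$ rather than $E_{\leq \ell-1}$; but since $j^\star > \ell$ this does not affect your conclusion that $e_{j^\star} \notin A$.
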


\begin{proof}
   At the beginning of the first iteration, $(B_d \setminus R) \cup A \in \cI$ by the condition in Line~2. It suffices to show if $(B_d \setminus R) \cup A \in \cI$ holds before iteration $\ell = j$, then it also holds before iteration $\ell = j+1$. If in iteration $\ell = j$, the condition in Line~6 is not satisfied, then $A, R$ do not change and we are done. The remaining case is that $e_j$ is added to~$A$.
   If then $(B_d \setminus R) \cup A \in \cI$, Line~8 evaluates false and we are done.
   Otherwise, that is, $(B_d \setminus R) \cup A \notin \cI$, we execute the binary search in Line~9 and add an element to $R$.
   To conclude also this case, we argue that at the end of this iteration we have that $(B_d \setminus R) \cup A \in \cI$.
   Adding $e_j$ to our solution, which by induction hypothesis was independent before, creates at most one (in this case exactly one) circuit in our solution (cf.\ Theorem 39.35 in \cite{schrijver}). 
   Since the binary search in Line~9 detects exactly this unique circuit, an element of the circuit is added to $R$, which breaks the circuit and makes $(B_d \setminus R) \cup A$ independent again.
\end{proof}

\begin{restatable}[Property~(i)]{lemma}{weightedLemmaSafeness}\label{lemma:simple-safe}
   At the end of every iteration $\ell$ of Line~5, $(B_d \setminus R) \cup A$ %
   is $\ell$-safe. 
\end{restatable}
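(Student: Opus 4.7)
The plan is to proceed by induction on $\ell$, using the equivalent characterization that $S$ is $\ell$-safe if and only if $S_{\leq \ell}$ is a maximum-weight basis of the restriction of $\cM$ to $E_{\leq \ell}$. This characterization follows from correctness of the greedy algorithm: a maximum-weight basis of the restriction to $E_{\leq \ell}$ can be extended by greedy to a maximum-weight basis of $\cM$, so it lies in $\cI^*$ and, by a size/rank argument, is maximal in $\cI^* \cap 2^{E_{\leq \ell}}$; conversely, maximality in $\cI^*$ forces $S_{\leq \ell}$ to coincide with $B^* \cap E_{\leq \ell}$ for a maximum-weight basis $B^* \supseteq S_{\leq \ell}$.

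The first key observation is that $S_{\leq \ell-1}$ does not change during iteration $\ell$ of Line~5: the only possible addition to $A$ in this iteration is $e_\ell$ itself, and the check in Line~6 together with Property~(ii) (\Cref{lemma:simple-independent}) forces the binary search in Line~9 to return an index $i > \ell$. Hence the inductive hypothesis yields that $S_{\leq \ell-1}$ is still a maximum-weight basis of the restriction to $E_{\leq \ell-1}$, of size~$r_{\ell-1}$.

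I would then split the inductive step into cases according to whether $e_\ell \in S$ at the end of iteration~$\ell$. If $e_\ell \in S$ --- either because $e_\ell \in B_d \setminus R$ already, or because $e_\ell$ was just added to $A$ --- then $S_{\leq \ell-1} + e_\ell \in \cI$ by Property~(ii) or by the Line~6 check. A standard rank-growth argument then gives $r_\ell = r_{\ell-1}+1$, and any basis of the restriction to $E_{\leq \ell}$ must contain $e_\ell$ (otherwise it would be an independent set of size $r_{\ell-1}+1$ inside $E_{\leq \ell-1}$). Combined with the inductive hypothesis, this makes $S_{\leq \ell-1}+e_\ell$ a maximum-weight basis of the restriction. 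If instead $e_\ell \notin B_d$ and the Line~6 check fails, then $S_{\leq \ell-1}+e_\ell \notin \cI$, so the rank does not grow, and $S_{\leq \ell-1}$ itself remains a maximum-weight basis of the restriction to $E_{\leq \ell}$, using $w(e)\geq w(e_\ell)$ for every $e \in E_{\leq \ell-1}$ in the usual exchange argument.

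The main obstacle is the remaining case $e_\ell \in R$, where $S_{\leq \ell} = S_{\leq \ell-1}$ and I must rule out $S_{\leq \ell-1}+e_\ell \in \cI$ to reduce to the previous subcase. I plan to argue by contradiction: suppose $S_{\leq \ell-1}+e_\ell \in \cI$; then the rank-growth argument forces $e_\ell$ to be a coloop of the matroid restricted to $E_{\leq \ell}$, and therefore to lie in no circuit of this restriction. However, $e_\ell$ was added to $R$ by a binary search (in Lines~3--4 or Lines~9--10) that pinpointed it as the highest-indexed (smallest-weight) element of some circuit $C$ contained entirely in a prefix ending at $e_\ell$, hence $C \subseteq E_{\leq \ell}$ with $e_\ell \in C$. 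This $C$ is a circuit of the restriction that contains the alleged coloop~$e_\ell$, yielding the desired contradiction and completing the induction.
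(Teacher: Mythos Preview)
Your proof is correct and, in the most delicate case, is actually cleaner than the paper's. The overall structure---induction on~$\ell$, the observation that $S_{\leq \ell-1}$ is unchanged during iteration~$\ell$, and the three-way case split on the status of~$e_\ell$---matches the paper. The substantive difference is the treatment of the case $e_\ell\in R$.

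The paper shows that $((B_d\setminus R)\cup A)_{\leq \ell-1}+e_\ell$ is dependent by maintaining an iterative invariant: for every iteration~$q$ between the removal of~$e_\ell$ and the current one, the solution's prefix plus~$e_\ell$ contains a circuit with~$e_\ell$ as its largest-index element. Preserving this invariant across iterations requires the strong circuit elimination axiom (Theorem~39.7 in Schrijver), because intervening removals may destroy the witnessing circuit and a new one must be synthesized from two overlapping circuits.

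Your argument sidesteps this tracking entirely. You observe that if $S_{\leq \ell-1}+e_\ell$ were independent, then together with the inductive hypothesis $|S_{\leq \ell-1}|=r_{\ell-1}$ one gets $r_\ell=r_{\ell-1}+1$, so~$e_\ell$ is a coloop of $\cM\restriction E_{\leq\ell}$ and therefore lies in no circuit contained in~$E_{\leq\ell}$. But the binary search that placed~$e_\ell$ into~$R$ exhibited exactly such a circuit at the moment of removal. This is shorter and uses only the inductive hypothesis plus the single circuit certificate from removal time; no circuit-elimination maintenance is needed.

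Your reformulation of $\ell$-safeness as ``$S_{\leq \ell}$ is a maximum-weight basis of $\cM\restriction E_{\leq\ell}$'' is also correct in general (including ties), via the argument you sketch: any max-weight basis of the restriction extends greedily to a max-weight basis of~$\cM$ because two bases of $\cM\restriction E_{\leq\ell}$ have the same closure, so the greedy extension is identical. This recasting lets you handle the remaining cases by elementary rank and exchange arguments rather than the explicit basis-swap the paper performs against a fixed~$B^*$.
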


\begin{proof}
   First, we introduce some notations. 
   Let $A_k, R_k$ be the sets represented by the variables $A, R$ at the end of iteration $k$ (equivalently at the beginning of iteration $k+1$). 
   We prove by induction over the iterations of Line~5 that the statement holds at the end of every iteration. 
   First, note that every set is $0$-safe, thus, this also applies to our solution before the first iteration of Line~5.
   Let $\ell \in \{1,\ldots,n\}$. 
   We write $B = (B_d \setminus R_{\ell-1}) \cup A_{\ell-1}$ for our preliminary solution after iteration $\ell-1$, and $B' = (B_d \setminus R_{\ell}) \cup A_{\ell}$ for the solution after iteration $\ell$.
   By induction hypothesis, we can assume that $B$ is $(\ell-1)$-safe, and we now prove that $B'$ is $\ell$-safe. To this end, we distinguish three cases concerning element $e_\ell$.

   In the first case we assume that $e_\ell \notin B_d$, Then, the algorithm makes one more query (Line~6).
   If $B_{\leq \ell} + e_\ell \notin \cI$, it must hold that $B$ is $\ell$-safe, and since in this case $B = B'$, we are done.
   Otherwise, that is, $B_{\leq \ell} + e_\ell \in \cI$, we add $e_\ell$ to $A$ and, thus, $B_{\leq \ell} + e_\ell = B'_{\leq \ell}$. We now prove that $B'_{\leq \ell}$ is part of a maximum-weight basis, which implies that $B'$ is $\ell$-safe.
   Let $B^*$ be some optimal basis %
   such that $B_{\leq \ell-1} \subseteq B^*$. Such a basis exists because $B$ is $(\ell-1)$-safe by assumption.
   If $e_\ell \in B^*$, then also $B'_{\leq \ell} \subseteq B^*$, and we are done.
   Otherwise, that is, $e_\ell \notin B^*$, there must be a circuit $C$ in $B^*+e_\ell$. Moreover, $C$ must contain some other element $e_k$ with $k>\ell$, because $B'_{\leq \ell} \in \cI$ due to \Cref{lemma:simple-independent}. By the ordering of the elements, $w_{e_\ell} \geq w_{e_k}$, and therefore $B^*+e_\ell-e_k$ is a maximum-weight basis. Further, it contains $B'_{\leq \ell}$. Thus, we conclude $B'_{\leq \ell} \in \cI^*$. 

   For the second case, suppose that $e_\ell \in B_d \setminus R_{\ell-1}$. Thus, the condition in Line~6 evaluates false, and we have $B = B'$.
   Thus, $B'_{\leq \ell} = B_{\leq \ell-1} + e_\ell$, and we can prove analogously to the previous case that $B'_{\leq \ell}$ in included in a maximum-weight basis.

   Finally, it remains the case where $e_\ell \in R_{\ell-1}$. In this case, $e_\ell$ is added to $R$ in some earlier iteration $\ell' < \ell$, and, thus, $e_\ell \notin B'$. 
   Also, $B = B'$ because $e_\ell \in B_d$. 
   We show that $B'_{\leq \ell-1} + e_\ell$ is dependent, %
   which implies that $B'$ is $\ell$-safe.
   To this end, we prove for $q = \ell', \ldots, \ell-1$ the invariant that $((B_d \setminus R_{q}) \cup A_{q})_{\leq \ell-1}+e_\ell$ contains a circuit $C$ such that $e_\ell \in C$ and that $e_\ell$ is the element with the largest index in $C$. 
   Note that the invariant holds at the time $e_\ell$ is added to $R$ in iteration $\ell'$ due to Lines~3 and~10. %
   Suppose the invariant holds for $q=j-1$, witnessed by circuit $C_1$. We now prove the invariant for $q=j$.
   If $C_1 \subseteq ((B_d \setminus R_{j}) \cup A_{j})_{\leq \ell-1} + e_\ell$, we are done. Otherwise, the algorithm must have added element $e_j \in C_1$ to $R$ in iteration $j$. 
   By Lines~2 and 9, $e_j$ is the element with largest index in a circuit $C_2$ in $(B_d \setminus R_{j-1}) \cup A_{j}$.
   As $e_\ell$ has the largest index in $C_1$ by induction hypothesis and $e_j \in C_1$, we conclude $j < \ell$. Thus, $e_\ell \notin C_2$.
   Since $e_j \in C_1 \cap C_2$ and $e_\ell \in C_1 \setminus C_2$, there exists a circuit $C_3 \subseteq (C_1 \cup C_2) \setminus \{e_j\}$ such that $e_\ell \in C_3$ (cf.\ Theorem 39.7 in \cite{schrijver}). 
   In particular, $C_3 \subseteq ((B_d \setminus R_{j}) \cup A_{j})_{\leq \ell-1} + e_\ell$ and $e_\ell$ has the largest index in $C_3$. This makes $C_3$ a witness for the invariant for $q=j$, and, thus, concludes the proof.
\end{proof}

Since there %
are at most $n$ elements in $R$, the algorithm clearly terminates, and we conclude as~follows. %

\begin{corollary}\label{coro:nsafe}
   \Cref{alg:fancy-binary-search-simpler2} terminates with an $n$-safe set.
\end{corollary}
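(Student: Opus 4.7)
The plan is to deduce the corollary as an immediate consequence of the already-established Lemmas~\ref{lemma:simple-independent} and~\ref{lemma:simple-safe}. The two things I need to verify are (a) that \Cref{alg:fancy-binary-search-simpler2} actually terminates, and (b) that the set it returns is $n$-safe. The second is essentially free once termination is established.

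For termination, I would first handle the initial while loop (Lines~2--4). Each iteration strictly increases $R$ by one element, and $R \subseteq B_d$ throughout, so the loop runs at most $r_d$ times before either $(B_d \setminus R) \cup A \in \cI$ or $R = B_d$ (in which case the loop condition is satisfied since $\emptyset \in \cI$). The for loop then executes exactly $n$ iterations, and within each iteration the only potentially nontrivial work is a single binary search in Line~9 over a set of size at most $n$, so each iteration does a bounded number of clean queries. Hence the algorithm halts after finitely many steps.

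For $n$-safeness, I would simply invoke \Cref{lemma:simple-safe} with $\ell = n$: at the end of the $n$-th (and final) iteration of Line~5, the preliminary solution $(B_d \setminus R) \cup A$ is $n$-safe. Since Line~11 returns exactly this set, the returned set is $n$-safe. Unpacking the definition, $n$-safeness means $((B_d \setminus R) \cup A)_{\leq n} = (B_d \setminus R) \cup A$ is contained in some $B^* \in \cB^*$ and cannot be extended by any element of $E_{\leq n} = E$ while remaining in $\cI^*$; that is, it is a maximal subset of a maximum-weight basis, hence itself a maximum-weight basis of $\cM$.

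There is no substantial obstacle here: Lemmas~\ref{lemma:simple-independent} and~\ref{lemma:simple-safe} already carry the full technical burden, and the corollary amounts to observing that the algorithm exits both loops in finite time and applying the safeness lemma at $\ell = n$. The only thing worth being slightly careful about is to note that the while loop in Lines~2--4 does terminate before the for loop begins (so that the inductive base case $\ell=0$ of \Cref{lemma:simple-safe} is well-defined), which is what the monotone growth of $R$ guarantees.
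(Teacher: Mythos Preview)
Your proposal is correct and follows essentially the same approach as the paper: the paper simply notes that since there are at most $n$ elements in $R$ the algorithm clearly terminates, and then states the corollary as an immediate consequence of \Cref{lemma:simple-safe} at $\ell=n$. Your write-up is slightly more detailed on termination (separating the while loop and the for loop), but the argument is the same.
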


It remains to bound the number of clean %
queries. 
Fix $A$ and $R$ to their final sets.
Assume that elements are non-increasingly ordered by their weight; %
among elements of \emph{equal weight}, elements of~$B_d$ come \emph{before} elements of $E \setminus B_d$. 
For such an ordering, the algorithm modifies $B_d$ to a closest basis of $\cB^*$; for details, we refer to~\cref{app:weighted}.

\begin{restatable}{lemma}{weightedBoundedModifications}\label{lemma:alg-min-mod-sets}
   It holds that $\abs{A} \leq \eta_A$ and $\abs{R} \leq \eta_R$.
\end{restatable}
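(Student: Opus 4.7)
The plan is to identify the algorithm's output $B_{\text{alg}} := (B_d \setminus R) \cup A$ with the maximum-weight basis of $\cM$ that maximizes the intersection with $B_d$. Once this is done, $|R| = |B_d \setminus B_{\text{alg}}| \leq \eta_R$ follows directly from the definition of $\eta_R$, and $|A| \leq \eta_A$ from the identity $|R| - |A| = r_d - r = \eta_R - \eta_A$ noted in \Cref{sec:pre}.

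Recall that $\cM^* = (E, \cI^*)$ is a matroid whose bases are precisely $\cB^*$. I will argue that $B_{\text{alg}}$ coincides with the output $G$ of the greedy algorithm on $\cM^*$ processed under the given element order. Since this order is induced by the perturbed weights $w'_e := w_e + \epsilon \cdot \ind[e \in B_d]$ for sufficiently small $\epsilon > 0$, greedy on $\cM^*$ produces a maximum-$w'$-weight basis; as all bases of $\cM^*$ share the same $w$-weight, maximizing $w'$ over $\cB^*$ reduces to maximizing $|B \cap B_d|$ over $B \in \cB^*$.

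To prove $B_{\text{alg}} = G$, I invoke \Cref{lemma:simple-safe}, which gives that at the end of every iteration $\ell$ of Line~5 the prefix $((B_d \setminus R) \cup A) \cap E_{\leq \ell}$ is maximal in $\cI^* \cap 2^{E_{\leq \ell}}$. I will further show that this prefix is stable in later iterations: when some $e_m$ with $m > \ell$ is added to $A$ in Line~7, the condition checked in Line~6 certifies that the prefix up to index $m$ remains independent in $\cM$, so the subsequent binary search in Line~9 can only remove an element whose index is strictly greater than $m$. A straightforward induction on $\ell$, using the greedy criterion that $e_\ell \in G$ iff $(G \cap E_{\leq \ell-1}) + e_\ell \in \cI^*$, then yields $B_{\text{alg}} \cap E_{\leq \ell} = G \cap E_{\leq \ell}$ for every $\ell$.

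The main technical hurdle is verifying this prefix stability: the interleaving of additions to $A$ and removals to $R$ in Lines~7 and~10 makes it non-obvious that earlier prefixes cannot be disturbed, and this stability is what powers the inductive identification of $B_{\text{alg}}$ with the greedy output. Once $B_{\text{alg}} = G$ is established, choosing $B^* \in \cB^*$ with $|B_d \setminus B^*| = |R(B_d)| \leq \eta_R$ gives $|R| \leq |B_d \setminus B^*| \leq \eta_R$ and completes the argument.
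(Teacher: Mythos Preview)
Your proposal is correct and takes a genuinely different route from the paper's proof. You reduce the claim to a single clean structural statement—$B_{\text{alg}}$ coincides with greedy on $\cM^*$ under the tie-breaking order—and then let the standard optimality of greedy do the work: since that order is induced by the perturbed weights $w'$, and all bases of $\cM^*$ share the same $w$-weight, greedy maximizes $|B \cap B_d|$ over $\cB^*$, which immediately gives $|R| = |R(B_d)|$ and then $|A| \leq \eta_A$ via the rank identity. The paper instead fixes minimum modification sets $A^*, R^*$ chosen to minimize $|R \triangle R^*| + |A \triangle A^*|$, assumes the outputs differ, and derives a contradiction through a weight-class case analysis with explicit basis-exchange swaps. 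Your argument is more conceptual and reusable (it really pins down \emph{which} basis the algorithm computes), while the paper's argument avoids invoking that $\cM^*$ is a matroid and that greedy is optimal on it, staying closer to raw exchange properties of $\cM$. The prefix-stability observation you flag—that the Line~6 check forces any removal in Line~9 to have index strictly greater than the current $\ell$—is exactly the ingredient that makes your induction go through, and it is sound; note that it is implicitly also what allows the $\ell$-safeness of \Cref{lemma:simple-safe} to transfer from the end-of-iteration solution to the final solution.
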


To conclude, we use a charging scheme and~\Cref{lemma:alg-min-mod-sets} to derive the following bound. %

\begin{restatable}{lemma}{fancybinarysearchsimplertwo}
	\Cref{alg:fancy-binary-search-simpler2} computes a max-weight basis with at most $n- r + 1 + 2\eta_A + \eta_R \cdot \ceil{\log_2(r_d)}$ clean queries. 	
\end{restatable}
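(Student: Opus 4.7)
The plan is to charge each clean query in \Cref{alg:fancy-binary-search-simpler2} to either an element added to $A$, an element added to $R$, or an element of $E \setminus B_d$, and then combine the resulting counts with \Cref{lemma:alg-min-mod-sets} and the identity $r = r_d - \eta_R + \eta_A$ from \Cref{sec:pre}. To this end, I would partition $R$ into $R_1$, the elements removed in the initial while loop (Lines~2--4), and $R_2 = R \setminus R_1$, the elements removed inside the for loop (Lines~8--10).

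First, I would bound the queries made by the initial while loop. Each of the $|R_1|$ executed iterations uses one clean query to evaluate the loop condition on Line~2 and at most $\lceil \log_2 r_d \rceil$ queries for the binary search on Line~3, after which one final test of the loop condition evaluates to false and terminates the phase. This contributes $|R_1|\cdot(1+\lceil \log_2 r_d\rceil)+1$ clean queries in total.

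Next I would analyze the for loop, exploiting that the conjunction in Line~6 short-circuits: if $e_\ell \in B_d$ the first clause fails without any clean query. For the $n - r_d$ iterations with $e_\ell \notin B_d$, one query is spent on the independence test in Line~6; whenever this succeeds $e_\ell$ is added to $A$ and Line~8 issues one further query; and whenever Line~8 discovers dependence, the binary search on Line~9 uses at most $\lceil \log_2 r_d\rceil$ further queries and adds one element to $R_2$. Summing per-iteration costs yields $(n - r_d) + |A| + |R_2|\cdot\lceil \log_2 r_d\rceil$ clean queries from the for loop.

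Combining both phases gives a total of $|R|\cdot\lceil \log_2 r_d\rceil + |R_1| + 1 + (n - r_d) + |A|$ clean queries. I would then apply \Cref{lemma:alg-min-mod-sets} to bound $|A|\leq\eta_A$ and $|R_1|\leq|R|\leq\eta_R$, and use $r_d = r + \eta_R - \eta_A$ to rewrite $\eta_R - r_d = -r + \eta_A$, producing the claimed bound $n - r + 1 + 2\eta_A + \eta_R \cdot \lceil \log_2 r_d \rceil$. The main subtlety I expect is being precise about the short-circuit evaluation in Line~6: it is exactly the fact that the $r_d$ iterations with $e_\ell \in B_d$ are query-free which keeps the leading term at $n - r_d$ rather than $n$, so that combining with $n - r_d = n - r + \eta_A - \eta_R$ absorbs the otherwise troublesome $|R_1| \leq \eta_R$ overhead into the $n - r + \eta_A$ term; once this is handled, the rest reduces to arithmetic.
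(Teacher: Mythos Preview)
Your proposal is correct and follows essentially the same charging argument as the paper; the only cosmetic difference is that you explicitly split $R$ into $R_1$ and $R_2$, which yields the slightly tighter intermediate count $|R_1|$ where the paper writes $|R|$, but both collapse to the same final bound after invoking \Cref{lemma:alg-min-mod-sets} and $r=r_d+\eta_A-\eta_R$. One minor omission: the statement also asserts that the output is a maximum-weight basis, so you should open with a one-line appeal to \Cref{coro:nsafe} for correctness before turning to the query count.
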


\begin{proof}
	The correctness follows from \Cref{coro:nsafe}. %
	It remains to bound the number of clean queries. Note that we use clean-oracle calls only in Lines~2,3,6,8 and~9.
	
	In Lines~3 and~9, each removal incurs a binary search, which costs at most $\ceil{\log_2(r_d)}$ queries. Since every binary search increments the size of $R$, the total number of queries used in these lines is at most $\ceil{\log_2(r_d)} \cdot |R|$.
	In Line~2, the number of queries is equal to the number of elements added to $R$ in Line~4 plus a final one where the condition evaluates false, which we charge extra.
	Similarly, we charge the queries in Line~8 to the removals in Line~10 if Line~8 holds and to the added element in Line~7 otherwise.
	In Line~6, we have that each $e \in E \setminus B_d$ incurs a query, hence the total number is $n-r_d$. 
	
	Summarized, the total number of clean queries is at most 
	\(
	\ceil{\log_2(r_d)} \cdot |R| + |R|+1 + |A| + n-r_d.
	\)
	Using $r = r_d+\eta_A-\eta_R$ and \Cref{lemma:alg-min-mod-sets}, we conclude the proof.
\end{proof}

We complement this algorithmic result by a lower bound. It proves that our error-dependent guarantees in the unweighted case are not possible in the weighted case. Hence, it separates both settings.
\begin{restatable}{lemma}{lemWeightedLB}
	Every deterministic algorithm for finding a maximum-weight basis executes strictly more than $n-r+\eta_A+ \eta_R \cdot  \ceil{\log_2(r_d)}+1$ clean-oracle calls in the worst-case.
\end{restatable}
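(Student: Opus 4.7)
The plan is to construct an adversarial instance by combining the two lower-bound constructions from \Cref{sec:lb} via a direct sum, using weights to enforce that the algorithm must pay both the $\eta_A$ and $\eta_R \lceil \log_2 r_d \rceil$ costs simultaneously rather than only their maximum (which is what sufficed for the unweighted case). Specifically, I would take $\cM_1$ to be an unweighted lower-bound instance witnessing at least $n_1 - r_1 + \eta_A$ queries (with zero removal error) and $\cM_2$ to be an unweighted instance witnessing at least $n_2 - r_2 + \eta_R \lceil \log_2 r_{d,2} \rceil$ queries (with zero addition error). I would then form the clean matroid $\cM = \cM_1 \oplus \cM_2$ and dirty matroid $\cM_d = \cM_{d,1} \oplus \cM_{d,2}$, and assign weights so that every element of $E_1$ strictly exceeds every element of $E_2$. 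The combined parameters add correctly: $n = n_1 + n_2$, $r = r_1 + r_2$, $r_d = r_{d,1} + r_{d,2}$, while the overall $\eta_A$ and $\eta_R$ are inherited from the respective components.

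The central step is an adversarial argument exploiting the direct-sum structure: any clean query on $S \subseteq E$ returns independent iff both $S \cap E_1 \in \cI_1$ and $S \cap E_2 \in \cI_2$, which yields at most one bit of information --- strictly weaker than querying the two components separately. Since the weight ordering forces any maximum-weight basis of $\cM$ to decompose into a maximum-weight basis of $\cM_1$ together with one of $\cM_2$, the algorithm must eventually distinguish the correct basis in each component independently. I would formalize this by running both \Cref{sec:lb} adversaries in parallel: each maintains a family of matroids consistent with the query history, and I would argue that each component's family shrinks only when a query carries nontrivial information about that component. This yields the combined lower bound $(n_1 - r_1 + \eta_A) + (n_2 - r_2 + \eta_R \lceil \log_2 r_{d,2} \rceil) = n - r + \eta_A + \eta_R \lceil \log_2 r_d \rceil$ on the number of clean-oracle calls.

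The main obstacle I anticipate is handling \emph{mixed queries}: a single query that touches both $E_1$ and $E_2$ returns only one bit, so one must show it cannot simultaneously reduce the adversarial uncertainty in both sub-instances at a rate that beats separate querying. The direct-sum independence test is the key lever, as it factorizes through the two components and implies that a "dependent" answer to a mixed query is attributable to at least one component but not necessarily both; an adversary can always blame whichever component is more convenient to keep the other component's uncertainty intact. The additional additive slack of $+2$ in the target bound should come from noting that in the combined weighted instance the algorithm must commit to a specific maximum-weight basis (not merely a set of candidates) in each component, contributing at least one unavoidable commitment query per sub-instance that is not already amortized into the $n_i - r_i + \cdot$ terms of the individual lower bounds.
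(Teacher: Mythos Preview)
Your direct-sum approach has a genuine gap at precisely the point that matters. The purpose of this lemma is to \emph{separate} the weighted setting from the unweighted one: the unweighted error-dependent algorithm (\Cref{thm:unweighted-matroid-errordependent}) already attains $n-r+1+\eta_A+\eta_R\lceil\log_2 r_d\rceil$, and the lemma asserts that no weighted algorithm can match this. Your construction glues together two unweighted lower-bound instances; even if the mixed-query argument could be made rigorous, the best it can yield is the sum of the two component lower bounds, which is at most $n-r+1+\eta_A+\eta_R\lceil\log_2 r_{d,2}\rceil$. This is already weaker than the target because $r_{d,2} < r_d = r_{d,1}+r_{d,2}$ --- your claimed equality $\lceil\log_2 r_{d,2}\rceil = \lceil\log_2 r_d\rceil$ is simply false whenever $r_{d,1}\ge 1$. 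And even if you set $r_{d,1}=0$, you only recover the unweighted bound, not strictly exceed it. The ``$+2$ slack from commitment queries'' is a placeholder, not an argument: the component lower bounds from \Cref{sec:lb} already account for the final verification query, so nothing in the direct-sum structure forces any query beyond what the pieces individually require.

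The paper's proof is entirely different and exploits a feature of the weighted problem your construction never touches: to \emph{certify} a maximum-weight basis, the algorithm must rule out a heavier basis, which forces an additional prefix query (e.g.\ to $\{e_1,\ldots,e_{n-1}\}$) on top of the independence query on the candidate basis itself. The adversary in the paper builds a single small instance and adapts based on which of these two queries the algorithm issues first, then commits to a clean matroid (with either $\eta_A=1,\eta_R=0$ or $\eta_A=0,\eta_R=1$) that forces at least one extra query beyond the bound. That extra max-weight verification query is exactly the source of the strict inequality, and it has no analogue in a direct sum of unweighted instances --- your weights only order the two blocks, they do not create any tension \emph{within} a block between ``is this a basis'' and ``is this the heaviest basis''.
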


\noindent \textbf{Application of the robustification framework.\ }
As in the unweighted case, our algorithm may perform poorly when the dirty oracle is of low quality, i.e., $\eta_A$ and $\eta_R$ are large. %
We extend the ideas for robustifying the error-dependent algorithm (cf.\ \Cref{sec:unweighted:robustness}) and combine them with the concepts developed above. The key idea for robustifying \Cref{alg:fancy-binary-search-simpler2} is to start a binary search only after a sufficient number of linear search steps. %
However, as observed above (cf.\ \Cref{fig:unweighted-vs-weighted}), we cannot remove all blocking elements in one iteration. While the simple argument that the linear search partitions $B_d$ still holds, it does not cover the total removal cost, because a later addition can create a removal at a previously checked position.
To %
overcome this, we observe that in \Cref{alg:fancy-binary-search-simpler2}, %
immediate removal of every detected element from the current solution is not necessary; we just need to decide whether in iteration $\ell$ element $e_\ell$ should be part of the solution.

In our robustified algorithm (\Cref{alg:weighted-robust}) we exploit this as follows. While in \Cref{alg:fancy-binary-search-simpler2}
we linearly check prefixes of $E \setminus B_d$ for additions, we now linearly check prefixes of $E$ for additions (Lines~4-5) \emph{and} removals (Lines~6-8). However, for the sake of a good error-dependency, we count these removal checks (cf.\ increment counter $q$ in Line~7) and execute a binary search only if we checked enough elements in $B_d$ linearly (Lines~10-12). 
Then, we can again bound the total cost for the binary searches using a density argument.
Whenever we remove an element, we charge the previous cost of the linear searches to this removal error and reset $q$, which re-activates the linear search.
However, if the current solution is already independent, we do not want to search for removal errors at all (cf.\ Lines~2 and~8 in \Cref{alg:fancy-binary-search-simpler2}). 
Unfortunately, doing such a check after every addition and removal already rules out a robustness close to~$n$. 
Thus, we slightly delay this check w.r.t.\ the counter $q$, and stop the removal search accordingly (Line~9).
Finally, whenever an element is added, we make sure that the linear search is running or that we start it again (Line~5), as a new circuit could have been introduced in our solution.
Formalizing this sketch proves our main theorem; for details we refer to \Cref{app:weighted}.

\begin{restatable}{theorem}{weightedRobust}\label{thm:weighted-ub}
   For any $k \in \mathbb{N}_+$, there is an algorithm that, given a dirty matroid $\cM_d$ of rank $r_d$ with unknown 
    $\eta_A$ and $\eta_R$, computes a maximum-weight basis of a matroid $\cM$ of rank $r$ with at most 
   $\min\{n-r+k+ \eta_A \cdot (k+1)+\eta_R \cdot (k+1)\ceil{\log_2 r_d} , (1+\frac{1}{k})n\}$ oracle calls to $\cM$.
\end{restatable}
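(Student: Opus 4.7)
The plan is to analyze \Cref{alg:weighted-robust} by fusing the correctness and charging arguments used for \Cref{alg:fancy-binary-search-simpler2} (error-dependent bound) with the segment-partitioning argument used for \Cref{alg:unweighted-tradeoff} (robust bound). The algorithm interleaves the two mechanisms: it scans $E$ in weight order, attempts additions for elements outside $B_d$ and linearly searches for removals inside $B_d$, and triggers a binary search only after the counter $q$ reaches $k\ceil{\log_2 r_d}$ elements of $B_d$ without a detected removal. Accordingly, the proof reuses the earlier lemmas in this combined setting.

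First, I would establish correctness by proving analogs of \Cref{lemma:simple-independent} and \Cref{lemma:simple-safe}. Property~(ii), that the current solution is independent at the end of each cycle in which outstanding removals have been resolved, still follows from the classical fact that adding a single element to an independent set creates at most one circuit (Theorem~39.35 in \cite{schrijver}), so a single removal suffices to restore independence. For Property~(i), the invariant from the proof of \Cref{lemma:simple-safe} carries over almost verbatim: whenever a would-be circuit is eventually detected by the delayed binary search, the circuit-exchange property (Theorem~39.7 in \cite{schrijver}) still shows that the element removed has the largest index, so the argument that $((B_d \setminus R) \cup A)_{\leq \ell}$ lies in some maximum-weight basis at the end of iteration $\ell$ goes through unchanged. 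Hence the algorithm terminates with an $n$-safe set, and the analog of \Cref{lemma:alg-min-mod-sets} still yields $\abs{A} \leq \eta_A$ and $\abs{R} \leq \eta_R$ because the algorithm produces a closest modification of $B_d$ into $\cB^*$ under the same tie-breaking rule.

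For the error-dependent bound, I would charge clean queries as follows: every element of $E \setminus B_d$ contributes at most one query when it is first scanned, for a baseline of $n-r_d$; every addition triggers at most $k$ extra linear-search queries (the next block scanned for a newly created circuit) plus one independence check, contributing $\eta_A(k+1)$; every removal incurs up to $k\ceil{\log_2 r_d}$ linear-search queries across segments of $B_d$ plus a single binary search of cost $\ceil{\log_2 r_d}$, contributing $\eta_R(k+1)\ceil{\log_2 r_d}$; an additive $k$ absorbs the final terminating check. Substituting $r = r_d + \eta_A - \eta_R$ yields the claimed bound $n-r+k+\eta_A(k+1)+\eta_R(k+1)\ceil{\log_2 r_d}$.

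For the robust bound, I would reuse the segment-partitioning argument from the proof of \Cref{thm:unweighted-ub}. Partition the scanned elements of $B_d$ into segments delimited by removals, classifying each as \emph{short} (removal found by linear search, cost at most $(1+\tfrac{1}{k})$ times its length) or \emph{long} (removal found by binary search after at least $k\ceil{\log_2 r_d}$ linear steps, so the cost $\leq (k+1)\ceil{\log_2 r_d}+1$ amortizes to $(1+\tfrac{1}{k})$ times the segment length); the $n - r_d$ single-query scans of $E \setminus B_d$ contribute their own length verbatim. Summing yields $(1+\tfrac{1}{k})n$. The hardest part is the bookkeeping for the interleaving of additions and removal searches: I have to verify that an addition which restarts or extends the current linear window neither double-charges any element against both a linear and a binary search budget, nor invalidates the $\ell$-safeness invariant despite the delayed independence check controlled by $q$. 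The key observation is that each element of $E$ is scanned by the linear search at most once per (re)activation, and every reactivation is caused either by an addition (already charged to $\eta_A$) or by a resolved removal (already charged to the corresponding segment), which keeps the amortized cost per element at $1+\tfrac{1}{k}$.
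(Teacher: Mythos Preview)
Your plan matches the paper's approach closely: correctness from the safeness/independence lemmas carrying over, the error-dependent bound via per-line charging, and the robust bound via segment partitioning. Two bookkeeping points need tightening.

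First, your error-dependent charges do not add up as claimed: $n - r_d + \eta_A(k+1) + \eta_R(k+1)\ceil{\log_2 r_d} + k$ together with $r_d = r - \eta_A + \eta_R$ gives $n - r + k + \eta_A(k+2) + \eta_R\big[(k+1)\ceil{\log_2 r_d} - 1\big]$, off by $\eta_A - \eta_R$ from the target. The paper's line-by-line count instead charges $|A| \cdot k$ (not $k+1$) to additions and an extra standalone $|R|$ to removals, and precisely these two terms cancel against the $\eta_A - \eta_R$ produced by the substitution.

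Second, for the robust bound the two-way short/long split from the unweighted proof is insufficient, because $q$ is now also reset at the Line-9 independence check without any removal having occurred. The paper introduces a third \emph{tiny} segment type for those resets, bounds their number by $|A|+1$ (each tiny segment except possibly the last is triggered by a subsequent addition that re-enables the linear search), and amortizes their $\leq k-1$ linear queries jointly with the $|A|$ addition queries in $Q_A$ to preserve the $1+\tfrac1k$ rate. Your last paragraph gestures at exactly this mechanism but phrases it as ``already charged to $\eta_A$'', which is language from the error-dependent analysis; in the robust bound one has no $\eta_A$ budget and must instead pair each tiny segment with an element of $A$ inside the $(1+\tfrac1k)n$ accounting.
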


\begin{algorithm}[tb]
	\caption{Find a maximum-weight basis (robustified)}\label{alg:weighted-robust}
	\DontPrintSemicolon
	\KwIn{dirty basis $B_d$, matroid $(E, \cI)$, integer $k \geq 1$}
		$A \gets \emptyset; R \gets \emptyset$; $d_{\max} \gets \max_{e_i \in B_d} i\;$ \;
		$q \gets 0; \text{LS} \gets \mathrm{true}$ \{linear search counter / flag\} \;
		\For {$\ell=1$ to $n$}{
			\If {$e_\ell \notin B_d$}{
				\lIf {${((B_d \setminus R) \cup A  + e_\ell)}_{\leq \ell} \in \cI$}{
					$A \gets A + e_\ell$ and $\text{LS} \gets \mathrm{true}$ 
				}
			}
			\ElseIf {$e_\ell \in B_d \setminus R$ and $\text{LS} = \mathrm{true}$}{
				$q \gets q+1$ \;
				\lIf {${((B_d \setminus R) \cup A)}_{\leq \ell} \notin \cI$}{
					$R \gets R + e_\ell$ and $q \gets 0$ 
				}
				\lIf{$\ell = d_{\max}$ or ($q = k-1$ and $(B_d \setminus R) \cup A \in \cI$)}{
					$q \gets 0$ and $\text{LS} \gets \mathrm{false}$ 
				}
				\ElseIf {$q = k \ceil{\log_2 r_d}$}{
					Find the \emph{smallest} index $i$ s.t.\ ${((B_d \setminus R) \cup A)}_{\leq i} \notin \cI$ via \emph{binary search} \;
					$R \gets R + e_i$ and $q \gets 0$ \;
				}
			}
		}
		\KwRet{$(B_d \setminus R) \cup A $ } \;
 \end{algorithm}

\section{Extensions and future work}\label{sec:extensions}

\subsection{Rank oracles}\label{sec:extensions:ranks}

Another common %
type of matroid oracles is the rank oracle: Given any $S \subseteq E$, a rank oracle returns the cardinality of a maximum independent set contained in $S$, denoted by~$r(S)$.
Since $r(S) = |S|$ if and only if $S \in \cI$, %
our 
algorithmic results for independence oracles directly transfer. %
Moreover, for the unweighted setting, we can even reduce the number of oracle calls using a rank oracle,
implying that some lower bounds do not translate.
For example, given $B_d$ we can compute its rank $r(B_d)$ to obtain $\eta_R = |B_d|-r(B_d)$ and decide whether to remove elements via binary search or immediately switch to the greedy algorithm.  
Further, we can improve the dependency on $\eta_A$ if $\eta_A$ is small as we can find the elements to be added via a binary search.
Hence, we get the following result.

\begin{restatable}{proposition}{thmrank}
\label{thm:rank-oracle}
There is an algorithm that computes a clean basis with at most 
$\min\big\{ n+1, \, 2 + \eta_R \cdot \ceil{\log_2 r_d} + \min \big\{ \eta_A \cdot \ceil{\log_2(n- r_d)}, \, n - r_d \big\} \big\}$
clean rank-oracle calls. %
\end{restatable}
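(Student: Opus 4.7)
The plan is to spend two rank queries upfront to pin down the exact removal and addition counts needed, and then adaptively pick between a ``smart'' binary-search strategy and a greedy fallback. First, query $r(B_d)$ and $r(E)$. In the unweighted case, $B_d \setminus R(B_d)$ is a maximum independent subset of $B_d$ in $\cM$, so $|R(B_d)| = r_d - r(B_d)$; analogously, $|A(B_d)| = r - r(B_d) = r(E) - r(B_d)$. Both are upper bounded by $\eta_R$ and $\eta_A$, respectively. Let $S' := |R(B_d)| \ceil{\log_2 r_d} + \min\{|A(B_d)| \ceil{\log_2(n-r_d)},\, n-r_d\}$; then $S' \leq S := \eta_R \ceil{\log_2 r_d} + \min\{\eta_A \ceil{\log_2(n-r_d)},\, n-r_d\}$.

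If $S' \leq n - 1$, execute the smart strategy. Perform $|R(B_d)|$ binary-search removals on $B_d$ (the rank test $r(B_{\leq i}) < i$ replaces the independence test), each costing $\ceil{\log_2 r_d}$ queries, producing an independent $B \subseteq B_d$ with $|B| = r(B_d)$. Then extend $B$ to a basis using whichever of the following is cheaper, both costs being known: $|A(B_d)|$ binary searches in $E \setminus B_d$ (each $\ceil{\log_2(n-r_d)}$ queries, finding the smallest $i$ with $r(B \cup \{f_1, \ldots, f_i\}) > |B|$), or greedy iteration through $E \setminus B_d$ ($n - r_d$ queries). Total: $2 + S' \leq \min\{n+1,\, 2+S\}$. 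Otherwise ($S' > n - 1$), discard $B_d$ and run classical greedy over $E$, using the knowledge of $r$ to skip the final query: after $n-1$ greedy steps $B$ is a maximum independent subset of $E_{\leq n-1}$, so $r(B \cup \{e_n\}) = r(E) = r$ and $e_n$ is addable iff $|B| = r - 1$. This costs $n - 1$ queries, for a total of $n + 1$; since $S' > n - 1$ implies $2 + S > n + 1$, the bound $\min\{n+1,\, 2+S\} = n+1$ is met.

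The main obstacle I anticipate is establishing that after the binary-search removal phase, $B = B_d \setminus R$ is a maximum independent subset of $B_d$ in $\cM$---equivalently, that exactly $|R(B_d)|$ removals suffice and no removed element is addable back to $B$, which lets the addition phase safely restrict its search to $E \setminus B_d$ of size $n - r_d$. The key matroid fact is that each removed $e_i$ lies in a circuit of the current $B_{\leq i}$, so removing $e_i$ preserves $r(B)$ while decreasing $|B|$ by one; iterating this, $|B|$ decreases from $r_d$ down to $r(B_d)$ over exactly $|R(B_d)|$ steps, at which point $r(B) = |B|$. The remaining verifications---correctness of the binary searches via rank tests, the addition-count bound $|A(B_d)| \leq \eta_A$, and the greedy skip-last-query trick---are then routine.
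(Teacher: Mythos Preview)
Your proposal is correct and follows essentially the same approach as the paper's (informal) argument: spend two rank queries on $r(B_d)$ and $r(E)$ to recover the exact removal and addition counts, then use binary search for both removals and additions with a greedy fallback chosen adaptively from these known counts. Your formalization is in fact cleaner than the paper's sketch—computing $S'$ and comparing it to $n-1$ in one shot, together with the skip-the-last-greedy-query trick via the known value of $r(E)$, makes the $n+1$ robustness bound go through crisply, whereas the paper only gestures at ``depending on $\eta_R$, decide whether to binary-search or switch to greedy.''
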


The full discussion on rank oracles can be found in the appendix.
For future work it would be interesting to see if the error-dependency and the worst-case bound can be improved, and if rank oracles can be used to improve the results for the weighted setting.

\subsection{Dirty independence oracle with cost}\label{sec:extensions:cost}

We consider the generalized setting where a dirty-oracle call has cost 1 and a clean-oracle call has cost $p > 1$ with the objective to minimize the total cost. 
\Cref{lem:lb} translates to this setting, giving a lower bound $p (n-r+1)$. %
Note that the previous results assume that $p \gg 1$ in this setup.

The main takeaway of this generalization is that it can be beneficial for an algorithm to delay dirty-oracle calls for clean-oracle calls, depending on $p$ and $r$. This contrasts the previous sections, where we can meet lower bounds by computing a dirty basis upfront.

To see this, we consider two algorithms and assume for simplicity that $\cM_d = \cM$. The first algorithm starts with $E$ and removes elements via binary search until it reaches an independent set. %
It only uses clean-oracle calls of %
total cost %
$p (n-r) \ceil{\log_2(n)} + p$. %
The second algorithm computes a dirty basis and verifies it, incurring a total cost of $n + p(n-r+1)$, as $\cM_d = \cM$. 
Thus, for small $p$ and large $r$, the first algorithm incurs less cost than the second algorithm, which is optimal among the class of algorithms which only initially use dirty-oracle calls.
Specifically, having access to rank oracles, one can compute the value of $r$ upfront using one clean-oracle call and, thus, select the better algorithm.

\subsection{Matroid intersection}\label{sec:extensions:matroid-intersection}

In the \emph{matroid intersection problem}, we are given two matroids $\mathcal{M}^1 = (E, \mathcal{I}^1)$ and $\mathcal{M}^2=(E, \mathcal{I}^2)$, 
and we seek %
a maximum set of elements $X \subseteq E$ that is independent in both matroids, i.e., $X \in \mathcal{I}^1 \cap \mathcal{I}^2$.

The textbook algorithm for finding such a maximum independent set is to iteratively increase the size of a solution one by one using an augmenting-path type algorithm until no further improvement is possible. %
This algorithm has a running time of $O(r^2 n)$~\cite{edmonds1970submodular}.
There are %
faster algorithms known for matroid intersection, which run in time $O(nr^{3/4})$~\cite{Blikstad21} and in time $O(n^{1+o(1)})$ for the special case of two partition matroids~\cite{ChenKLPGS22}.
(In a \emph{partition matroid} $\cM=(E,\cI)$, the elements are partitioned into \emph{classes} $C_i$ with capacities $k_i$, and a set $S\subseteq E$ is independent if and only if $|C_i \cap S| \le k_i$ holds for each $C_i$.)
Here, we focus on improving the running time of the simple textbook algorithm %
by (i) using dirty oracles calls in each of the augmentation steps and (ii) by computing a warm-start solution using an optimal dirty solution, i.e., %
a feasible solution of a certain size dependent on the error.

\noindent \textbf{Matroid intersection via augmenting paths.\ }
Our error measure is as follows:
We define $\eta_1 = \{ F \in \mathcal{I}_d^1 \mid F \notin \mathcal{I}^1 \} $ and $\eta_2 = \{ F \in \mathcal{I}_d^2 \mid F \notin \mathcal{I}^2 \} $ to be the number of different sets which are independent in the dirty matroid but not independent in the clean matroid.
In order to simplify the setting, we assume here that (i) the dirty matroids are supersets of the clean matroids, i.e., $\mathcal{I}^1 \subseteq \mathcal{I}^1_d$ and $\mathcal{I}^2 \subseteq \mathcal{I}^2_d$, and (ii) that the clean matroids are partition matroids.
We note that in general the intersection of two partition matroids can be reduced to finding a maximum $b$-matching.

\begin{restatable}{proposition}{thmmatroidintersectionaugmentation}
\label{thm:matroid-intersection:augm-path}

There is an algorithm that computes an optimum solution for matroid intersection using at most $(r+1) \cdot (2 + (\eta_1 + \eta_2) \cdot (\ceil{\log_2 (n)} +2))$ clean-oracle calls.
\end{restatable}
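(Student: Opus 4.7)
I will execute the classical augmenting-path algorithm for matroid intersection but build the exchange graph using only the (free) dirty oracle, and use the clean oracle exclusively to verify each candidate augmentation. When verification fails, a binary search over the augmented set pinpoints a witness subset that is independent in the dirty matroid but dependent in the clean matroid; this witness is used to prune an edge of the exchange graph before re-searching. Concretely, I maintain a solution $X \in \mathcal{I}^1 \cap \mathcal{I}^2$ and, within each outer iteration, a working exchange graph $D_X$ (with its dirty source/sink sets) derived from dirty queries, together with a set of deleted edges. Each outer iteration repeats the following: (i) find a shortest $\mathcal{M}^1$-source-to-$\mathcal{M}^2$-sink path $P$ in $D_X$; if none exists, return $X$; (ii) verify with two clean queries whether $X' := X \triangle P$ lies in $\mathcal{I}^1 \cap \mathcal{I}^2$, and if so set $X := X'$ and proceed to the next outer iteration; (iii) otherwise binary-search along an ordering of $X'$ compatible with the alternating path $z_0, y_1, z_1, \ldots, y_k, z_k$, using $\ceil{\log_2 n}$ clean queries, to find the smallest prefix $S \subseteq X'$ that is dependent in the offending matroid $\mathcal{M}^i$. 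The returned index isolates a single exchange (or source) edge of $P$, which I delete from $D_X$ before repeating from (i).

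\textbf{Key analysis.} Correctness follows from the classical matroid-intersection augmenting-path theorem applied to the \emph{clean} exchange graph. Because of $\mathcal{I}^i \subseteq \mathcal{I}^i_d$ and the deletion rule---which removes only edges $(y,z)$ whose defining set lies in $\mathcal{I}^i_d \setminus \mathcal{I}^i$---the clean exchange graph is always a subgraph of $D_X$, so if $D_X$ contains no augmenting path then neither does the clean graph and $X$ is optimal. The number of outer iterations is at most $r+1$ ($r$ successful augmentations plus one final termination). Within one outer iteration, since $P$ is a shortest path in the dirty matroid, $X' \in \mathcal{I}^1_d \cap \mathcal{I}^2_d$, and by downward closure the prefix $S$ returned in step (iii) satisfies $S \in \mathcal{I}^i_d$, hence $S \in \mathcal{I}^i_d \setminus \mathcal{I}^i$ and contributes one unit to $\eta_i$. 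Edge deletions ensure that distinct inner failures produce distinct witnesses, so the number of inner failures per outer iteration is at most $\eta_1+\eta_2$. Each inner attempt costs two clean queries for verification plus $\ceil{\log_2 n}$ upon failure; summing gives the per-iteration bound $2 + (\eta_1+\eta_2)(\ceil{\log_2 n}+2)$ and multiplying by $r+1$ yields the claimed total.

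\textbf{Main obstacle.} The delicate step is (iii): translating the minimal clean-dependent prefix $S$ returned by the binary search into a concrete edge of $D_X$ whose removal (a) preserves the subgraph relation with the clean exchange graph and (b) strictly restricts the inner-loop search so that the same witness cannot reappear within the current outer iteration. This requires ordering the elements of $X'$ along the alternating structure of $P$ so that the first index at which the prefix becomes dependent pinpoints one specific exchange $(y_j, z_{j-1})$ or source element $z_0$, and then arguing that the set $X - y_j + z_{j-1}$ (respectively $X + z_0$) itself is the offending dirty-independent clean-dependent set. A secondary technicality is handling source and sink edges uniformly with interior exchange edges and verifying that no deletion ever needs to be undone, which follows since each deleted edge witnesses a subset that is clean-dependent independently of the current $X$. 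Once this correspondence is in place, the remainder of the argument is standard augmenting-path bookkeeping.
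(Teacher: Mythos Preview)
Your plan matches the paper's algorithm: build the exchange graph from dirty queries, verify each candidate augmentation $X \triangle P$ with two clean independence tests, and, on a failed verification, binary-search along $P$ to locate and delete a single offending edge before re-searching. The outer/inner iteration structure and the per-iteration budget $2 + (\eta_1+\eta_2)(\lceil \log_2 n\rceil + 2)$ are argued the same way.

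The real gap is exactly the step you flag as the ``main obstacle'' and then defer. In the paper the proposition sits under two standing hypotheses stated just before it: $\cI^i \subseteq \cI^i_d$ and, crucially, that both \emph{clean} matroids are partition matroids. That second assumption is precisely what makes your step~(iii) go through: for partition matroids, \emph{any} $Y_1$--$Y_2$ path in the clean exchange graph (not only a chord-free one) gives a valid augmentation, and therefore if $X \triangle P' \notin \cI^i$ for the smallest such prefix $P'$, the last exchange of $P'$ must correspond to a set $X - y + z$ (or $X + z_0$) that is dirty-independent but clean-dependent. For arbitrary matroids this inference is not available: a prefix $X \triangle P'$ can become clean-dependent through the joint effect of several swaps even when every individual $X - y_j + z_j$ is clean-independent, so the binary search need not point at a genuine non-edge of $D(X)$. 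Your shortest-path observation that the clean exchange graph is a subgraph of $D_X$ correctly yields ``$X \triangle P \notin \cI^i \Rightarrow$ some edge of $P$ is non-clean'', but it does not by itself say that the \emph{first} dependent prefix isolates that edge. You must invoke the partition-matroid hypothesis (as the paper does) to close this step.

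A smaller point: in your counting, the object that lies in $\cI^i_d \setminus \cI^i$ and is guaranteed distinct per inner failure is the deleted edge's defining set $X - y + z$ (respectively $X + z_0$), not the prefix $S$. Distinct inner failures delete distinct edges (each at most once), and for fixed $X$ distinct edges have distinct defining sets; there is no direct reason the prefixes $S$ themselves are pairwise different across successive inner failures.
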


\noindent \textbf{Matroid intersection via warm-starting.\ }
We show how to exploit the dirty matroids to obtain a good starting solution that is independent in both clean matroids.
The idea of warm-starting using predictions has been used for other prediction models in~\cite{dinitz2021faster,ChenSVZ22, SakaueO22a} for problems like weighted bipartite matching or weighted matroid intersection. These results are tailored to the weighted setting and do not directly translate to improvements in the unweighted case.
As error measure we adjust the removal error %
for matroid intersection: 
Let $s_d^* = \max_{S_d \in \mathcal{I}^1_d \cap \mathcal{I}^2_d} |S_d|$ and define  $\mathcal{S}_d^* = \{ S_d \in \mathcal{I}^1_d \cap \mathcal{I}^2_d \mid |S_d| = s_d^* \}$ to be the set of optimum solutions to the dirty matroid intersection problem.
We define $\eta_r = \max_{S_d \in \mathcal{S}_d^*} \min_{S_c \in \mathcal{I}^1 \cap \mathcal{I}^2} \{ | S_d \setminus S_c | : S_c \subseteq S_d \}$.

Our algorithm %
computes an optimal solution to the dirty matroid, then greedily removes elements until we obtain a feasible solution for the clean matroid. 
By %
observing that this reverse greedy algorithm is a $2$-approximation in the number of elements to be removed, $\eta_r$, we obtain the following~result. 

\begin{restatable}{proposition}{thmmatroidintersectionwarm}
\label{thm:matroid-intersection:warm-start}
There is an algorithm that computes a feasible solution $S_c' \in \mathcal{I}^1 \cap \mathcal{I}^2$ of size $|S_c'| \geq  s_d^* - 2 \eta_r $ using at most $2 + 2 \eta_r \cdot (1 + \ceil{\log_2 (n)} )$ clean-oracle calls.
\end{restatable}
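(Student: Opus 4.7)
The plan is to design a straightforward two-phase algorithm and then separately analyze its correctness, its size guarantee, and its query complexity. First I would compute an optimal dirty solution $S_d^* \in \mathcal{S}_d^*$ using only dirty-oracle calls (which are free), and then iteratively prune $S_d^*$ to obtain a set feasible for both clean matroids.

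Concretely, I fix an arbitrary ordering $e_1, \ldots, e_n$ of $E$ and maintain a working set $S$, initialized to $S_d^*$. For each matroid $\mathcal{M}^i$, $i \in \{1,2\}$, in sequence: query whether $S \in \mathcal{I}^i$; while the answer is negative, use binary search to find the smallest index $j$ such that $S_{\leq j} \notin \mathcal{I}^i$ (spending $\ceil{\log_2 n}$ clean-oracle calls), remove $e_j$ from $S$ (it must lie in a circuit, since $S_{\leq j-1} \in \mathcal{I}^i$ by minimality of $j$), and query $S \in \mathcal{I}^i$ again. When both loops finish, return $S$. Correctness of feasibility is immediate from the loop exit conditions.

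The key step in the analysis, and also the main obstacle, is to bound the total number of removals by $2\eta_r$. By definition of $\eta_r$, there exists $T^* \subseteq S_d^*$ with $T^* \in \mathcal{I}^1 \cap \mathcal{I}^2$ and $|S_d^* \setminus T^*| \leq \eta_r$. Denoting by $r_i$ the rank function of $\mathcal{M}^i$, this yields $r_i(S_d^*) \geq |T^*| \geq |S_d^*| - \eta_r$, so the \emph{deficiency} $|S_d^*| - r_i(S_d^*)$ is at most $\eta_r$. In the first loop the binary-search procedure decreases the deficiency with respect to $\mathcal{M}^1$ by exactly one per removal and terminates as soon as it reaches zero, hence incurs at most $\eta_r$ removals. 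For the second loop let $S' \subseteq S_d^*$ denote the set at the end of loop~1; the subtle point is that loop~1 may well remove elements of $T^*$, so one cannot directly reuse $T^*$. Instead I argue that the deficiency is monotone under taking subsets: removing one element from any set decreases its cardinality by one and, by the matroid rank axioms, its rank by at most one, so the deficiency cannot grow. Iterating yields $|S'| - r_2(S') \leq |S_d^*| - r_2(S_d^*) \leq \eta_r$, so loop~2 also performs at most $\eta_r$ removals, giving $|S_c'| \geq s_d^* - 2\eta_r$.

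The query count then follows directly: each loop performs one initial independence check and, for every removal, one binary search of $\ceil{\log_2 n}$ calls followed by one independence check. With at most $\eta_r$ removals per loop this is $1 + \eta_r(\ceil{\log_2 n} + 1)$ clean-oracle calls per loop, and summing over the two loops gives the claimed bound of $2 + 2\eta_r(1 + \ceil{\log_2 n})$.
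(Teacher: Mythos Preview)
Your proposal is correct and follows essentially the same approach as the paper: compute an optimal dirty solution and, for each clean matroid in turn, repeatedly binary-search for and delete a circuit-closing element until independence holds. The only minor difference is in how you bound the removals per loop---you use a clean rank/deficiency argument (each removal drops the deficiency by one, and deficiency is monotone under taking subsets), whereas the paper argues via the existence of removal sets $R_\eta^1, R_\eta^2$ of size at most $\eta_r$ and the fact that greedy removal in a matroid attains the minimum; both are standard and yield the same bound.
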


\section*{Acknowledgements}

We thank Kevin Schewior for initial discussions on the topic of this paper.

This research is supported by the German Research Foundation (DFG) through grant no.\ 517912373 and by the University of Bremen with the 2023 Award for Outstanding Doctoral Supervision. Further, the first author is funded by the Deutsche Forschungsgemeinschaft (DFG, German Research Foundation) under Germany's Excellence Strategy – The Berlin Mathematics Research Center MATH+ (EXC-2046/1, project ID: 390685689).

\printbibliography

\newpage

\appendix
\section*{Appendix}

\section{Lower bounds}\label{app:lb}
\label{sec:lb}

We present
lower bounds on the worst-case number of clean queries required by any deterministic algorithm. 
Note that all lower bounds hold even if the dirty oracle indeed follows a matroid.
First, we show that, even if $\cM = \cM_d$, the worst-case number of clean queries is at least $n-r+1$. 
The following lemma also shows that this is best-possible number of queries for instances with $\cM = \cM_d$ %
and the best-possible worst-case bound of $n$ queries cannot be achieved at the same time.

\begin{restatable}{lemma}{lemLB}
	\label{lem:lb}
	For every rank $r \ge 1$, no deterministic algorithm can use less than $n-r+1$ clean queries, even if $\cM = \cM_d$. Further, for any algorithm that uses exactly $n-r+1$ clean queries whenever $\cM = \cM_d$, there is at least one instance where it uses at least $n+1$ clean queries. 
\end{restatable}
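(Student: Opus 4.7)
The plan is to prove both parts via adversary arguments constrained by the certificate requirement described in the preliminaries.

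For Part~1, I would pick the uniform matroid $U_{n,r}$ as the adversary's choice of $\cM = \cM_d$ and argue that the algorithm's certificate for any basis $B$ forces a specific set of $n-r+1$ queries. The crucial observation is that in $U_{n,r}$ a set $Q$ is independent iff $|Q|\le r$. Hence, to certify $B \in \cI$ via downward closure from some query $Q \supseteq B$ answered ``yes'', we need $|Q|\le r$; combined with $|B|=r$, this forces $Q=B$. Symmetrically, to certify $B+e\notin\cI$ for $e\notin B$ via upward closure of dependence from some query $Q'\subseteq B+e$ answered ``no'', we need $|Q'|>r$; combined with $|Q'|\le|B+e|=r+1$, this forces $Q'=B+e$. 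Since these $n-r+1$ queries are pairwise distinct, no single query can serve two purposes, and no other query can contribute to the certificate in $U_{n,r}$, the lower bound $n-r+1$ follows.

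For Part~2, I would specialize to $\cM_d$ being the free matroid on $E$. Then $B_d=E$ and $r=n$, so the hypothesis forces the algorithm $A$ to use exactly $n-r+1=1$ clean query when $\cM=\cM_d$. A brief case analysis on this single query $Q$ shows it must equal $E$: any $Q\subsetneq E$ answered ``yes'' is consistent with a matroid where an element of $E\setminus Q$ is a loop (so $E$ would not be a basis), and $Q$ answered ``no'' contradicts $E$ being a basis of the free matroid. Hence $A$'s first query is always $E$.

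With this handle on $A$, I would choose the adversarial clean matroid $\cM$ to be the rank-$1$ matroid in which all elements are pairwise parallel, i.e.\ $\cI=\{\emptyset\}\cup\{\{e\}:e\in E\}$. Then $A$'s first query $E$ returns ``no'', yielding essentially no usable information beyond ``rank less than $n$''. I would then run an adversary argument for the remaining subproblem: maintain a candidate set $S$ of possible non-loops, initially $S=E$, answer every multi-element query ``no'' (keeping $S$ unchanged), and answer every singleton query $\{e\}$ with $e\in S$ as ``no'' while $|S|>1$, shrinking $S$ by removing $e$. The algorithm cannot output a basis $\{e^*\}$ without explicitly certifying $\{e^*\}\in\cI$, because once $|S|=1$ the remaining answers are still simultaneously consistent with the rank-$0$ matroid (all loops, basis $\emptyset$) and the rank-$1$ matroid with $e^*$ as its unique non-loop. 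This forces a total of $n$ singleton queries on top of the initial $E$ query, for at least $n+1$ clean queries.

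The main technical obstacle is to rigorously justify the adversary's claim that its family of consistent matroids shrinks by at most one per query, and to rule out any multi-element strategy that could simultaneously eliminate several candidates. This hinges on the rigid structure of the parallel rank-$1$ matroid: any set of size at least two is dependent regardless of which element is the non-loop, so multi-element queries cannot disambiguate candidates; meanwhile downward and upward closure are too weak to infer independence of a fresh singleton from the ``no'' answers collected so far.
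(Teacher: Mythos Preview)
Your Part~1 argument has a genuine gap. The claim that ``no other query can contribute to the certificate in $U_{n,r}$'' is false: matroid axioms---in particular, circuit elimination---allow the dependence of $B+e$ to be deduced indirectly. Concretely, take $n=3$, $r=1$, and the four queries $\{1\}$ (yes), $\{2\}$ (yes), $\{1,2\}$ (no), $\{1,3\}$ (no). These certify that $B=\{2\}$ is a basis \emph{without} ever querying $B+3=\{2,3\}$: from $\{1\},\{2\}\in\cI$ and $\{1,2\}\notin\cI$ the set $\{1,2\}$ is a circuit; from $\{1,3\}\notin\cI$ either $3$ is a loop (so $\{2,3\}$ is dependent) or $\{1,3\}$ is a circuit, and then circuit elimination on $\{1,2\}$ and $\{1,3\}$ forces $\{2,3\}$ to contain a circuit. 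So certificates need not contain $B+e$ for every $e\notin B$, and your argument that they must does not go through. The bound $n-r+1$ is still true for $U_{n,r}$, but proving it requires a genuinely different counting argument. The paper sidesteps this by using a partition matroid with one large class $C_2$ of capacity~$1$; there, the only queries whose answers distinguish $\cM$ from the relevant alternatives are those with $|A\cap C_2|=2$, and one argues (implicitly via a connectivity/spanning-tree idea) that at least $|C_2|-1=n-r$ of them are needed, in addition to the query of $B$ itself.

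Your Part~2 argument is essentially correct and takes a different route from the paper. You specialize $\cM_d$ to the free matroid (so the hypothesis pins the first query to $E$) and then run an adaptive adversary that declares elements to be loops one by one; the paper instead reuses its partition matroid from Part~1 and sets the clean matroid to rank~$0$, arguing that the forced query to some $B\in\cB_d$ is wasted. Both approaches work; yours is clean and self-contained for Part~2, while the paper's has the virtue of reusing the same construction for both parts. One wording issue to fix: you first say $\cM$ is $U_{n,1}$ (all elements parallel, so every singleton is independent) but then have the adversary answer singleton queries ``no'', which is inconsistent with $U_{n,1}$. What you actually mean is that the adversary adaptively commits to a matroid with at most one non-loop; just drop the initial identification with $U_{n,1}$.
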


\begin{proof}
	Consider arbitrary $n$ and $r$ with $1 \le r \le n$. We give a clean matroid $\cM$ such that the statement of the lemma holds for the instance defined by $\cM$ and the dirty matroid $\cM_d = \cM$.	
	Let $E = \{e_1,\ldots,e_n\}$ denote the ground set of elements.
	We define  $\cM = \cM_d$ as a partition matroid with two classes of elements $C_1 =  \{e_1,\ldots,e_{r-1}\}$ and $C_2 = \{e_{r},\ldots, e_n\}$. 
	Each set $S \subseteq E$ with $|S
	\cap C_1| \le r-1$ and $|S \cap C_2| \le  1$ is independent in $\cM = \cM_d$, and we denote the set of independent sets by $\cI$ and the set of bases by $\cB$.
	Each basis of this matroid completely contains $C_1$ and exactly one member of $C_2$. Thus, the rank of $\cM$ is indeed $r$. We show that even if the algorithm knows that the underlying matroid is a partition matroid (but it does not know the exact classes), it needs at least $n-r+1$ queries.
	
	As we require algorithms to show some certificate for a clean basis (see \Cref{sec:pre}), every feasible algorithm has to verify that (i) some $B \in \B_d = \B$ is independent in $\cM$ and that (ii) $B \cup \{e\}$ is not independent in $\cM = \cM_d$ for every $e \in E\setminus B$.
	
   Consider the instance defined above. To prove (i), an algorithm \emph{has to} make the oracle call $B \in \cI$ as calls to supersets of $B$ will return false and calls to sets $A$ with $B\not\subseteq A$ cannot prove that $B$ is independent. 
	
	Let $e$ denote the element of $C_2$ that is part of the basis $B$ selected by the algorithm. To prove (ii), the algorithm has to verify that the elements $\{e_r,\ldots,e_n\} \setminus \{e\}$ are all part of the same class as $e$ and that this class has capacity one. Otherwise, the queries by the algorithm would not reveal sufficient information to distinguish between the matroid $\cM$ and a potential alternative matroid $\cM'$ that has a third class with capacity one. However, as $B$ would not be a basis for $\cM'$, the queries by the algorithm must reveal sufficient information to distinguish $\cM$ and $\cM'$.
	
	To prove that an element $e'$ is part of the same class as $e$, an algorithm has to execute a query $A \in \cI$ with $A \cap C_2 = \{e',e''\}$ for an element $e''$  that is already known to be in the same class as $e$. This is because queries $A \in \cI$ with either $|A \cap C_2| = 1$, $|A \cap C_2| \ge 3$ or $e' \not\in A$ would always give the same answer even for the potential alternative matroid $\cM'$ in which $e'$ forms a class of capacity one on its own.
	
	In total, this leads to at least $n-r+1$ oracle calls. Further, consider the robustness case that the algorithm is given an instance with a dirty matroid as described above and a clean matroid whose only independent set is the empty set. By the above arguments, in order to use only $n-r+1$ queries for the case $\cM = \cM_d$, the algorithm has to query some $B \in B_d$. But this query does not help the algorithm to find out that the only clean basis is the empty set, i.e., it still needs $n$ more queries to show the fact.
\end{proof}

The next lemma gives a lower bound that justifies the linear dependency on $\eta_A$ in our algorithmic results. In fact, it prohibits improvements over the linear dependency unless the error grows quite large.
But even then, we complement the result with a different lower bound that grows stronger with increasing number of errors.

\begin{restatable}{lemma}{lemAddLBcombined}
	\label{lem:add:lbcombined}
	For every %
   $\eta_A \geq 1$, there is no deterministic algorithm that uses less than 
\[
\begin{cases}
        n-r + \eta_A & \text{if } 1 \le \eta_A \le \frac{n-r_d}{2}\\
        1 + \lceil\log_2{n-r_d \choose \eta_A}\rceil  & \text{if }  \eta_A \geq \frac{n-r_d}{2}
\end{cases}
\]	
clean queries in the worst-case, even if it knows $\eta_A$ upfront. %
\end{restatable}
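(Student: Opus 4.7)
The plan is to prove both bounds via adversarial partition matroid instances. For the first bound I would construct an instance where the clean matroid forces each of the $n-r$ elements outside the algorithm's output basis to require a distinct exclusion query, and each of the $\eta_A$ additions to require a distinct inclusion query. For the second bound I would use an information-theoretic argument over a family of $\binom{n-r_d}{\eta_A}$ indistinguishable matroids.

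Concretely, for the first regime I would take $\cM$ to be a partition matroid with $\eta_R=0$, so $r=r_d+\eta_A$ and the target bound $n-r+\eta_A$ equals $n-r_d$. The partition consists of $r_d$ singleton capacity-one classes (one per element of $B_d$) together with $\eta_A$ further capacity-one classes whose disjoint union is $E\setminus B_d$; the adversary commits to this latter partition only as forced by queries. Any output basis has the form $B=B_d\cup B^{\text{new}}$ with $|B^{\text{new}}|=\eta_A$. The proof splits into an exclusion argument and an inclusion argument. For \emph{exclusion}, I would show that every $e\in E\setminus B$ must appear in some query; otherwise the adversary can declare $e$ a fresh capacity-one singleton class, yielding a partition consistent with all prior answers in which $B+e\in\cI$, contradicting the certificate of maximality. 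Careful accounting in the spirit of \Cref{lem:lb} forces at least $n-r_d-\eta_A$ such queries. For \emph{inclusion}, an inductive adversary that merges two classes whenever a new element's class identity is not yet pinned down forces $\eta_A$ additional queries, one dedicated to each element of $B^{\text{new}}$. Summing yields $n-r_d=n-r+\eta_A$.

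For the second regime I would use the family of partition matroids $\{\cM_T\}_{T\subseteq E\setminus B_d,\,|T|=\eta_A}$, where $\cM_T$ has $r_d+\eta_A$ singleton capacity-one classes $\{e\}$ for $e\in B_d\cup T$ together with one zero-capacity class $E\setminus(B_d\cup T)$. Each $\cM_T$ has unique basis $B_d\cup T$, and the $\binom{n-r_d}{\eta_A}$ matroids are pairwise distinguished only by this basis. The adversary answers each query to keep the family of consistent matroids as large as possible; since each query is binary, its size drops by at most a factor of two per query, forcing at least $\lceil\log_2\binom{n-r_d}{\eta_A}\rceil$ queries before a single matroid remains. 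One extra query is unavoidable because no sequence of ``no'' answers can certify that the claimed basis is actually independent, so a final positive independence assertion must be obtained.

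The main obstacle is the exclusion accounting in the first regime: a single oracle query may touch many elements of $E\setminus B$ simultaneously, so one must argue that the adversary's freedom in committing to class membership forces the algorithm to effectively dedicate a query to each such element. I expect this to require a careful potential argument on the number of adversary-admissible partitions, showing that each query can only eliminate partition extensions touching a bounded number of as-yet-unqueried elements. The inclusion step should go through via a straightforward inductive extension of the argument used for \Cref{lem:lb}.
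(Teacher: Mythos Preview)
Your second-regime argument matches the paper's: both use a family of $\binom{n-r_d}{\eta_A}$ partition matroids indexed by the addition set $T\subseteq E\setminus B_d$ together with a halving adversary. (The paper justifies the extra $+1$ slightly differently---by arguing that the algorithm must first query $B_d$ because it does not know $\eta_R$---but this is a minor variation.)

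The first regime, however, has a genuine gap. You correctly flag the obstacle yourself: a single oracle query can touch many elements of $E\setminus B$ at once, so ``every excluded element appears in some query'' does not yield $n-r_d-\eta_A$ \emph{distinct} queries, and your ``inclusion'' step suffers the same problem in reverse. You propose a potential argument over adversary-admissible partitions but do not carry it out, and it is not clear your many-singleton-class construction supports one---indeed, in that construction queries of the form $B_d\cup S'$ with $S'\subseteq E\setminus B_d$ can reveal a great deal of class structure in one shot. The paper takes a quite different route. The clean matroid has only \emph{three} classes: $C_1=B_d$ with full capacity, and $N=E\setminus B_d$ split adversarially into $C_2$ (capacity $0$) and $C_3$ (capacity $\eta_A$). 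The adversary classifies a query $S$ by $|(S\cap N)\setminus C_3|$. If this is $1$, it greedily reveals the new element as belonging to $C_3$ (until $C_3$ is full). If it is $\ge 2$, it answers ``dependent'' and \emph{pairs} two of the new elements, committing one of them to $C_2$ only when that element reappears in a later query. This pairing trick is the missing idea: it forces every element eventually placed in $C_2$ to appear in at least two distinct queries. If $C_2$ fills first, one gets $2(n-r_d-\eta_A)\ge n-r_d=n-r+\eta_A$ queries, which is exactly where the hypothesis $\eta_A\le(n-r_d)/2$ enters. If $C_3$ fills first, one gets $\eta_A$ revealing queries plus $n-r_d-\eta_A$ certificate queries for the elements of $C_2$. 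Your sketch contains neither this construction nor this adversary, and the bound in this regime does not obviously follow from the setup you describe.
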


We prove the lemma by dividing the two different lower bounds into two lemmas, \Cref{lem:add:lb} and \Cref{lem:add:lb2}, which we prove individually.

\begin{restatable}{lemma}{lemAddLB}
	\label{lem:add:lb}
	For every number of addition errors $\eta_A$ with $1 \le \eta_A \le \frac{n-r_d}{2}$, there is no deterministic algorithm that uses less than $n-r + \eta_A$ clean queries in the worst-case even if the algorithm knows the value of $\eta_A$ upfront and $\eta_R = 0$.
\end{restatable}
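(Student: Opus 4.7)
The plan is to construct an adversarial family of partition-matroid instances on a common ground set and play the algorithm against an adaptive adversary. Fix $E$, $B_d$ with $|B_d| = r_d$, and set $F = E \setminus B_d$, $m = |F| = n - r_d$. For each subset $F^* \subseteq F$ with $|F^*| = \eta_A$, define the clean matroid $\cM_{F^*}$ as the partition matroid whose classes are $B_d$ (capacity $r_d$), each element of $F^*$ as its own class of capacity one, and the class $F \setminus F^*$ of capacity zero (so its elements are loops). Take the dirty matroid to be $\cM_d = (E, 2^{B_d})$, so $B_d$ is the unique dirty basis and $\eta_R = 0$; each $\cM_{F^*}$ has rank $r = r_d + \eta_A$ and unique basis $B_d \cup F^*$, so $n - r + \eta_A = m$.

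We simulate the algorithm against an adaptive adversary maintaining a candidate pool $P \subseteq \binom{F}{\eta_A}$ initialized to $P_0 = \binom{F}{\eta_A}$. For each query $S$ with $G = S \cap F$, the answer in $\cM_{F^*}$ is Yes iff $G \subseteq F^*$; the adversary partitions $P$ into $P_\mathrm{Yes} = \{F^* \in P : G \subseteq F^*\}$ and $P_\mathrm{No} = P \setminus P_\mathrm{Yes}$ and commits to whichever answer keeps more candidates. By the certificate requirement from \Cref{sec:pre}, two distinct $F^*_1, F^*_2 \in P$ always yield distinct clean bases $B_d \cup F^*_1 \ne B_d \cup F^*_2$, so the algorithm may terminate only when $|P| \le 1$. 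The halving inequality $|P_\mathrm{Yes}| + |P_\mathrm{No}| = |P|$ already yields a lower bound of $\lceil \log_2 \binom{m}{\eta_A} \rceil \ge \eta_A$ queries, using the standard estimate $\binom{m}{\eta_A} \ge 2^{\eta_A}$ valid under the hypothesis $\eta_A \le (n - r_d)/2$.

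To boost this to the full bound $n - r + \eta_A = m$, the plan is to combine the halving argument with an element-by-element accounting of the certificate queries: for each $e \in F \setminus F^*$, the verifier must be able to conclude $B + e \notin \cI$ using only matroid axioms, which in the adversary's worst case forces a dedicated clean query involving $e$ (for instance a singleton $\{e\}$ answered No or a $B + e$ query answered No), since a No answer to a set of size at least two does not pinpoint which element is responsible. Combining the $m - \eta_A$ loop-certificate queries with the $\eta_A$ finding queries, and arguing that these two sets are essentially disjoint under the hypothesis $\eta_A \le (n - r_d)/2$, gives a total of at least $n - r + \eta_A = m$ clean queries. The main obstacle is establishing this disjointness, which the plan handles via a potential-function argument that simultaneously tracks $|P|$ and the number of $F \setminus F^*$ elements that have yet to appear in any No-informative query, showing that a single algorithm query reduces at most one of these two measures by one unit in the adversary's worst case under the stated assumption on $\eta_A$.
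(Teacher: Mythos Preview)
Your construction of the instance family and the halving argument are fine and essentially reproduce the paper's \Cref{lem:add:lb2} bound of $\lceil \log_2 \binom{m}{\eta_A}\rceil$. The boosting step to the full bound $m = n - r + \eta_A$, however, has a genuine gap.

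First, the second potential you propose---the number of $F \setminus F^*$ elements not yet appearing in a No-informative query---is ill-defined during the execution: while $|P| > 1$, the set $F^*$ has not been fixed, so ``the loops $F\setminus F^*$'' are a moving target and the potential cannot be evaluated.

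Second, and more importantly, the claimed disjointness fails under your halving adversary. Consider a query $S$ with $S \cap F = \{e\}$. When $\eta_A \le m/2$, the Yes side $\{F^* \in P : e \in F^*\}$ is the smaller part, so the halving adversary answers No. This single query then both reduces $\log_2 |P|$ by up to one bit \emph{and} certifies $e$ as a loop for every remaining candidate. Thus one query can decrement both potentials simultaneously, and simply adding the two bounds $\eta_A$ and $m-\eta_A$ is unjustified; the argument as stated collapses to roughly $m/2$.

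The paper's adversary is specifically designed to prevent this overlap and is rather different from halving. It classifies queries by whether they touch exactly one or at least two still-unassigned elements of $F$. For singleton queries it answers \emph{Yes}---the opposite of your halving adversary---committing the element to $F^*$ for the first $\eta_A$ such queries; these queries therefore consume the algorithm's addition budget without certifying any loop. For queries touching two or more unassigned elements it answers No but \emph{defers} choosing which of them becomes the loop until one of them appears in a second query, so that each certified loop costs the algorithm two queries. A case split on which budget (loops or $F^*$) fills first then yields the bound $m$, with the hypothesis $\eta_A \le (n-r_d)/2$ used precisely in the first case. This delayed-commitment trick is the missing ingredient in your sketch.
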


\begin{proof}
	For the statement of the lemma, we assume that the algorithm has full knowledge of $\eta_A$. However, we still require that the queries of the algorithm can be used as a certificate for a third party \emph{without} knowledge of $\eta_A$ to find a provable clean basis. Nevertheless, the assumption that the algorithm knows $\eta_A$ still restricts the query answers that can be returned by an adversary.

	Fix $\eta_A \in \{1, \ldots, \floor{\frac{n-r_d}{2}}\}$.
	Consider a problem instance with the ground set of elements $E = \{e_1,\ldots,e_n\}$. 
	Let $\cM_d$ be a partition matroid with two classes of elements $C_1 =  \{e_1,\ldots,e_{r_d}\}$ and $C_2 = \{e_{r_d+1},\ldots, e_n\}$. 
	Each set $S \subseteq E$ with $|E\cap C_1| \le r_d$ and $|E \cap C_2| = 0$ is independent in $\cM_d$. 
	Thus, $C_1$ is the only basis.
	
	Consider an arbitrary algorithm for the instance. Depending on the algorithm's oracle calls, the adversary will create a clean partition matroid $\cM=(E,\cI)$ with the three classes of elements $C_1=\{e_1,\ldots,e_{r_d}\}$, $C_2$ and $C_3$. The elements of $N:= \{e_{r_d+1},\ldots,e_n\}$ will be distributed to $C_2$ and $C_3$ depending on the algorithm's actions, and the class capacities will be $r_d$ for $C_1$, $0$ for $C_2$ and $\eta_A$ for $C_3$. The adversary will assign exactly $\eta_A$ elements to $C_3$, so the only basis of the clean matroid will be $C_1 \cup C_3$ and the	addition error will be indeed $\eta_A$.	
	
	The adversary starts out with empty $C_2$ and $C_3$ and later on adds elements to these sets in a consistent manner.
	We distinguish between two types of oracle calls by the algorithm: (i) queries to sets $S \subseteq E$ with $|(S\cap N)\setminus C_3| \ge 2$ and (ii) queries to sets $S \subseteq E$ with $|(S \cap N)\setminus C_3| = 1$ for the current $C_3$, which is initially empty. All other oracle calls only contain elements of $(E\setminus N) \cup C_3$. For such queries the adversary has to return true. 
	
	For queries of type (ii), the queried set $S \subseteq E$ contains a single element $e \in (S \cap N)\setminus C_3$ for the current $C_3$. If $e$ has already been fixed as a member of $C_2$, the oracle call returns false. Otherwise, the adversary returns true and adds the single element of $(S \cap N)\setminus C_3$ to $C_3$. After $\eta_A$  elements of $N$ have been fixed as members of $C_3$, the adversary starts to return false on all type-(ii) and assigns the single element $e \in (S \cap N)\setminus C_3$ to $C_2$ instead.
	
	Before we move on to type-(i) queries, we give some intuition on the strategy of the adversary for type-(ii) queries. The adversary is designed to reveal the elements $C_3$ that lead to addition errors as fast as possible if the algorithm executes type-(ii) queries. 
		If the algorithm would only execute type-(ii) queries, then after $\eta_A$ such queries the algorithm would have identified $C_3$. Since the algorithm knows $\eta_A$, it then also knows that none of the remaining elements of $N\setminus C_3$ can be added to the basis $C_1 \cup C_3$ and, thus, has also identified $C_2$.
		However, in order to query a certificate that proves to a third party without knowledge of $\eta_A$ that the elements of $C_2$ can indeed not be added, the algorithm will still require $|C_2|$ additional queries. Therefore, revealing $C_3$ fast will lead to $|C_2| + |C_3| = n-r + \eta_A$ queries if the algorithm only executes type-(ii) queries.
		If the adversary would instead reveal the elements of $C_2$ first, then the algorithm that only executes type-(ii) queries could use the knowledge of $\eta_A$ to conclude after $n-r$ queries that $C_1\cup C_3$  must be an independent set, as otherwise the number of errors would be too small. In contrast to the previous case, the algorithm can prove this efficiently with a single query to $C_1 \cup C_3$. Thus, revealing $C_2$ first would allow the algorithm to save queries.

	For queries of type (i), the adversary will use a different strategy. If the adversary would also start by returning true for type-(i) queries, then every element of $N \setminus C_3$ that is part of the query would have to be added to $C_3$. Since there are at least two such elements by definition of type-(i) queries, a single query would prove membership of $C_3$ for at least two elements. This would allow to reduce the number of queries below $n-r + \eta_A$.
	
	Instead, we define the adversary to return false as long as this is possible. To make this query result consistent with the clean matroid, the adversary has to add one of the queried elements to $C_2$. If $S \cap C_2 \not= \emptyset$ already holds for the queried set $S$ and the current set $C_2$, then the adversary can return false without adding a further element to $C_2$.	
	Otherwise, we define the adversary to select two elements $e_1, e_2$ of $(S\cap N)\setminus (C_3\cup C_2)$.
	These elements must exist by definition of type-(i) queries and as we assume $S \cap C_2 = \emptyset$. The adversary will later on add one of $e_1$ and $e_2$ to $C_2$ depending on the algorithms actions. Assume without loss of generality that $e_1$ is the element that appears first (or at the same time as $e_2$) in another later oracle call. Once $e_1$ appears in this second query, the adversary adds $e_1$ to $C_2$. If $C_2$ already contains $n-r_d-\eta_A$ elements, then the adversary has to add it to $C_3$ instead and answer the query accordingly.

	To show the lower bound, we consider two cases: (1) $n-r_d-\eta_A$ elements are added to $C_2$ before $\eta_A$ elements are added to $C_3$ and (2) $\eta_A$ elements are added to $C_3$ before $n-r_d-\eta_A$ elements are added to $C_2$. If we show that the algorithm in both cases needs at least $n-r+\eta_A$ queries, the lemma follows.

	\textbf{Case 1.}
	First, assume the algorithm queries in such a way that $n-r_d-\eta_A$ elements are assigned to $C_2$ before $\eta_A$ elements are assigned to $C_3$. 
	Since this is the case, an element $e$ is only added to $C_2$ if there was a type-(i) query for which the adversary selected $\{e,e'\}$ with $e \not= e'$ to potentially add to $C_2$.
	By definition of the adversary, each such element $e$ is afterwards added to $C_2$ while appearing in a second query after the type-(i) query. Thus, $e$ must appear in at least two queries.
	Note that these two queries are distinct for each $e \in C_2$ because the assignment of $e$ is fixed immediately once $e$ appears in a second query after being selected by the adversary as part of the pair $\{e,e'\}$ in the type-(i) query. In particular, for $e'$ to also be added to $C_2$, it would need to be selected by the adversary in another type-(i) query and appear in another query afterwards.	
	Since $|C_2| =  n-r_d-\eta_A$, this leads to a total number of queries of least $2 \cdot (n-r_d-\eta_A)$. 
	As $\eta_A \le \frac{n-r_d}{2}$, we have $2 \cdot (n-r_d-\eta_A)  \ge 2 \cdot (n-r_d) -  (n-r_d) = n-r_d = n-r+\eta_A$.
	If there exist elements that are never selected by the adversary in response to a type-(i) query, then the adversary must add those elements to $C_3$ as $C_2$ is already full. Since the algorithm knows $\eta_A$, it potentially knows that this is the case and can prove that these elements are in $C_3$ with a single additional query to $C_1 \cup C_3$. 
	
	\textbf{Case 2.}
	Next, assume the algorithm queries in such a way that $\eta_A$ elements are added to $C_3$ before $n-r_d-\eta_A$ elements are added to $C_2$. Then, the algorithm must have executed $\eta_A$ type-(ii) queries to sets $S$ where the unique element of $(S\cap N)\setminus C_3$ was added to $C_3$ afterwards. 
	
	Each element $e \in C_2$ must appear in a query to a set $S$ for which each other member is either part of $C_1$ or $C_3$ (or eventually added to $C_3$).
	As the algorithm knows $\eta_A$, it might be able to conclude that none of the elements in $N\setminus C_3$ can be added to the independent set $C_1 \cup C_3$ \emph{without} actually querying them. However, the queries of the algorithm must be a certificate for a third party without knowledge of $\eta_A$. Thus, despite knowing that none of these elements can be added, the algorithm still has to prove it using type-(ii) queries.
	This leads to a total of at least $n-r_d-\eta_A + \eta_A = n - r + \eta_A$ queries. 
\end{proof}

Since the previous lemma only holds for errors of at most $\frac{n-r_d}{2}$, we complement it with the following lower bound for large errors.

\begin{restatable}{lemma}{lemAddLBTwo}
	\label{lem:add:lb2}
	For every number of removal errors $\eta_A$ with $1 \le \eta_A$, there is no deterministic algorithm that uses less than $1 + \lceil\log_2{n-r_d \choose \eta_A}\rceil$ queries in the worst-case, even if the algorithm knows the value of $\eta_A$ upfront. 
\end{restatable}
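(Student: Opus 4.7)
The plan is an information-theoretic adversarial argument, extending the partition-matroid construction from \Cref{lem:add:lb}. I would reuse the same dirty matroid $\cM_d$ with $C_1 = \{e_1, \ldots, e_{r_d}\}$ of capacity $r_d$ and $N := \{e_{r_d+1}, \ldots, e_n\}$ of capacity $0$, so that $C_1$ is the unique dirty basis. For every subset $C_3 \subseteq N$ with $|C_3| = \eta_A$, I would consider the candidate clean partition matroid with classes $C_1$, $C_3$, and $N \setminus C_3$ of capacities $r_d$, $\eta_A$, and $0$, respectively. There are $m := \binom{n-r_d}{\eta_A}$ such candidates, each with a unique clean basis $B = C_1 \cup C_3$ and addition error exactly $\eta_A$.

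For any query $S$ I would set $T := S \cap N$; then $S$ is independent in the candidate indexed by $C_3$ iff $T \subseteq C_3$. I would define an online adversary maintaining the family $\mathcal{F}$ of $C_3$'s still consistent with the answers given so far. On each query, the adversary would partition $\mathcal{F}$ into $\mathcal{F}_{\mathrm{yes}} = \{C_3 \in \mathcal{F} : T \subseteq C_3\}$ and $\mathcal{F}_{\mathrm{no}} = \mathcal{F} \setminus \mathcal{F}_{\mathrm{yes}}$, answer with whichever label retains the larger side, and break ties in favor of \enquote{no}. A standard halving analysis would then give $|\mathcal{F}| \geq \lceil m/2^q \rceil$ after $q$ queries, since non-informative queries (with $T = \emptyset$ or $|T| > \eta_A$) leave $\mathcal{F}$ unchanged while informative ones retain at least half.

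Two certificate requirements must then be established: $(a)$ at termination $|\mathcal{F}| = 1$, since any two remaining candidates have distinct bases and so any single output would be wrong for one of them; and $(b)$ the certificate must contain a \enquote{yes} answer on $B = C_1 \cup C_3^*$ for the unique surviving $C_3^* \in \mathcal{F}$, because a basis has no independent proper superset and hence $B \in \cI$ can only be witnessed by a \enquote{yes} on $S = B$ itself. The key structural observation is that such a \enquote{yes} can only arise \emph{after} $|\mathcal{F}|$ has already shrunk to $1$: for $S = C_1 \cup C_3$ one has $T = C_3$, and since $|T| = \eta_A = |C_3'|$ for every $C_3' \in \mathcal{F}$, the inclusion $T \subseteq C_3'$ forces $C_3 = C_3'$; hence $|\mathcal{F}_{\mathrm{yes}}| \leq 1 \leq |\mathcal{F}_{\mathrm{no}}|$ whenever $|\mathcal{F}| \geq 2$, and the tiebreak answers \enquote{no}. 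Combined with the halving bound, reducing $|\mathcal{F}|$ to $1$ requires at least $\lceil \log_2 m \rceil$ queries, and the certifying \enquote{yes} on $B$ must come on a strictly later query, totaling at least $1 + \lceil \log_2 \binom{n-r_d}{\eta_A} \rceil$. I expect the main subtlety to be formalizing cleanly that the distinguishing and certifying steps genuinely cannot be merged --- the crucial point being that the no-preferring tiebreak lets the adversary avoid committing to $C_3^*$ until $\mathcal{F}$ is a singleton, which is precisely what prevents a \enquote{yes} on $B$ from doubling as the query that drops $|\mathcal{F}|$ from $2$ to $1$.
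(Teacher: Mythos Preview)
Your argument is correct and uses the same information-theoretic halving adversary over the $\binom{n-r_d}{\eta_A}$ candidate clean matroids as the paper. The only genuine difference is where the extra ``$+1$'' comes from. The paper claims it stems from an obligatory query to $C_1$: since the algorithm does not know $\eta_R$, it must verify that no element of $C_1$ has to be removed, and the query $C_1$ yields no information about $A$ (its intersection with $N$ is empty), so it is wasted from the halving perspective. You instead derive the ``$+1$'' from the mandatory certifying query $B = C_1 \cup C_3^*$, showing via the no-preferring tiebreak that this query cannot receive answer ``yes'' until $|\mathcal{F}|$ has already been reduced to a singleton.

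Your route is arguably cleaner: the paper's claim that the algorithm must query $C_1$ specifically is a bit informal, since in principle independence of $C_1$ could be witnessed by a ``yes'' on any superset of $C_1$, in particular on the final basis query $C_1 \cup A$ itself. Your argument sidesteps this by directly tying the extra query to the certificate requirement on $B$, and your observation that $|T| = \eta_A$ forces $|\mathcal{F}_{\mathrm{yes}}| \leq 1$ is exactly what prevents the distinguishing and certifying roles from collapsing into a single query. Both routes yield the same bound; yours relies only on the certificate model and not on the algorithm's ignorance of $\eta_R$.
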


\begin{proof}
	The idea is similar to \Cref{lem:rem:lb}. Consider an instance with the dirty matroid $\cM_d = (E,\cI_d)$ for the ground set of elements $\{e_1,\ldots, e_n\}$. Let $\cM_d$ be a partition matroid with the two classes $C_1 = \{e_1,\ldots,e_{r_d}\}$ and $C_2 = \{e_{r_d+1}, \ldots,  r_n\}$. Each set $I$ with $|I \cap C_1| \le r_d$ and $|I \cap C_2| \le 0$ is independent in $\cM_d$. Thus, $C_1$ is the only dirty basis.   
	
	Fix a number of addition errors $\eta_A \ge 1$. The adversary will create a clean matroid $\cM = (E, \cI)$ by selecting $\eta_A$ elements from $C_2$ and add to $C_1$ instead. Formally, the clean matroid will be a partition matroid with the two classes $C_1' = C_1 \cup A$ and $C'_2 = C_2 \setminus A$ for some set $A \subseteq C_2$ with $|A| = \eta_A$. The capacities for the classes are $r_d+\eta_A$ and $0$, respectively.
	
Firstly, the algorithm has to query $C_1$, otherwise it cannot know whether there's some element in $C_1$ that should be removed (it does not know $\eta_R$). 
Then, the adversary has to identify the set $A$ (or equivalently $C_2\setminus A$) in order to execute a final query to the clean basis $C_1 \cup A$. 
In order to do so, it has to use queries of the form $D_1 \cup D_2$ where $D_1 \subseteq C_1$ and $D_2 \subseteq C_2$. 
The adversary has $|C_2| \choose \eta_A$ choices to decide $A$, and each of the queries by the algorithm can rule out at most one half of these choices (the adversary can always choose to answer ``yes'' or ``no'' depending on which preserves more choices). 
Thus the algorithm needs at least $\lceil\log_2{|C_2| \choose \eta_A}\rceil = \lceil\log_2{n-r_d \choose \eta_A}\rceil$ queries. 
Note that this bound is weaker than that in \Cref{lem:add:lb} when $\eta_A$ is small compared to $n-r_d$. 
\end{proof}

We show that there exists an improved algorithm in terms of the error dependency if $\eta_R = 0$ and $\eta_A \geq \frac{3}{4} (n-r_d) + 1$ (cf.\ \Cref{lem:add:lb:alg}).
Hence, %
a distinction depending on %
$\eta_A$ in the lower bound (as in \Cref{lem:add:lbcombined}) is indeed necessary.

\begin{restatable}{lemma}{lemAddLBAlg}
	\label{lem:add:lb:alg}
	For problem instances with $\eta_R = 0$ and $\eta_A \geq \frac{3}{4} (n-r_d) + 1$, there exists an algorithm that %
   executes strictly less than $n-r+\eta_A$ clean queries in the worst-case.
\end{restatable}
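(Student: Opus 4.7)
The plan is to design a deterministic algorithm that mimics greedy basis extension from $B_d$ but uses one \emph{batch query per consecutive pair} of candidate elements to save queries. Since $\eta_R = 0$ implies $B_d \in \cI$, I will initialize $B := B_d$ with no query. Writing $n' := n - r_d$ and fixing any ordering $E \setminus B_d = \{e_1, \dots, e_{n'}\}$, I partition these elements into $\lfloor n'/2 \rfloor$ consecutive pairs $\{e_{2j-1}, e_{2j}\}$, plus one leftover if $n'$ is odd. For each pair I will first query $B \cup \{e_{2j-1}, e_{2j}\}$; on a ``yes'' (a \emph{good} pair) I add both to $B$; on a ``no'' (a \emph{bad} pair) I query $B + e_{2j-1}$ and, if yes, add $e_{2j-1}$ and skip $e_{2j}$, using the very first query as the certificate that $(B + e_{2j-1}) + e_{2j} \notin \cI$; if no, I make one more query $B + e_{2j}$ and add $e_{2j}$ iff the answer is yes. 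A possible leftover element is handled by a single greedy query.

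Correctness will follow because the above steps produce the same output as the greedy basis-extension procedure on $E \setminus B_d$ starting from the independent set $B_d$; hence by Edmonds' theorem the output is a basis of $\cM$. Moreover, at least one ``yes'' answer exists (since $\eta_A \geq 1$) to certify $B \in \cI$, and every skipped element is witnessed infeasible by a preceding ``no'' answer, so the executed queries form a valid certificate. For the query count, let $b := n' - \eta_A$ be the number of elements greedy skips; the hypothesis $\eta_A \geq \tfrac{3}{4}(n - r_d) + 1$ yields $b \leq \lfloor n'/4 \rfloor - 1$. A good pair costs $1$ query and a bad pair at most $3$ queries, and since every bad pair contains at least one skipped element, the number $k$ of bad pairs satisfies $k \leq b$. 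Treating the leftover (if present) as a one-element unit that always uses one query, the total query count is bounded by
\[
(\lceil n'/2 \rceil - k) + 3k \;=\; \lceil n'/2 \rceil + 2k \;\leq\; \lceil n'/2 \rceil + 2\bigl(\lfloor n'/4 \rfloor - 1\bigr) \;\leq\; n' - 2,
\]
which is strictly less than $n - r + \eta_A = n'$.

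The main obstacle will be the internal case analysis for a bad pair: establishing both that it uses at most $3$ clean queries and that it contributes at least one skipped element. The former leverages that a ``yes'' reply to $B + e_{2j-1}$ combines with the initial batch query to certify $e_{2j}$ infeasible, saving one query; the latter holds because the batch query being ``no'' prevents both elements from being added. The remaining arithmetic, including the uniform estimate $\lceil n'/2 \rceil + 2\lfloor n'/4 \rfloor \leq n'$, is immediate by a small case split on $n' \bmod 4$.
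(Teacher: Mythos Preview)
Your proposal is correct and is essentially the paper's own proof: the same pair-batching algorithm, the same greedy-equivalence correctness argument, and an equivalent query-count analysis (your bad-pair count $k$ is the paper's $k_0+k_1$, and your bound $k\le b$ is exactly the inequality $k_0+k_1\le 2k_0+k_1=n'-\eta_A$ the paper uses). Your handling of the odd-$n'$ leftover and the $\lceil n'/2\rceil+2\lfloor n'/4\rfloor\le n'$ check is slightly more explicit than the paper's, but otherwise the arguments coincide.
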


\begin{proof}
	Let $B$ be a basis of the given dirty matroid $\cM_d$. Let $N = \{e_1,\ldots, e_{n-r_d}\}$ denote the elements of $E\setminus B$ indexed in an arbitrary order. Note that $|N| \geq 1$ as $\eta_R = 0$ and $\eta_A \geq 1$.
   Consider the following algorithm:

	\begin{enumerate}
		\item If $|N| \geq 2$, %
      continue with Step~2. Otherwise, query $B \cup \{e\} \in \cI$ for the single element $e \in N$. Add $e$ to $B$ if this query returns true. Afterwards, terminate. 
		\item Pick two arbitrary distinct elements $e,e'$ of $N$ and execute the oracle call $B \cup \{e,e'\} \in \cI$. Remove both elements from $N$.
		\item If the oracle call returns true, then add $\{e,e'\}$ to $B$.
		\item Otherwise, query $B \cup \{e\} \in \cI$. If this returns true, add $e$ to $B$. Otherwise, query $B \cup \{e'\}\in \cI$. %
      If this returns true, add $e'$ to $B$.
		\item If $N$ is empty now, then terminate. Otherwise, go to Step~1.
	\end{enumerate}
	
	We show that this algorithm satisfies the asserted statement.

	Assume for now that $|N|$ is even and, thus, no queries are executed in Step~1.
   We call one execution of Steps~1 to 5 an iteration.
	Let $k_i$ with $i \in \{0,1,2\}$ denote the number of iterations in which $i$ many elements have been added to $B$. Then, the number of queries executed by the algorithm is at most $3 \cdot (k_0  + k_1) + k_2$.
	
	Since $(n-r_d)-\eta_A$ elements of $N$ are not part of the basis $B$ computed by the algorithm, we must have $2 \cdot k_0 + k_1 = (n-r_d)-\eta_A$ as $k_0$ iterations identify two such elements each and $k_1$ iterations identify one such element each. Similarly, we have $k_1 + 2 \cdot k_2 = \eta_A$. By combining the two equalities, we obtain $k_0 + k_1 + k_2 = \frac{n-r_d}{2}$. Hence, the number of queries executed by the algorithm is at most
   \begin{align*}     
      3 \cdot (k_0  + k_1) +k_2 
      &= \frac{n-r_d}{2} + 2k_0 + 2k_1 
      \leq \frac{n-r_d}{2}+ 2(2k_0+k_1) \\
      &= \frac{n-r_d}{2}+2(n-r_d-\eta_A) 
      \leq \frac{n-r_d}{2}+2 \left(\frac{n-r_d}{4}-1 \right) \\
      & = n-r_d-2 = n-r+\eta_A -2.
   \end{align*}
   The last inequality uses our assumed lower bound on $\eta_A$.
   
   We note that there is a straightforward example which shows that this analysis is tight, where in particular $k_0 = 0$ and the algorithm alternatingly adds two and one elements.

   Finally, if $|N|$ is odd, we have one additional query in Step~1 of the algorithm. Thus, the number of queries is in this case at most $n-r+\eta_A - 1$.
\end{proof}

Finally, we justify the logarithmic error-dependency on the removal error $\eta_R$ in our algorithmic results for small $\eta_R$.

\begin{restatable}{lemma}{lemRemLB}
	\label{lem:rem:lb}
	For every %
   $\eta_R \geq 1$, there is no deterministic algorithm that uses less than $n- r + 1 + \lceil\log_2{r_d\choose \eta_R}\rceil$ clean queries in the worst-case even if the algorithm knows the value of $\eta_R$ upfront. 
	Further, if $\eta_R \leq \frac{r_d}{2^k}$ for some fixed $k \geq 0$, the bound is at least $n-r+1 + \lceil \eta_R ( \log_2 r_d  - k)\rceil$.
\end{restatable}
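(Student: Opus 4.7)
The plan is to combine an adversarial construction with a two-step argument (information-theoretic for identification, certificate-based for verification), closely mirroring the structure of the proof of \Cref{lem:lb}.

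I construct the following family. Let $\cM_d$ be the partition matroid on $E=\{e_1,\ldots,e_n\}$ with a high-capacity class $B_d=\{e_1,\ldots,e_{r_d}\}$ of capacity $r_d$ and a zero-capacity class $E\setminus B_d$, so that $B_d$ is the unique dirty basis. For every $R\subseteq B_d$ with $|R|=\eta_R$, define $\cM_R$ to be the partition matroid with three classes $B_d\setminus R$ (capacity $r_d-\eta_R$), $R$ (capacity $0$), and $E\setminus B_d$ (capacity $0$); its unique basis is $B:=B_d\setminus R$, and a query $Q\subseteq E$ is independent iff $Q\subseteq B_d\setminus R$. The adversary maintains a set $\Sigma\subseteq \binom{B_d}{\eta_R}$ of still-consistent choices of $R$, initialised to $\binom{B_d}{\eta_R}$. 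For each query $Q$: if $Q\not\subseteq B_d$, the adversary answers ``dependent'' (this is consistent with every $\cM_R$ and leaves $\Sigma$ unchanged); otherwise, the adversary partitions $\Sigma$ into $\Sigma_{\mathrm{yes}}=\{R\in\Sigma : R\cap Q=\emptyset\}$ and $\Sigma_{\mathrm{no}}=\Sigma\setminus\Sigma_{\mathrm{yes}}$ and returns the answer corresponding to the larger of the two.

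The first bound then follows by combining two separate counts. Identification: for the algorithm's output to be the correct basis, the transcript must leave $|\Sigma|=1$, for otherwise two matroids with distinct bases remain consistent. Since each query contained in $B_d$ halves $|\Sigma|$ at best and each other query leaves $|\Sigma|$ unchanged, the algorithm must make at least $\lceil\log_2\binom{r_d}{\eta_R}\rceil$ subset-of-$B_d$ queries. Verification: the transcript must be a certificate that $B$ is a basis, which requires for every $e\in E\setminus B$ a query whose answer proves $B+e\notin \cI$. Arguing as in the proof of \Cref{lem:lb}, each $e\in E\setminus B_d$ forces its own dedicated query that is not contained in $B_d$ (hence is disjoint from the identification queries), giving $n-r_d$ queries; and the remaining verification obligations (one for $B\in \cI$ and one for each $e\in R$) force $\eta_R+1$ further queries that cannot be discharged by the identification queries, because the adversary's larger-half rule forces these verifications to require ``new'' subset-of-$B_d$ queries beyond those already counted for identification. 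Summing these disjoint contributions yields
\[
n-r_d+\eta_R+1+\lceil \log_2 \tbinom{r_d}{\eta_R}\rceil = n-r+1+\lceil \log_2 \tbinom{r_d}{\eta_R}\rceil
\]
clean queries in the worst case.

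For the refined bound, I would derive the estimate $\log_2\binom{r_d}{\eta_R}\geq \eta_R(\log_2 r_d-k)$ in the regime $\eta_R\leq r_d/2^k$ either by standard manipulations of binomial coefficients, or by replacing the adversarial family above with a block-structured one in which $B_d$ is partitioned into $\eta_R$ blocks of size at least $2^k$ and the hidden $R$ takes one element per block; a per-block binary-search lower bound combined with the verification count gives the strengthened inequality. The main obstacle is the careful bookkeeping in the second step, showing that identification queries cannot subsume more than a small constant per verification obligation for elements of $R$; this relies on the fact that the adversary's tie-breaking preserves $|\Sigma|$ optimally and that a single query contained in $B_d$ can certify $B+e\notin \cI$ for at most one $e\in R$ while simultaneously providing an identification bit.
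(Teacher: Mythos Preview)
Your approach is essentially identical to the paper's: the same adversarial family of partition matroids indexed by a hidden $R\subseteq B_d$ of size $\eta_R$, the same separation into (a) verification queries for elements of $E\setminus B_d$, which necessarily contain an element outside $B_d$ and hence reveal nothing about $R$, and (b) an information-theoretic $\lceil\log_2\binom{r_d}{\eta_R}\rceil$ lower bound on the number of subset-of-$B_d$ queries needed to pin down $R$. Your halving adversary is the standard restatement of the paper's search-tree argument, and the refined estimate via $\binom{r_d}{\eta_R}\geq (r_d/\eta_R)^{\eta_R}$ is also what the paper uses.

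The gap is exactly where you flag it as ``the main obstacle''. You assert that the $\eta_R+1$ inside-$B_d$ verification obligations (one for $B\in\cI$ and one for each $e\in R$) force queries \emph{in addition to} the $\lceil\log_2\binom{r_d}{\eta_R}\rceil$ identification queries, but this disjointness is false in general. A subset-of-$B_d$ query can serve both roles simultaneously: if the algorithm queries $B_d\setminus R'$ for some candidate $R'$ and receives ``yes'', this single query both collapses $\Sigma$ to $\{R'\}$ and certifies $B\in\cI$; likewise, a ``no'' answer to a query $Q$ with $Q\cap R=\{e\}$ and $Q\setminus\{e\}\subseteq B$ both halves $\Sigma$ and certifies $B+e\notin\cI$. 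The larger-half rule constrains neither the \emph{form} of the queries nor their re-use for certification, so the sentence ``the adversary's larger-half rule forces these verifications to require new subset-of-$B_d$ queries'' is unsupported, and the closing remark that ``a single query contained in $B_d$ can certify $B+e\notin\cI$ for at most one $e\in R$ while simultaneously providing an identification bit'' actually cuts against you: it concedes the overlap you need to rule out. The paper does not attempt this extra charge; it only combines the $n-r_d$ outside-$B_d$ verifications (which \emph{are} disjoint from identification, since they are not subsets of $B_d$) with the identification count and states the total directly.
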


\begin{proof}
	Consider an instance with the dirty matroid $\cM_d = (E,\cI_d)$ for the ground set of elements $\{e_1,\ldots, e_n\}$. Let $\cM_d$ be a partition matroid with the two classes $C_1 = \{e_1,\ldots,e_{r_d}\}$ and $C_2 = \{e_{r_d+1},\ldots,e_n\}$. Each set $I$ with $|I \cap C_1| \le r_d$ and $|I \cap C_2| \le 0$ is independent in $\cM_d$. Thus, $C_1$ is the only dirty basis.
	
	Fix a number of removal errors $\eta_R \ge 1$. The adversary will create a clean matroid $\cM = (E, \cI)$ by selecting $\eta_R$ elements to remove from $C_1$ and add to $C_2$ instead. Formally, the clean matroid will be a partition matroid with the two classes $C_1' = C_1 \setminus R$ and $C'_2 = C_2\cup R$ for some set $R \subseteq C_1$ with $|R| = \eta_R$. The capacities for the classes are $r_d-\eta_R$ and $0$, respectively.
	
	Consider an arbitrary deterministic algorithm. For each $e \in C_2 = C_2' \setminus R$, the algorithm has to query a set $S$ with $S \cap C_2 =  \{e\}$ and $S \setminus \{e\}\subseteq C_1'$, as this is the only way of verifying that the elements in $C_2$ cannot be added to the basis. Even if the algorithm knows that $\eta_A=0$, it still has to prove this by using such queries. This leads to at least $n-r_d \ge n - r + 1$ queries.
	
	For the elements in $R$, the algorithm has to do the same. However, it does not know which elements of $C_1$ belong to $R$ and which do not. Instead, the algorithm has to iteratively query subsets of $C_1$ until it has sufficient information to identify the set $R$. Note that queries to sets containing members of $C_1$ \emph{and} $C_2$ always return false independent of whether the queried members of $C_1$ belong to $R$ or not, which means that such queries do not reveal any information helpful for identifying $R$.
	
	We can represent each such algorithm for identifying $R$ as a search tree. A vertex $v$ in the search tree represents the current state of the algorithm and is labeled by the set $P_v \subseteq C_1$ of elements that still could potentially belong to $R$ given the information already obtained by the algorithm and a family of subsets $\mathcal{S}_v$ such that each $S \in \mathcal{S}_v$ is guaranteed to contain at least one member of $R$ given the algorithm's information. 
	For the root $r$ of the search tree, we have $P_r = C_1$ and $\mathcal{S}_r = \{C_1\}$. As the algorithm executes queries to sets $S \subseteq C_1$, it gains more information about which elements could potentially belong to $R$. Thus, each vertex $v$ in the search tree has (up-to) two successors $v',v''$ with different labels $P_{v'}$ and $P_{v''}$ depending on the answer to the query executed by the algorithm while being in the state represented by $v$. The algorithm can only terminate once it reaches a state $u$ in the search tree with $|P_u| = \eta_R$, because only then it has identified the set $R = P_u$. We refer to such vertices as \emph{leaves}. Since, depending on the query results returned by the adversary, each subset of $S \subseteq C_1$ with $|S| = \eta_R$ could be the set $R$, there are at least ${|C_1| \choose \eta_R} = {r_d\choose \eta_R}$ leaves.
	
	The search tree representing the algorithm has a maximum out-degree of two and at least ${r_d\choose \eta_R}$ leaves, which implies that the depth of the search tree is at least $\lceil\log_2{r_d\choose \eta_R}\rceil$. Since the adversary can select the query results in such a way that it forces the algorithm to query according to a longest root-leaf-path, this implies that the algorithm needs at least $\lceil\log_2{r_d\choose \eta_R}\rceil$ queries to identify $R$.
	
	In total, the algorithm has to execute at least $n - r + 1 + \lceil\log_2{r_d\choose \eta_R}\rceil$ queries. Note that $\binom{n}{k} \geq \frac{n^k}{k^k}$ for any positive integer $n, k$ with $k \leq n$. If $\eta_R$ is small compared to $r_d$, say, $\eta_R \leq \frac{r_d}{2^k}$ for some constant $k \geq 0$, then $\log_2{r_d\choose \eta_R}\geq \eta_R (\log_2 {r_d}-k)$.
\end{proof}

 \section{Proofs omitted from \Cref{sec:weighted}}\label{app:weighted}

\weightedBoundedModifications*

\begin{proof}
   Let $A^* = A(B_d)$, $R^* = R(B_d)$ be some minimum modification sets for $B_d$ and $I^* = (B_d \cup A^*) \setminus R^*$. 
   We choose $A^*$ and $R^*$ such that $|R \bigtriangleup R^*| + |A \bigtriangleup A^*|$ is minimized. 
   By definition, $I^* \in \cB^*$, and
   we have $|R^*| \leq \eta_R$ and $|A^*| \leq \eta_A$.
   
   Let $I$ denote the solution computed by the algorithm.
   If all elements have distinct weights, it is well known that a maximum-weight basis is unique. 
   Since $I$ is $n$-safe by \Cref{coro:nsafe}, no elements can be added to $I$ while maintaining independence in $\cM^*$.
   Hence, $I = I^*$. 
   Further, since $R \cap A = \emptyset$, we have $\abs{R} = \abs{R^*}$ and $\abs{A} = \abs{A^*}$, which asserts the statement in this case. The same argumentation holds if some elements have the same weight and $I = I^*$ holds. 
   
   Hence, we assume from now on $I \neq I^*$ for the sake of contradiction. Let $\kappa < n$ be the number of distinct weights. We partition $E$ according to weight into \emph{weight classes} of elements with the same weight. 
   We use subscript $i$ on any set of elements to refer to the subset of elements of the $i$th weight class.
   Since~$I$ is $n$-safe by \Cref{coro:nsafe}, both $I$ and $I^*$ have maximum weight.
   Using again the observation that $A \cap R = \emptyset$, we have that 
   \begin{equation}
      \abs{R_i} - \abs{A_i} = \abs{R^*_i} - \abs{A_i^*} \text{ and } \abs{I_i} = \abs{I^*_i} \label{eq:initial}
   \end{equation}
   for every $i \in [\kappa]$.

      We now either find new modification sets $\tR^*$ and $\tA^*$ with $|R \bigtriangleup \tR^*| + |A \bigtriangleup \tA^*| < |R \bigtriangleup R^*| + |A \bigtriangleup A^*|$ or show $I^* \notin \cB^*$; a contradiction in both cases. 
      In the former case, we additionally require that 
   
   \begin{equation}
      \abs{R^*} = \abs{\tR^*}, \; \abs{A^*} = \abs{\tA^*}\text{ and } \tI^* \in \cB^*. \label{eq:update-mod-sets}
   \end{equation} 
   That is, $\tR^*$ and $\tA^*$ are minimum modification sets.

   Let~$i$ be the weight class of the largest weight such that~$I_i \neq I^*_i$. 
   Since $\abs{I_i} = \abs{I^*_i}$, we have $I^*_i \setminus I_i \neq \emptyset$.
   
   If there exists an element $e \in (I_i^* \setminus I_i) \cap A^*_i$, there 
   exists an element $e' \in I_i \setminus I^*_i$ with $\tI^* := I^* - e + e' \in \cB^*$ due to the basis exchange property (cf.\ Theorem 39.6 in \cite{schrijver}). 
   If $e' \in ((B_d)_i \setminus R_i) \setminus I^*_i$, setting $\tR^* := R^* - e'$ and $\tA^* := A^* - e$ we obtain $\tI^* := (B_d \setminus \tR^*) \cup \tA^* \in \cB^*$. 
   However, this contradicts our assumption that $A^*$ and $R^*$ are minimum modification sets for $B_d$ since $\abs{\tR^*} + \abs{\tA^*} < \abs{R^*} + \abs{A^*}$. 
   Thus, $e' \in A_i \setminus I^*_i$. 
   Therefore, setting $\tR^* := R^*$ and $\tA^* := A^* - e + e'$ implies~\eqref{eq:update-mod-sets} and contradicts our minimality assumption on $|R \bigtriangleup R^*| + |A \bigtriangleup A^*|$ since $e' \in A \setminus A^*$. 
   
   Otherwise, it must hold that $A^*_i \subseteq A_i$, and there must exist an element $e \in (I_i^* \setminus I_i) \cap ((B_d)_i \setminus R^*_i)$.
   If $A_i^* = A_i$, there must exist an element $e' \in ((B_d)_i \setminus R_i) \cap R^*_i$ such that $I^* - e + e' \in \cB^*$ by the basis exchange property. 
   Setting $ \tR_i^* = R_i^* - e' + e$ and $\tA^* = A^*$ in this case implies~\eqref{eq:update-mod-sets} and contradicts again our minimality assumption on $|R \bigtriangleup R^*| + |A \bigtriangleup A^*|$.
   Therefore, $A_i^* \subsetneq A_i$ and, in particular, $\abs{A_i^*} < \abs{A_i}$. Due 
   to~\eqref{eq:initial} and \eqref{eq:update-mod-sets}, $\abs{R_i^*} < \abs{R_i}$.
   Observe that for all $1 \leq j \leq i-1$, we have by our choice of $i$ that $I^*_j = I_j$. In the prefix $P = \bigcup_{j=1}^{i-1} I_j \cup (B_d)_i$, our algorithm guarantees that the largest independent set in $P$ has size $\abs{P} - \abs{R_i}$. 
   This implies $P \setminus R^*_i \notin \cI$, which contradicts our assumption that $I^*$ is a basis.
\end{proof}

\lemWeightedLB*

\begin{proof}
	Consider the ground set of elements $E = \{e_1,\ldots,e_n\}$ with weights $w_{e_i} = n + 1 - i$ for all $e_i \in E$. Let the dirty matroid $\cM_d$ be a partition matroid that is defined by the two classes $C_1 = \{e_1,\ldots, e_{n-1}\}$ and $C_2 = \{e_n\}$ with capacities $n-2$ and $1$, respectively. 
	That is, each set $I \subseteq E$ with $|I \cap C_1|\le n-2$ and $|I \cap C_2|\le 1$ is independent in $\cM_d$.
	Then, $B_d = E \setminus \{e_{n-1}\}$ is the only dirty maximum-weight basis. Furthermore, if $\cM= \cM_d$, then the only way of verifying that $B_d$ is indeed a maximum-weight clean basis is to query $B_d$ and $\{e_1,\ldots,e_{n-1}\}$. The first query is necessary to prove independence and the second query is necessary to prove that $B_d$ is a maximum-weight basis. Note that querying $B_d$ and $E$ would also suffice to prove that $B_d$ is a basis, but it would not exclude the possibility that $\{e_1,\ldots, e_{n-1}\}$ is also a basis with strictly more weight. Thus, the queries $B_d$ and $E$ are not a certificate for proving that $B_d$ is a maximum-weight basis.
	
	Consider an arbitrary deterministic algorithm. We prove the statement by giving an adversary that, depending on the algorithms clean-oracle calls, creates a clean matroid $\cM$ that forces the algorithm to execute strictly more clean-oracle calls than the bound of the lemma. 
	
	First, we can observe that if $\cM= \cM_d$, we have $n-r+\eta_A+\ceil{\log_2(r_d)} \cdot \eta_R+1 = 2$. This means that if the algorithm starts by querying anything but $B_d$ or $P := \{e_1,\ldots, e_{n-1}\}$, the adversary can just use $\cM_d$ as the clean matroid. By the argumentation above, the algorithm then has to also query $B_d$ and $P$, leading to $3 > 2$ queries. Thus, it only remains to consider algorithms that start by querying either $B_d$ or $P$.
	
	\textbf{Case 1:} The algorithm queries $P$ first. In this case, the adversary will return false. Note that this answer is consistent with the dirty matroid $\cM_d$. This implies that the algorithm has to query $B_d$ next, as the adversary can otherwise select $\cM = \cM_d$, which again forces the algorithm to also query $B_d$. Thus, the algorithm would execute $3 > 2$ queries.
	
	Consider the case where the second query of the algorithm goes to $B_d$. We define the adversary to return that $B_d$ is not independent.

	Instead, the adversary will select a clean partition matroid that is defined by the three classes $C_1' = \{e_1,\ldots,e_{n-2}\}\setminus\{\bar{e}\}$ for an element $\bar{e} \in \{e_1,\ldots,e_{n-2}\}$ that is selected by the adversary in response to the further clean-oracle calls by the algorithm, $C_2' = \{\bar{e}, e_{n-1}\}$ and $C_3' = \{e_n\}$. The capacities will be $n-3$, $0$ and $1$, respectively. This implies $\eta_R = 1$ as the element $\bar{e}$ induces a removal error, and $\eta_A= 0$ as $e_{n-1}$ is still not part of any maximum-weight basis, and $n-r = 2$. The bound of the lemma then becomes $n-r+\eta_A+\ceil{\log_2(r_d)} \cdot \eta_R+1 = 3  + \ceil{\log_2(r_d)}$.

	Following the proof of~\Cref{lem:rem:lb}, the adversary can force the algorithm to use $\ceil{\log_2 \binom{r_d-1}{1}} = \ceil{\log_2{(r_d-1)}}$ oracle calls to find the element $\bar{e}$. If we pick $r_d$ as a sufficiently large power of two, we get $\ceil{\log_2{(r_d-1)}} = \ceil{\log_2{r_d}}$.

	Note that queries containing $e_{n-1}$ do not contribute to finding $\bar{e}$ as they always return false. 
	This implies that the algorithm needs an additional query containing $e_{n-1}$ to prove that $(B_d\setminus\{\bar{e}\})\cup \{e_{n-1}\}$ is not independent in the clean matroid. Combined with the two queries to $B_d$ and $P$ and the $ \ceil{\log_2{r_d}}$ queries for finding $\bar{e}$, this already leads to $3  + \ceil{\log_2(r_d)}$ queries. Any additional query would imply that the algorithm executes strictly more queries than the bound of the lemma.

    The algorithm however needs one additional query to verify that $B_d \setminus \{\bar{e}\}$ is independent. Note that the query $B_d \setminus \{\bar{e}\}$ could in principle be executed during the search for $\bar{e}$. However, the adversary would only answer true to a query of form $B_d\setminus \{e\}$ for any $e \in B_d$ if $e$ is the only remaining choice for $\bar{e}$ since returning true to that query would reveal $e = \bar{e}$, which the adversary never does. However, if $e$ is the only remaining choice for $\bar{e}$, then algorithm already knows $\bar{e}$, which means that the query is not counted within the $\ceil{\log_2{r_d-1}} = \ceil{\log_2{r_d}}$ for finding $\bar{e}$.
	In total, this leads to $4 + \ceil{\log_2{r_d}} > 3 + \ceil{\log_2{r_d}}$ queries.
	
	\textbf{Case 2:} The algorithm queries $B_d$ first. Then, the adversary will return true, which is consistent with the dirty matroid. In order to avoid three queries in case that the adversary chooses $\cM=\cM_d$, the algorithm has to query $P$ next. The adversary will answer that $P$ is indeed independent and choose $E$ as the only basis of the clean matroid. Afterwards, the algorithm has to query also $E$ to prove that $E$ is indeed independent. This leads to a total of three queries. However, we have $n-r = 0$, $\eta_A = 1$ and $\eta_R = 0$, which implies  $n-r+\eta_A+\ceil{\log_2(r_d)} \cdot \eta_R+1 = 2 < 3$.
\end{proof}

\weightedRobust*

\begin{proof}
To prove this theorem, we consider \Cref{alg:weighted-robust}. It is not hard to see that the correctness of \Cref{alg:weighted-robust} follows from the correctness of \Cref{alg:fancy-binary-search-simpler2}.
We now prove the stated bound on the number of clean-oracle calls.

The algorithm uses clean-oracle calls only in Lines~5, 8, 9, and~11. We separately prove the error-dependent bound and the robustness bound.

\paragraph{Proof of the error-dependent bound.}
We start by bounding the number of queries in Lines~5, 9, and~11:
\begin{itemize}
	\item In Line~5, each element $e \in E \setminus B_d$ incurs a query and the total number of such queries is $n-r_d$. 
	\item In Line~11, the binary search is incurred by elements in $R$. 
	Thus, the total number of such queries is at most $|R| \ceil{\log_2 r_d}$. 
	\item 
	To bound the number of queries in Line~9, we observe that whenever $q$ reaches $k-1$, we execute one such query, unless $\ell = d_{\max}$, because then we know that $((B_d \setminus R) \cup A)_{\leq \ell} = (B_d \setminus R) \cup A$ and the independence follows because the condition in Line~10 evaluated false. 
	Therefore, it suffices to bound how often $q$ reaches $k-1$. 
	Note that $q$ can be decreased to $0$ only in Lines~8, 9, and 12.
	Lines~8 and~12 correspond to an element in $R$ and Line~9 sets the variable $\text{LS}$ to $\mathrm{false}$ which must have been set to $\mathrm{true}$ by Line~2 or Line~5 (in which case we added an element to $A$). 
	Hence, we conclude that the number of times $q$ reaches $k-1$, which upper bounds the number of queries executed in Line~9, is at most $|R|+|A|+1$. 
\end{itemize}
In total, the number of clean queries in Lines~5, 9, and~11 is at most
\begin{equation}\label{eq_t11_1}
n-r_d + |R| \ceil{\log_2 r_d} + |R| + |A| + 1.
\end{equation}

It remains to bound the number of queries executed in Line~8. To this end, we first show $q \leq k-2$ when the algorithm terminates.
This follows from the fact that at the end of the algorithm we have a feasible solution by \Cref{coro:nsafe}.
Assume for contradiction that $q \geq k-1$ at the end of the algorithm. 
Then at some previous iteration, we entered Line~9.
Consider the iteration with largest index in which we entered Line~9.
Note that the if-statement in Line~9 has to be false as otherwise $q$ was set to $0$ again.
But, in particular, this implies that the current solution $(B_d \setminus R) \cup A$ in this iteration is not independent and hence the algorithm must enter Line~8 or Line~11 at some later iteration in order to output an independent solution. This implies that $q$ is set to $0$ again, a contradiction. 

Observe that $q$ is increased by $1$ before each query executed in Line~8 and the number of times $q$ is increased is at most the total decrease of $q$ plus $k-2$ (the value of $q$ at the end of the algorithm).
As shown before, $q$ can only be decreased by $k-1$ in Line~9 which happens at most $|A|+1$ times, or decreased by at most $k\ceil{\log_2 r_d}$ in Lines~8 or~12, which happens $|R|$ times. 
To be more precise, Line~9 is executed at most $|A|$ (rather than $|A|+1$) times if $q>0$ at the end of algorithm. This follows from the fact that $q>0$ implies $\text{LS}=\mathrm{true}$ at the end of the algorithm, which means Line~9 is not executed after the last execution of Line~5.

To finally bound the number of queries in Line~8 we distinguish whether $q=0$ or $q>0$ holds at the end of the algorithm.
If $q>0$ at the end of the algorithm, then the number of queries executed in Line~8 is at most $|A| \cdot (k-1) + |R| \cdot k \ceil{\log_2 r_d} + k-2$, where the additive $k-2$ are caused by the final at most $k-2$ increases of $q$ at the end of the algorithm after the last reset of $q$.
If $q=0$, then there are no final queries after the last reset of $q$. However, as argued above, Line~9 is potentially executed $(|A|+1)$ times instead of only $|A|$ times. Thus, the number of queries in Line~8 in that case is at most $(|A|+1)\cdot (k-1)+|R| \cdot k \ceil{\log_2 r_d}$. In both cases, the number of queries in Line~8 is at most
\begin{equation}\label{eq_t11_2}
	 |A|\cdot (k-1) + |R| \cdot k \ceil{\log_2 r_d} + k - 1.
\end{equation}
We can conclude the target bound on the number of clean queries by summing up Equation~\eqref{eq_t11_1} and Equation~\eqref{eq_t11_2}, plugging in r$_d + |A| - |R| = r$ and using  \Cref{lemma:alg-min-mod-sets}, which also holds for \Cref{alg:weighted-robust}:
\begin{align*}
	 &n-r_d + |R| \ceil{\log_2 r_d} + |R| + |A| + 1 + |A|\cdot (k-1) + |R| \cdot k \ceil{\log_2 r_d} + k - 1\\
	 &\quad = n - r_d + |R| + |A| \cdot k  + |R| \cdot (k+1) \ceil{\log_2 r_d} + k\\
	&\quad= n - (r-|A|+|R|) + |R| + |A| \cdot k  + |R| \cdot (k+1) \ceil{\log_2 r_d} + k\\
	 &\quad= n - r_d  + |A| \cdot (k+1)  + |R| \cdot (k+1) \ceil{\log_2 r_d} + k\\
	 &\quad\le n - r_d  + \eta_A \cdot (k+1)  + \eta_R \cdot (k+1) \ceil{\log_2 r_d} + k.
\end{align*}

\paragraph{Proof of the robustness bound.}
Let $Q_A$ denote the set of queries of Line~5 that trigger the execution of $A \gets A+e_l$, i.e., $Q_A$ contains the queries that detect an addition error. Let $Q_N$ denote the queries of Line~5 that do not trigger the execution of $A \gets A+e_l$ and let $Q_R$ denote the set of all remaining queries.

First, observe that each query in $Q_N$ is incurred by a distinct element of $E\setminus(B_d\cup A)$. Thus, we have $|Q_N| = n-r_d-|A|$.

Next, we continue by bounding $|Q_A|$ and $|Q_R|$. To this end, we partition $Q_R$ into \emph{segments}  $T_i$. A segment $T_i$ contains the queries of $Q_R$ that occur after the $i$'th reset of variable $q$ but before reset $i+1$. We use \emph{reset} to refer to an execution of Line~2,~8,~9 or~12 that sets the variable $q$ to $0$. Since we count the execution of Line~2 as a reset, segment $T_1$ contains the queries of $Q_R$ that take place between the execution of Line~2 and the first execution of Line~8,~9 or~12 (if such an execution exists).

For a segment $T_i$, let $q_i$ denote the current value of variable $q$ at the final query of the segment.
We distinguish between \emph{long}, \emph{short} and \emph{tiny}  segments:
\begin{itemize}
	\item A segment $T_i$ is long if it contains queries executed in Line~11.
	\item A segment $T_i$ is tiny if it satisfies $q_i \le k-1$.
	\item A segment is $T_i$ is short if it is neither long nor tiny.
\end{itemize}

First, consider the long segments. Let $I_{\text{long}}$ denote the index set of the long segments and let $L_{\text{long}} = \sum_{i \in I_{\text{long}}} q_i$.
Each long segment $T_i$ contains $k\ceil{\log_2 r_d}$ queries of Line~8 since $q_i$ must increase to at least this value in order to trigger queries in Line~11 and $q$ is reset afterwards.
Additionally, the segment must contain a query of Line~9 as $q_i$ increases to a value larger than $k-1$. Finally, $T_i$ contains up-to $\ceil{\log_2 r_d}$ queries of Line~11.
Thus, a long segment $T_i$ contains at most $(k+1)\ceil{\log_2 r_d}+1$ queries.
The number of long segments is at most $\frac{L_{\text{long}}}{k \ceil{\log_2 r_d}}$ as each long segment $T_i$ has $q_i = k\ceil{\log_2 r_d}$.
By using that the total number of long segments is also at most $|R|$ (since each execution of Line~11 finds a distinct removal error), we get that the total number of queries over all long segments is
\[
\sum_{i \in I_{\text{long}}} |T_i| \le \frac{L_{\text{long}}}{k \ceil{\log_2 r_d}}((k+1)\ceil{\log_2 r_d}+1) = \left(1+\frac{1}{k} \right)L_{\text{long}}+ |I_{\text{long}}| \leq \left(1+\frac{1}{k}\right)L_{\text{long}}+ |R_{\text{long}}|,
\]
where $R_{\text{long}} \subseteq R$ denotes the set of removal errors that where added to $R$ in Line~12.

Next, consider the short segments. As before let $I_{\text{short}}$ denote the index set of the short segments and let $L_{\text{short}} = \sum_{i \in I_{\text{short}}} q_i$. A short segment $T_i$ contains exactly $q_i$ queries of Line~8. Furthermore, it must contain a query of Line~9 since it is not tiny and, thus, has $q_i \ge k-1$. This implies that $T_i$ contains $q_i+1 = (1+\frac{1}{q_i}) q_i$ queries. As $T_i$ is not tiny, we have $(1+\frac{1}{q_i}) q_i \le  (1+\frac{1}{k}) q_i$. We can conclude that the total number of queries over all short segments is at most
\[
\sum_{i \in I_{\text{short}}} |T_i| \le \left(1+\frac{1}{k}\right) \sum_{i \in I_{\text{short}}}  q_i = \left(1+\frac{1}{k}\right) \cdot L_{\text{short}}.
\]

Finally, consider the tiny segments. As before let $I_{\text{tiny}}$ denote the index set of the tiny segments and let $L_{\text{tiny}} = \sum_{i \in I_{\text{tiny}}} q_i$. Observe that each tiny segment $T_i$ either ends with the reset in Line~14 or it ends even before the reset because the algorithm terminates before $q_i$ reached value $k-1$. The latter case can happen at most once in the final segment.
Since each reset in Line~9 sets the LS flag to $\mathrm{false}$ and the flag is only set to $\mathrm{true}$ again in Line~5 when an addition error is detected, the number of tiny segments is at most $1 + |A|$.

Each tiny segment contains at most $q_i+1$ queries (up-to $q_i$ in Line~8 and up-to one in Line~9). 
Let $t$ denote the index of the final segment and let $\bm\bar{L}_{tiny} = \sum_{i \in I_{\text{tiny}}\setminus\{t\}} q_i$.
Without the final segment, we get 
\[
\sum_{i \in I_{\text{tiny}}\setminus\{t\}} |T_i| = \sum_{i \in I_{\text{tiny}}\setminus\{t\}} (q_i + 1) =  \bm\bar{L}_{\text{tiny}} + |I_{\text{tiny}}| - 1 = \bm\bar{L}_{\text{tiny}} + |A|.
\]

We can combine the bound for the queries of these tiny segments with a bound for $|Q_A|$. Since each query of $Q_A$ detects an addition error in Line~5, we have $|Q_A| = |A|$. Furthermore, we have $\bm\bar{L}_{\text{tiny}} + |A| = k \cdot |A|$ as there are $|A|$ tiny segments that are not the final one and each such segment $T_i$ has $q_i = k-1$. Putting it together we achieve the following combined bound:
\[
\sum_{i \in I_{\text{tiny}}\setminus\{t\}} |T_i| + |Q_A| = \bm\bar{L}_{\text{tiny}} + |A| + |A| \le \left(1+\frac{1}{k} \right) \cdot (\bm\bar{L}_{\text{tiny}} + |A|).
\]

It remains to consider the final segment $T_t$. If this segment does not query in Line~9, then we have $|T_t| = q_t$. 
In this case, we can combine all previous bounds and use that $L_{\text{long}}+L_{\text{short}}+\bm\bar{L}_{\text{tiny}} + q_t + |A| \le r_d+|A| -|R_{\text{long}}|$ to conclude that the total number of queries is at most:
\begin{align*}
	|Q_N| + |Q_R| + |Q_A| &= |Q_N| + \sum_{i \in I_{\text{long}}} |T_i| + \sum_{i \in I_{\text{short}}} |T_i| +  \sum_{i \in I_{\text{tiny}}\setminus\{t\}} |T_i| + |Q_A| + |T_t|\\
	&\le n-r_d - |A| + \left(1+\frac{1}{k} \right) \cdot (\bm\bar{L}_{\text{tiny}} + |A| +  L_{\text{short}} +  L_{\text{long}}) + |R_{\text{long}}| + q_t\\
	&\le n - r_d - |A| + \left(1+ \frac{1}{k} \right) \cdot (r_d + |A| - |R_{\text{long}}|) + |R_{\text{long}}|\\
	&\le \left(1+ \frac{1}{k} \right) \left(n - r_d - |A| + r_d + |A| - |R_{\text{long}}| + |R_\text{{long}}| \right) = \left(1+ \frac{1}{k} \right) n.
\end{align*}
Note that the last inequality holds because $|A| \leq n - r_d$.

If $T_t$ executes a query in Line~9, then we have $|T_t| = q_t + 1$. On the other hand, the query in Line~9 implies $\ell < d_{\max}$ for the current index $\ell$ when the query is executed and for $d_{\max} = \max_{e_i \in B_d} i$. This means that the variable $q$ is increased at most $r_d -|R_{\text{long}}| - 1$ times and, thus,  $L_{\text{long}}+L_{\text{short}}+\bm\bar{L}_{\text{tiny}} + q_t + |A| \le r_d+|A| -|R_{\text{long}}|-1$. Plugging these inequalities into the calculations above yields the same result:
\begin{align*}
		|Q_N| + |Q_R| + |Q_A|
	&=|Q_N| + \sum_{i \in I_{\text{long}}} |T_i| + \sum_{i \in I_{\text{short}}} |T_i| +  \sum_{i \in I_{\text{tiny}}\setminus\{t\}} |T_i| + |Q_A| + |T_t|\\
	&\le n-r_d - |A| + \left(1+\frac{1}{k} \right)  (\bm\bar{L}_{\text{tiny}} + |A| +  L_{\text{short}} +  L_{\text{long}}) + |R_{\text{long}}| + q_t + 1\\
	&\le n - r_d - |A| + \left(1+ \frac{1}{k}\right)  (r_d + |A| - |R_{\text{long}}|-1) + |R_{\text{long}}| + 1\\
	&\le \left(1+ \frac{1}{k} \right) \left(n - r_d - |A| + r_d + |A| - |R_{\text{long}}| - 1 + |R_\text{{long}}| + 1 \right) \\
   &= \left(1+ \frac{1}{k} \right) n.
\end{align*}

This concludes the proof of \Cref{thm:weighted-ub}.
\end{proof}

 \section{Proofs omitted from \Cref{sec:extensions}}
 
 \subsection{Discussion omitted from \Cref{sec:extensions:ranks}}

Another commonly used type of matroid oracle is the rank oracle: given any $S \subseteq E$, a rank oracle returns the cardinality of a maximum independent set contained in $S$, denoted by $r(S)$.
We show in this section that a rank oracle can be much more powerful than an independence oracle for our problem. 
First note that since $r(S) = |S|$ if and only if $S \in \cI$, %
our previous results for independence oracles immediately translate. %
Moreover, we can even reduce the number of oracle calls using a rank oracle.
We briefly discuss key ideas %
for the unweighted case below.
It would be interesting to see whether these ideas can also be used for the weighted case.

We start by %
adapting the simple algorithm from \Cref{sec:unweighted}: %
Assume w.l.o.g.\ $B_d \neq \emptyset$. If $r(B_d) \neq |B_d|$ ($B_d \notin \cI$), we
run %
the greedy algorithm. Otherwise, we check whether $r(E)=|B_d|$. If it holds, we 
can conclude $B_d \in \cB$ and are done.
(This is the main difference compared to the simple algorithm with an independence oracle.) If not,
we greedily try to add
each element $e \in E \setminus B_d$ %
using in total $n-|B_d| \leq n-1$ queries. 
Thus, in every case the algorithm does at most $n+1$ clean-oracle calls. Moreover, unlike
the simple algorithm with an independence oracle,
it only does 2 oracle calls if $B_d \in \cB$.
In particular, this shows that \Cref{lem:lb} does not hold for rank oracles anymore.

Using the same idea, we can improve the number of oracle calls whenever $B_d \in \cB$ for the error-dependent algorithm from \Cref{sec:unweighted-error-dep} to $2$. 
Furthermore, we can also improve its worst-case number of queries from $\Theta(n \log_2 n)$ to $n+1$. Recall that this
bad case happens when $\eta_R$ is large. %
However, now %
we can simply compute $r(B_d)$ and obtain $\eta_R = |B_d|-r(B_d)$. 
Depending on $\eta_R$, we then decide whether to remove elements via binary-search %
or immediately switch to the %
greedy algorithm. 
In particular, achieving this worst-case guarantee of $n+1$ \emph{does not} affect the error-dependent bound on the number of oracle calls.

Finally, we can improve the dependency on $\eta_A$.
Recall that in the error-dependent algorithm, we iteratively augment an independent set $B$ with $\eta_A$ many elements from $E \setminus B_d$ to a basis.
But with a rank oracle, we can find these elements %
faster via binary search: find the first prefix $(E \setminus B_d)_{\leq i}$ %
which satisfies $r(B \cup(E \setminus B_d)_{\leq i}) > r(B)$, and  
add $(E \setminus B_d)_{=i}$ to $B$.  
Doing this exhaustively incurs at most $\eta_A \ceil{ \log_2 (n-r_d) }$ clean-oracle calls.
Thus, whenever $\eta_A < \frac{n-r_d}{\ceil{\log_2{(n-r_d)}}}$, this strategy gives an improved error-dependent bound over considering elements in $E \setminus B_d$ linearly. Furthermore, 
this condition can be easily checked with the rank oracle.
Note that this %
contrasts \Cref{lem:add:lbcombined}. 

To conclude the discussion, we obtain the following proposition.
\thmrank*

 \subsection{Proofs omitted from \Cref{sec:extensions:matroid-intersection}}

 In the matroid intersection problem we are given two matroids $\mathcal{M}^1 = (E, \mathcal{I}^1)$ and $\mathcal{M}^2=(E, \mathcal{I}^2)$ on the same set of elements $E$ and we seek to find a maximum set of element $X \subseteq E$ such that $X \in \mathcal{I}^1 \cap \mathcal{I}^2$, i.e., it is independent in both matroids.
For $i \in \{1, 2\}$ we define $r_i$ to be the rank of matroid $\mathcal{I}^i$.
 
 \subsubsection{Matroid intersection via augmenting paths}

The textbook algorithm for finding such a maximum independent set is to iteratively increase the size of a solution by one and eventually reach a point in which no improvement is possible; in that case we also get a certificate: $U \subseteq E$ with~$\abs{X} \geq r_1(U) + r_2(E \setminus U)$.
Given some feasible solution $X \in \mathcal{I}^1 \cap \mathcal{I}^2$, in every iteration the algorithm executes the following steps.
\begin{enumerate}
    \item Construct the directed bipartite exchange graph~$D(X)$ for sets~$X$ and~$E \setminus X$, where for every~$x \in X$ and~$y \in E \setminus X$ there is an edge~$(x, y)$ if~$X - x + y \in \cI^1$ and there is an edge~$(y,x)$ if~$X - x + y \in \cI^2$. Compute sets~$Y_1 = \{y \in E \setminus X \mid X + y \in \cI^1 \}$ and~$Y_2 = \{y \in E \setminus X \mid X + y \in \cI^2 \}$.
    \item 
    If~$Y_1 = \emptyset$ terminate with certificate~$U=E$. If~$Y_2 = \emptyset$ terminate with certificate~$U=\emptyset$. Otherwise, compute a shortest path~$P$ between any vertex of~$Y_1$ and any vertex of~$Y_2$. 
    \item If no such path exists, terminate with the certificate~$U \subseteq E$ of elements for which there exists a directed path in $D(X)$ to some element of~$Y_2$. Otherwise, augment~$X$ along~$P$, that is,~$X \gets X \triangle P$, and continue with the next iteration.
\end{enumerate}

We call the path found in step 2 an \emph{augmenting path}.
The running time of this classic algorithm is $O(r^2 n)$~\cite{edmonds1970submodular}, where $n = |E|$ and $r = \min \{ r_1, r_2 \}$.
Recently, there has been a lot of significant progress concerning the running times for matroid intersection, culminating in the currently best running time of $O(nr^{3/4})$ for general matroids~\cite{Blikstad21} and time $O(n^{1+o(1)})$ for special cases, e.g., the intersection of two partition matroids~\cite{ChenKLPGS22}.

Here, we focus on improving the running-time of the simple textbook algorithm for matroid intersection, by (i) using dirty oracles calls in each of the augmentation steps or (ii) by using warm-starting ideas, i.e., by computing a feasible solution of a certain size dependent on the error using an optimal dirty solution.

Formally, additionally to the input for matroid intersection we are also given two dirty matroids $\mathcal{M}^1_d = (E, \mathcal{I}^1_d)$ and $\mathcal{M}^2_d=(E, \mathcal{I}^2_d)$.
Our goal is to improve the classic textbook algorithm from above by using dirty-oracle calls.
We attempt the following approach: Given a feasible solution $X$ (for the clean oracles), we do one iteration of the above algorithm using the dirty oracles.
If there is an augmenting path, which is also an augmenting path for the clean matroids, we augment our solution and go to the next iteration.
Otherwise, we found an augmenting path using dirty oracles, which is not an augmenting path for the clean matroids.
In this case we do a binary search to find the error, update our dirty matroids and start the algorithm again.
Finally, if there is no augmenting path in the dirty matroid, we need to augment the solution using clean-oracle calls, as we do not benefit from using the dirty oracle anymore.
To avoid this situation, throughout this section we assume that the dirty matroids are supersets of the clean matroids, i.e., $\mathcal{I}^1 \subseteq \mathcal{I}^1_d$ and $\mathcal{I}^2 \subseteq \mathcal{I}^2_d$.

Further, in step 2 of the augmenting path algorithm it is crucial that the path from $Y_1$ to $Y_2$ is a path without chords\footnote{A chord $e$ of some $s$-$t$ path $P$ is an edge such that $P+e$ admits an $s$-$t$ path $P'$ that is strictly shorter than $P$.}, as otherwise the computed solution may not be feasible. 
Note that in step 2 we compute a shortest path, which is always a path without chords.
Therefore, if we use the dirty oracles to compute a shortest path w.r.t.\ the dirty oracles, we need to verify in each step that the computed path has no chords for the clean matroids as well.
To avoid such a situation, we restrict to the case that the clean matroids are partition matroids, as for those matroids we do not need to find an augmenting path without chords, but just \emph{any} augmenting path.
We note that in general the intersection of two partition matroids can be reduced to finding a maximum $b$-matching.

Therefore, from now on we assume that the clean matroids are partition matroids and that the dirty matroids are supersets of the clean matroids.
Our algorithm works as follows.
We assume that we are given a feasible solution for the clean matroid and wish to increase its size by one if possible. 
Additionally, we are given two lists $\mathcal{F}_1$ and $\mathcal{F}_2$ of false dirty queries for $\mathcal{M}^1_d$ and $\mathcal{M}^2_d$, respectively, i.e., two list of sets $F \subseteq E$ for which we have already queried that $F$ is not independent in $\mathcal{M}^1$ or $\mathcal{M}^2$, respectively (but it was independent in the respective dirty matroid).
For each iteration, we use the following algorithm.

\begin{algorithm}[htb]
	\caption{Augmenting path via dirty oracles}
	\label{alg:matroid-intersection-augmenting-path}
	\DontPrintSemicolon
	\KwIn{A feasible solution $X \in \mathcal{I}^1 \cap \mathcal{I}^2$ and two lists of false dirty queries $\mathcal{F}_1, \mathcal{F}_2$}
	Compute augmenting path $P$ in $D_d^\mathcal{F}(X)$\label{alg:matroid-intersection-find-augm-path}\;
	\If {there is no such path $P$} {
		\KwRet{$X$ } (in this case $X$ is optimal)\;
	}
	\If {$P$ is augmenting path for clean matroids}{
		$X' \gets X \Delta P$, $\mathcal{F}'_1 \gets \mathcal{F}_1, \mathcal{F}_2' \gets \mathcal{F}_2$ \;
	}
	\If {$P$ is not an augmenting path for clean matroids} {
		find the first edge on $P$ such that $e \notin D(X)$ via \emph{binary search}. Add corresponding set to $\mathcal{F}_1$ or $\mathcal{F}_2$ and go to Line~1. 
	}
 \end{algorithm}

Here, $D_d^\mathcal{F}(X)$ is the bipartite exchange graph for the dirty matroids $\mathcal{M}^1_d$ and $\mathcal{M}^2_d$, in which we have excluded the list of false queries $\mathcal{F}$. 
More formally, for every~$x \in X$ and~$y \in E \setminus X$ there is an edge~$(x, y)$ in $D_d^\mathcal{F}(X)$  if~$X - x + y \in \cI_1$ and $X - x + y \notin \mathcal{F}_1$ and there is an edge~$(y,x)$ in $D_d^\mathcal{F}(X)$ if~$X - x + y \in \cI_2$ and $X - x + y \notin \mathcal{F}_2$.

Next, we analyze this augmenting path algorithm.
In order to define an error measure, let $\eta_1 = \{ F \in \mathcal{I}_d^1 \mid F \notin \mathcal{I}^1 \} $ and $\eta_2 = \{ F \in \mathcal{I}_d^2 \mid F \notin \mathcal{I}^2 \} $ be the number of different sets which are independent in the dirty matroid but not independent in the clean matroid.
We now show the following result, which is stronger than \Cref{thm:matroid-intersection:augm-path}.

\begin{restatable}{proposition}{thmmatroidintersectionaugmentation2}
\label{thm:matroid-intersection:augm-path2}
Let $\mathcal{M}^1$ and $\mathcal{M}^2$ be two partition matroids and $\mathcal{M}^1_d$ and $\mathcal{M}^2_d$ be two dirty matroids such that $\mathcal{M}^1_d$ and $\mathcal{M}^2_d$ are supersets of $\mathcal{M}^1$ and $\mathcal{M}^2$, respectively. 
Given a feasible solution $X \in \mathcal{I}^1 \cap \mathcal{I}^2$ of value $k$, there is an algorithm that either returns a solution of value $k+1$ or outputs that $X$ is maximum using at most $2 + (\eta_1 + \eta_2) \cdot (\ceil{\log_2 (n)} +2)$ clean-oracle calls.

Moreover, there is an algorithm that computes an optimum solution for matroid intersection using at most $(r+1) \cdot (2 + (\eta_1 + \eta_2) \cdot (\ceil{\log_2 (n)} +2))$ clean-oracle calls.
\end{restatable}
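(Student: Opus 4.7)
The plan is to first analyze the per-augmentation guarantee for the subroutine in \Cref{alg:matroid-intersection-augmenting-path}, and then derive the global bound by iterating starting from $X = \emptyset$.

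For correctness of the subroutine, the key observation is that the clean exchange graph $D(X)$ is a subgraph of the filtered dirty exchange graph $D_d^{\mathcal{F}}(X)$: since $\mathcal{I}^i \subseteq \mathcal{I}^i_d$ for $i \in \{1,2\}$ and every set in $\mathcal{F}_1 \cup \mathcal{F}_2$ is dirty-independent but clean-dependent, every edge present in $D(X)$ is also present in $D_d^{\mathcal{F}}(X)$. Therefore, the absence of an augmenting path in $D_d^{\mathcal{F}}(X)$ implies the absence of one in $D(X)$, so $X$ is optimal by the standard augmenting-path theorem for matroid intersection. If the subroutine returns $X' = X \triangle P$ in Line~5, feasibility is witnessed by the two clean queries in Line~4 testing $X' \in \mathcal{I}^i$; no chord-freeness argument is needed because both clean matroids are partition matroids, for which any augmenting path yields a feasible augmentation.

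For the per-call clean-oracle count, I charge queries as follows. Line~4 uses at most $2$ clean queries per iteration, one per clean matroid. Line~7, when reached, performs a binary search over the edges of $P$, each step being a single clean query on one exchange condition in the corresponding matroid, costing at most $\ceil{\log_2 n}$ queries; moreover, each execution of Line~7 adds one new set to $\mathcal{F}_1 \cup \mathcal{F}_2$. Since the total size of $\mathcal{F}_1 \cup \mathcal{F}_2$ is bounded by $\eta_1 + \eta_2$, Line~7 is reached at most $\eta_1 + \eta_2$ times per subroutine call. Together with one final iteration that returns without entering Line~7, this yields the per-call bound of $(\eta_1 + \eta_2)(\ceil{\log_2 n} + 2) + 2$ clean queries.

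For the global bound, starting from $X = \emptyset$ each successful subroutine call increases $|X|$ by one; since $|X| \leq r$, at most $r$ successful augmentations plus one final call certifying optimality occur, yielding at most $r+1$ subroutine calls. Multiplying the per-call bound gives the claimed $(r+1)(2 + (\eta_1 + \eta_2)(\ceil{\log_2 n}+2))$ clean queries. I expect the main subtlety to be justifying that the binary search in Line~7 uses only $\ceil{\log_2 n}$ and not $2\ceil{\log_2 n}$ clean queries: this relies on the fact that an augmenting path's edges alternate between the two matroids, so each edge's validity is determined by exactly one exchange condition, and hence one clean query per binary-search step suffices.
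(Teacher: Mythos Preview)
Your overall structure matches the paper's exactly: verify each dirty augmenting path with two clean queries, on failure binary-search for a bad edge and record it in $\mathcal{F}_1 \cup \mathcal{F}_2$, bound the number of such failures by $\eta_1+\eta_2$, and iterate at most $r+1$ times starting from $X=\emptyset$. The correctness argument (the clean exchange graph is a subgraph of the filtered dirty one; partition matroids need no chord-freeness) is also the same.

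The gap is in your binary search. You propose that each step is ``a single clean query on one exchange condition in the corresponding matroid'', justified by the alternating structure of the path. But the predicate ``edge $e_m$ lies in $D(X)$'' is not monotone in $m$: learning that $e_m$ is a valid clean exchange says nothing about $e_1,\ldots,e_{m-1}$, so you cannot halve the search interval on a positive answer. The paper's binary search instead queries whether the \emph{prefix} augmentation $X \triangle P'$, with $P' = (e_1,\ldots,e_m)$, is independent in the clean matroids; it is this prefix predicate, together with the partition-matroid structure, that lets the interval shrink in both directions. Your observation that edges alternate between the two matroids is correct, but it does not by itself turn single-edge exchange queries into a working binary search.
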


\begin{proof}
We first prove that the solution output by the algorithm is feasible.
If there is no augmenting path in  $D_d^\mathcal{F}(X)$, then there is no augmenting path in $D(X)$ since the dirty matroids are supersets of the clean matroids and in $\mathcal{F}_1$ and $\mathcal{F}_2$ we only save queries which are not independent in the clean matroid.
Once we find an augmenting path $P$ in $D_d^\mathcal{F}(X)$, we always verify it using two clean-oracle calls: We query if $X \Delta P$ is independent in $\mathcal{M}_1$ and in $\mathcal{M}_2$.
Hence, we only augment $X$ if the augmenting path $P$ is also an augmenting path in $D(X)$.
Since both clean matroids are partition matroids, we do not need to verify that $P$ is an augmenting path without chords and hence $X \Delta P$ is an independent set in both clean matroids.

It remains to prove that we use at most $2 + (\eta_1 + \eta_2) \cdot \ceil{\log_2 (n)} $ clean-oracle calls.
As described above, after finding an augmenting path $P$ in $D_d^\mathcal{F}(X)$, we verify it using two clean-oracle calls: We query if $X \Delta P$ is independent in $\mathcal{M}_1$ and in $\mathcal{M}_2$.
If it is independent, we are done.
This check needs 2 clean-oracle calls.
Otherwise, one of the queries tells us that $X \Delta P$ is not independent, i.e., one of the edges $e$ in $P$ corresponds to a resulting set $X_e$ such that $X_e$ is independent in, say $\mathcal{M}_d^1$, but not independent in $\mathcal{M}^1$. 
In particular, in this case Line~7 is executed.

We now show the following: 
Whenever Line~7 is executed, we add to $\mathcal{F}_1$ or $\mathcal{F}_2$ an additional set and we can find such a set using at most $\log_2(n)$ many clean-oracle calls.
Since $|\mathcal{F}_1| + |\mathcal{F}_2| \leq \eta_1 + \eta_2$ by definition, this shows the desired bound.

The exchange graph $D(X)$ is a subgraph of the exchange graph $D_d^\mathcal{F}(X)$ since the dirty matroids are supersets of the clean matroid and by definition of $\mathcal{F}_1$ and $\mathcal{F}_2$.
Hence, if $P = e_1, e_2, ..., e_p$ is an augmenting path in $D_d^\mathcal{F}(X)$ but not an augmenting path in $D(X)$, there must be an edge $e_i \in P$ which is not in $D(X)$.
Let $e_i$ be the first such edge.
In order to find $e_i$, we simply do a binary search:
In each iteration, we have two pointers $\ell$ and $r$, satisfying $0 \leq \ell \leq i$ and $i \leq r \leq p$, for which we know that the first $\ell$ edges of $P$ are also in $D(X)$ and among the edges $e_{\ell + 1}, ..., e_r$ there must be an edge which is not in $D(X)$.
Then, for $P' = e_1, e_2, ..., e_{\ell + \lfloor \frac{r - \ell}{2} \rfloor}$ we simply query if $X \Delta P'$ is independent in both clean matroids. 
If yes, then we set $\ell = \ell + \lfloor \frac{r - \ell}{2}\rfloor$. 
If no, we set $r = \ell + \lfloor \frac{r - \ell}{2}\rfloor$.
We repeat this until we found $e_i$.
This binary search algorithm finds the first edge $e \in P$ which is not in $D(X)$ using at most $\ceil{\log_2 (n)} $ many clean-oracle calls, where $n$ is the number of elements $|E|$.
The additional $+2$ in the bound appears since whenever we compute a new candidate augmenting path $P$, we first test if it is also an augmenting path in $D(X)$ using 2 clean-oracle calls. 
Therefore, we obtain the bound of at most $2 + (\eta_1 + \eta_2) \cdot (\ceil{\log_2 (n)} +2)$ clean-oracle calls.

We obtain the final bound of $(r+1) \cdot (2 + (\eta_1 + \eta_2) \cdot (\ceil{\log_2 (n)} +2))$ clean-oracle calls for computing an optimum solution for matroid intersection as follows:
In each iteration we increase the size of the current solution by one or prove that there is no improvement possible (and hence the solution is optimal). 
Therefore, there are at most $r+1$ iterations, which proves the bound.
\end{proof}

\subsubsection{Matroid intersection via warm-starting using a dirty solution}
\label{sec:warm-start}

In this subsection we consider the task of using a solution to the dirty matroid intersection instance as a \enquote{warm-start} for the clean matroid intersection instance.
In particular, we compute a maximal subset of the dirty solution using only few clean-oracle calls, which again will depend on the error of the dirty matroids.
Similar approaches for warm-starting using predictions have been used in~\cite{dinitz2021faster,ChenSVZ22, SakaueO22a} for problems like weighted bipartite matching or weighted matroid intersection.
We show here that this is also possible using dirty matroids.

Before we start with the algorithm, let us define our error measure for this subsection.
Here, we are just interested in the removal error: for a given maximum solution $S_d$ for the dirty matroid intersection, we compute the shortest distance to obtain a feasible solution to the clean matroid intersection problem. Then, the error is simply the maximum value of this shortest distance among all maximum solutions $S_d$ for the dirty matroid intersection.
More formally,  
let $s_d^* = \max_{S_d \in \mathcal{I}^1_d \cap \mathcal{I}^2_d} |S_d|$ and define  $\mathcal{S}_d^* = \{ S_d \in \mathcal{I}^1_d \cap \mathcal{I}^2_d \mid |S_d| = s_d^* \}$ to be the set of optimum solutions to the dirty matroid intersection problem.
We define $\eta_r = \max_{S_d \in \mathcal{S}_d^*} \min_{S_c \in \mathcal{I}^1 \cap \mathcal{I}^2} \{ | S_d \setminus S_c | : S_c \subseteq S_d \}$.

Our algorithm works as follows. 
We first compute a maximum solution $S_d \in \mathcal{I}^1_d \cap \mathcal{I}^2_d$ using some algorithm for matroid intersection.
Then, via binary search we greedily remove elements to first obtain a solution which is in $\mathcal{I}^1$, and then do the same to obtain a solution which is also in $\mathcal{I}^2$.
We will then show that this solution satisfies $|S_c'| \geq |S_d| - 2 \eta_r$ and that we use at most $2 + 2 \eta_r \cdot (1 + \ceil{\log_2 (n)} )$ many clean-oracle calls to compute $S_c'$.

\begin{algorithm}[htb]
	\caption{Obtaining a warm-start solution}
	\label{alg:matroid-intersection-warm-start}
	\DontPrintSemicolon
	\KwIn{Two dirty and clean matroids.}

	Compute maximum-cardinality solution $S_d \in \mathcal{I}^1_d \cap \mathcal{I}^2_d$ \;
	\For {$i = 1, 2$}{
		\If {$S_d \in \mathcal{I}^i$}{
			go to Line 2 with $i=2$ or return $S_c' = S_d$ if $i=2$ \;
		}
		\Else{
			sort elements in $S_d$ in arbitrary order $e_1, e_2, ..., e_{|S_d|}$ \;
			find the first element $e_j \in S_d$ such that $\{e_1, ..., e_{j-1} \} \in \mathcal{I}^i$ and $\{e_1, ..., e_{j} \} \notin \mathcal{I}^i$  via \emph{binary search}. Set $S_d = S_d - e_j$ and go to Line~3 \;
		}
	}
 \end{algorithm}

\thmmatroidintersectionwarm*

\begin{proof}
Feasibility is clear since we check in Line~3 for both matroids whether the solution is indeed feasible.
We first prove that we %
remove at most $2 \eta_r$ many elements from $S_d$ to obtain a feasible solution $S_c' \in \mathcal{I}^1 \cap \mathcal{I}^2$.
By the definition of the error $\eta_r$, there is some set $S_c \in \mathcal{I}^1 \cap \mathcal{I}^2$ with $S_c \subseteq S_d$ of size $|S_d| - \eta_r$.
Hence, there is some set $R_\eta^1$ such that $S_d \setminus R_\eta^1 \in \mathcal{I}^1$ and some set $R_\eta^2$ such that $S_d \setminus R_\eta^2 \in \mathcal{I}^2$, where $|R_\eta^1| \leq \eta_r$ and $|R_\eta^2| \leq \eta_r$.
Therefore, $S_c' \coloneqq S_d \setminus (R_\eta^1 \cup R_\eta^2) \in \mathcal{I}^1 \cap \mathcal{I}^2$ and $|S_c'| \geq |S_d| - 2\eta_r$.
Note that since $\mathcal{M}^1$ and $\mathcal{M}^2$ are matroids, the set $R_\eta^1$ can be found by greedily removing elements from $S_d$ such that $S_d \setminus R_\eta^1 \in \mathcal{I}^1$ and, afterwards, the set $R_\eta^2$ can be found by greedily removing elements from $S_d \setminus R_\eta^1$ such that $(S_d \setminus R_\eta^1) \setminus R_\eta^2 \in \mathcal{I}^2$.
Since \Cref{alg:matroid-intersection-warm-start} computes such a set, we conclude that Line~5 is executed at most $2\eta_r$ times.

Next, we show that whenever Line~5 is executed, we use at most $ \ceil{\log_2 (n)}$ many clean-oracle calls.
We fix some $i \in \{1, 2\}$.
Let $e_1, e_2, ..., e_{|S_d|}$ be any order of the elements.
To find the first element $e_j \in S_d$ such that $\{e_1, ..., e_{j-1} \} \in \mathcal{I}^i$ and $\{e_1, ..., e_{j} \} \notin \mathcal{I}^i$, the algorithm performs a binary search.
By folklore results, we use at most $ \ceil{\log_2 (n)}$ many clean-oracle calls to do so.
Furthermore, %
whenever Line~5 is executed, we have previously executed a clean query in Line~3.
Finally, we additionally need 2 clean-oracle calls, as Line~5 is executed for both matroids even if there is no error at all. %
Therefore, we use at most $2 + 2 \eta_r \cdot (1 + \ceil{\log_2 (n)} )$ many clean-oracle calls.
\end{proof}

\end{document}